\documentclass[a4paper]{article}

\usepackage{a4wide}

\usepackage{amsmath}
\usepackage{amssymb}
\usepackage{amsthm}
\usepackage{array}
\usepackage{graphicx}
\usepackage{calc}
\usepackage{wrapfig}
\usepackage{url}
\usepackage{tikz}
\usetikzlibrary{arrows,automata,shapes,shapes.multipart,decorations.markings,positioning}
\usepackage[all]{xy}
\newcommand{\outputs}{\mathsf{outputs}}

\newcommand{\net}{\mathsf{net}}
\newcommand{\word}{\mathsf{word}}
\newcommand{\sent}[1]{\ensuremath{\mathtt{#1}}}
\newcommand{\FF}{{\cal F}}
\newcommand{\GG}{{\cal G}}
\newcommand{\BB}{{\cal B}}
\newcommand\tuple[1]{\langle #1 \rangle}
\newcommand{\sset}[2]{\left\{~#1  \left|
      \begin{array}{l}#2\end{array}
    \right.     \right\}}

\renewcommand{\vec}{\bar}

\newcommand{\once}{\mathop{at\_most\_one}}

\newcommand{\networkVars}{\mathtt{V}}
\newcommand{\usedVars}{\mathtt{U}}

\newcounter{sncolumncounter}
\newcounter{snrowcounter}

\newcommand{\bee}{\textsf{BEE}}

\newtheorem{theorem}{Theorem}
\newtheorem{lemma}{Lemma}
\newtheorem{corollary}{Corollary}
\newtheorem{conjecture}{Conjecture}

\newtheorem{definition}{Definition}
\newtheorem{remark}{Remark}

\def \nodelabel#1{%
\setcounter{snrowcounter}{1}
 \foreach \i in {#1}{%
   \draw (\sncolwidth*\value{sncolumncounter},\value{snrowcounter}) node[anchor=south]{\i};
   \addtocounter{snrowcounter}{1}
 }
\addtocounter{snrowcounter}{-1}
 \addtocounter{sncolumncounter}{1}
}

\newcommand{\sncolwidth}{0.7}

\def \addcomparator#1#2{%
    \draw (\sncolwidth*\value{sncolumncounter},#1) node[circle,fill=black,minimum size=4pt,inner sep=0pt,outer sep=0pt]{}--(\sncolwidth*\value{sncolumncounter},#2) node[circle,fill=black,minimum size=4pt,inner sep=0pt,outer sep=0pt]{};
}

\newcommand{\addrevcomparator}[3][0.5]{
    \draw[>=stealth,postaction={decorate},decoration={markings,mark=at position #1 with {\arrow[scale=2]{<}}}] (\sncolwidth*\value{sncolumncounter},#2) node[circle,fill=black,minimum size=4pt,inner sep=0pt,outer sep=0pt]{}--(\sncolwidth*\value{sncolumncounter},#3) node[circle,fill=black,minimum size=4pt,inner sep=0pt,outer sep=0pt]{};
}

\def \addhalfcomparator#1#2{%
    \draw (\sncolwidth*\value{sncolumncounter},#1) node[circle,fill=black,minimum size=4pt,inner sep=0pt,outer sep=0pt]{}--(\sncolwidth*\value{sncolumncounter},#2);
}
 
\def \addlayer{%
  \addtocounter{sncolumncounter}{1}
}

\def \nextlayer{%
  \draw [dashed] (\sncolwidth*\value{sncolumncounter}+\sncolwidth,0.6)--(\sncolwidth*\value{sncolumncounter}+\sncolwidth,\value{snrowcounter}+0.6);
  \addtocounter{sncolumncounter}{2}
}

\newcommand{\addarrow}[5][0.75]{
  \draw[>=stealth,postaction={decorate},line width=1.5 pt,decoration={markings,mark=at position #1 with {\arrow[scale=1.2]{>}}}] (\sncolwidth*#2,#3)--(\sncolwidth*#4,#5);
}

\newenvironment{sortingnetwork}[2]
{
  \setcounter{sncolumncounter}{0}
  \setcounter{snrowcounter}{#1}
  \def \sn@fullsize{15}
  \begin{tikzpicture}[scale=#2*0.7]
}
{
  \foreach \i in {1, ..., \value{snrowcounter}}
  {
    \draw (-\sncolwidth,\i)--(\sncolwidth*\value{sncolumncounter}+\sncolwidth,\i);
  }
  \end{tikzpicture}
}
\makeatother

\begin{document}

\title{Optimal-Depth Sorting Networks\thanks{%
    Supported by the Israel Science  Foundation, grant 182/13
    and by the Danish Council for Independent Research, Natural Sciences.
    Computational resources
    provided by an IBM Shared University Award (Israel).}}
\author{Daniel Bundala\\
  \small Department of Computer Science\\
  \small University of Oxford
  \and Michael Codish\\
  \small Department of Computer Science\\
  \small Ben-Gurion University of the Negev
  \and Lu\'\i s Cruz-Filipe\\
  \small Dept.\ Mathematics and Computer Science\\
  \small University of Southern Denmark
  \and Peter Schneider-Kamp\\
  \small Dept.\ Mathematics and Computer Science\\
  \small University of Southern Denmark
  \and Jakub Z\'avodn\'y\\
  \small Department of Computer Science\\
  \small University of Oxford}

\date{}

\maketitle

\begin{abstract}
We solve a 40-year-old open problem on the depth optimality
of sorting networks.
In 1973, Donald E.~Knuth detailed, in Volume~3 of ``\emph{The Art of
  Computer Programming}'', sorting networks of the smallest depth known
at the time for $n \leq 16$ inputs, quoting optimality for $n \leq 8$. 
In 1989, Parberry proved the optimality of the networks with
$9 \leq n \leq 10$ inputs.
In this article, we present a general technique for obtaining such optimality
results, and use it to prove the optimality of the remaining open cases
of $11\leq n\leq 16$ inputs.
We show how to exploit symmetry to construct a small set of
two-layer networks on $n$ inputs such that if there is a sorting
network on $n$ inputs of a given depth, then there is one whose first
layers are in this set.
For each network in the resulting set, we construct a propositional
formula whose satisfiability is necessary for the existence of a
sorting network of a given depth. Using an off-the-shelf SAT solver we
show that the sorting networks listed by Knuth are optimal. For $n
\leq 10$ inputs, our algorithm is orders of magnitude faster than the prior
ones.
\end{abstract}

\section{Introduction}
\label{sec:intro}

General-purpose sorting algorithms are based on comparing and
exchanging pairs of inputs. If the order of these
comparisons is predetermined by the number of inputs to sort and does
not depend on their concrete values, then the algorithm is said to be
data oblivious.  Such algorithms are well-suited for e.g.~parallel
sorting or secure multi-party computations, unlike standard
sorting algorithms, such as QuickSort, MergeSort or HeapSort, where
the order of comparisons performed depends on the input data.
  
Sorting networks are a classical formal model for data-oblivious
algorithms~\cite{Knuth73},
where $n$ inputs are fed into networks of $n$
channels %, which are
connected pairwise by comparators.  Each
comparator takes the two inputs from its two channels, compares them,
and outputs them sorted back to the same two channels.
A set of consecutive
comparators can be viewed as a ``parallel layer'' if no two comparators %touch
act on
the same channel.  A comparator network is a sorting network if the
output on the $n$ channels is always the sorted sequence of the
inputs.

Ever since sorting networks were introduced, there has been a quest to
find optimal sorting networks: optimal size (minimal number of
comparators), as well as optimal depth (minimal number of layers)
networks.
In their celebrated result, Ajtai, Koml\'{o}s and Szemer\'{e}di%(AKS)
~\cite{AKS} give a construction for sorting networks with
$O(n\log n)$ comparators in $O(\log n)$ parallel levels.
These AKS sorting networks are a classical example of an algorithm
optimal in theory, but highly inefficient in practice.
Although they attain the theoretically optimal $O(n\log n)$ number of
comparisons and $O(\log n)$ depth, the AKS networks are infamous for
the large constants hidden in the big-$O$ notation. On the other hand,
already in~1968, Batcher~\cite{Batcher} gave a simple recursive construction
that, even though it creates networks of depth $O(\log^2 n)$, is
superior to AKS networks for all practical values of $n$.

It is of particular interest to construct optimal sorting networks
(both in size and in depth) for specific small numbers of inputs. Such
networks can be used as building blocks to construct more efficient
networks on larger numbers of inputs, for example by serving as base
cases in recursive constructions such as Batcher's odd-even
construction.  

Already in the fifties and sixties various
constructions appeared for small sorting networks on few inputs. In
the 1973 edition of \emph{``The Art of Computer Programming''}~\cite{Knuth73}
(vol.~3, Section~5.3.4), Knuth detailed the smallest sorting networks
known at the time with $n \leq 16$ inputs.

However, showing their optimality has proved to be extremely
challenging. For $n\leq 8$ inputs, optimality was established by
Knuth and Floyd~\cite{Knuth66} in 1973. No further progress had been
made on the problem until~1989, when Parberry~\cite{Parberry89} showed
that the networks given for $n=9$ and $n=10$ are also
optimal. Parberry obtained this result by implementing an exhaustive
search with pruning based on symmetries in the first two parallel
steps in the sorting networks, and executing the algorithm on a
Cray-2 supercomputer. Despite the great increase in available
computational power in the two and a half decades since, his algorithm
would still not be able to handle the case $n=11$ or bigger. More
recently, there were additional attempts~\cite{MorgensternS11} at
solving the $n=11$ case, but we are not aware of any successful one.

In this paper, some 40 years after the publication of the networks by
Knuth, we finally prove their optimality by settling the remaining open
cases of $11\leq n\leq 16$ inputs.
Our approach combines two methodologies: symmetry breaking and Boolean
satisfiability.

\begin{description}
\item[Symmetry Breaking]
We show how to construct a small set $R_n$ of two-layer networks on
$n$ channels such that: if there is a sorting network on $n$ channels
of a given depth, then there is one whose first two layers are in this
set.  We first show how each two-layer network can be represented by a
graph with isomorphic graphs corresponding to equivalent networks.  By
defining a notion of ``relative strength'' between networks that takes
into account their effects on the inputs, we further restrict the set
of two-layer networks.  We show how to characterize the strongest
networks using context-free grammars, which enables us to construct
the sets $R_n$ for up to $n=40$ inputs within two hours of
computation.  For example, $R_{11}$ consists of 28 networks,
enabling us to solve the optimal-depth problem for $n=11$ in terms of
only 28 independent cases as opposed to over one billion cases of all
two-layer networks on $11$ channels.  Similarly, we show that
$|R_{13}|=117$. 

\item[Boolean Satisfiability]
With the first two layers restricted to a small set, we construct a
family of propositional formulas whose satisfiability is necessary for
the existence of sorting networks of a given size. Using an
off-the-shelf SAT solver we show that all the constructed formulas are
unsatisfiable, and hence we conclude that for $n\leq 16$ inputs the
networks listed in% The Art of Programming vol.~3
~\cite{Knuth73} are
indeed optimal.  A similar construction, without restricting the first
two layers, is able to find optimal-depth sorting networks for $n\leq
10$ inputs and prove them optimal, % again using an off-the-shelf SAT solver,
thus providing independent confirmation of the previously known
results.
\end{description}

We obtained all our results using an off-the-shelf SAT solver running
under Linux on commodity hardware.  It is noteworthy that our
algorithm required a few seconds to prove the optimality of
networks with $n \leq 10$ inputs, whereas for $n=10$ the algorithm
described in~\cite{Parberry89} was estimated to take hundreds of hours
on a supercomputer, and the algorithm described
in~\cite{MorgensternS11} took more than three weeks on a desktop
computer.

This paper is an extended version of~\cite{DBLP:conf/lata/BundalaZ14}
and~\cite{CCS2014}. The first paper presents the theory and experiments for
calculating optimal sorting networks. In the current
paper we construct even smaller sets of ``non-isomorphic'' two-layer
networks using a much faster algorithm (the construction
in~\cite{DBLP:conf/lata/BundalaZ14} does not scale beyond $n=13$
inputs). This new algorithm is a culmination of the work presented in
the second paper~\cite{CCS2014}. However, that paper deals only with
computing the sets of ``relevant'' two-layer networks, and not with
computing the optimal sorting networks as we do in this paper.

\section{Preliminaries on sorting networks}
\label{sec:backgr}

An example of a comparator network on $4$ channels is shown in
Figure~\ref{fig:network-example}. The figure introduces the graphical
notation used throughout the paper to depict comparator
networks. Channels are indicated as horizontal lines (with channel $4$
at the bottom), comparators are indicated as vertical lines connecting
a pair of channels, and layers are separated by dashed lines.
The figure further shows how the inputs $\tuple{5,2,0,7}$ and
$\tuple{0,1,0,1}$ propagate from left to right through the network.

\begin{figure}[!ht]
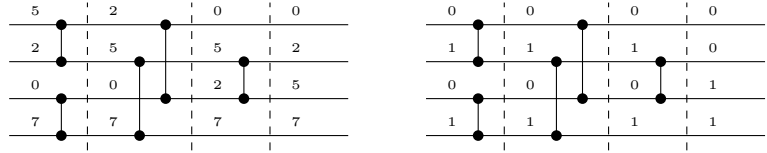

  \centering
\begin{tabular}{cc}
  \begin{sortingnetwork}{4}{0.7}
    \nodelabel{\tiny 7,\tiny 0,\tiny 2,\tiny 5}
    \addcomparator{1}{2}
    \addcomparator{3}{4}
    \nextlayer
    \nodelabel{\tiny 7,\tiny 0,\tiny 5,\tiny 2}
    \addcomparator{1}{3}
    \addlayer
    \addcomparator{2}{4}
    \nextlayer
    \nodelabel{\tiny 7,\tiny 2,\tiny 5,\tiny 0}
    \addcomparator{2}{3}
    \nextlayer
    \nodelabel{\tiny 7,\tiny 5,\tiny 2,\tiny 0}
  \end{sortingnetwork}
&
  \begin{sortingnetwork}{4}{0.7}
    \nodelabel{\tiny 1,\tiny 0,\tiny 1,\tiny 0}
    \addcomparator{1}{2}
    \addcomparator{3}{4}
    \nextlayer
    \nodelabel{\tiny 1,\tiny 0,\tiny 1,\tiny 0}
    \addcomparator{1}{3}
    \addlayer
    \addcomparator{2}{4}
    \nextlayer
    \nodelabel{\tiny 1,\tiny 0,\tiny 1,\tiny 0}
    \addcomparator{2}{3}
    \nextlayer
    \nodelabel{\tiny 1,\tiny 1,\tiny 0,\tiny 0}
  \end{sortingnetwork}
\end{tabular}

\caption{The comparator network
  $\{(1,2),(3,4)\};\{(1,3),(2,4)\};\{(2,3)\}$ with $4$ channels, $5$
  comparators, and depth~$3$. On the left, the input~$\tuple{5,2,0,7}$
  propagates through the network to give the output~$\tuple{0,2,5,7}$;
  on the right, the input~$\tuple{0,1,0,1}$
  propagates through the network to give the output~$\tuple{0,0,1,1}$.}
  \label{fig:network-example}
\end{figure}

Formally, a \emph{comparator network} $C$ with $n$ \emph{channels} and
\emph{depth} $d$ is a sequence $C = L_1;\ldots;L_d$ of $d$
\emph{layers}. Each layer $L_k$ is a set of comparators $(i,j)$,
joining the channels $i$ and $j$, with $1\leq i<j\leq n$.  In every layer $L_k$,
each channel $i$ is used by at most one comparator,
i.e.~\[\left|\sset{j}{(i,j) \in L_k \vee (j,i) \in L_k}\right| \leq 1\]
for each $i$.
The \emph{size} of $C$ is the total number of comparators in all its
layers. Given comparator networks $C_1$ and $C_2$, let $C_1;C_2$ denote
the comparator network obtained by concatenating the layers of $C_1$
and $C_2$. If $C_1$ has $m$ layers, then it is an \emph{$m$-layer prefix}
of $C_1;C_2$.

An input to $C$ is a sequence $\vec x = \tuple{x_1,\ldots,x_n}$ of
numbers. The input is propagated through the network, each comparator
$(i,j)$ outputting the smaller of its inputs on channel~$i$ and the
larger of the inputs on channel~$j$.

Denote by $C(\vec x,k,i)$ the value of channel $1 \leq i \leq n$ at
layer $0\leq k\leq d$ given input $\vec x$. Then we take $C(\vec
x,0,i) = x_i$ (input), and for $0\leq k< d$ we define:
\[
C(\vec x,k+1,i) = \begin{cases}
\min(C(\vec x,k,i), C(\vec x,k,j)) & \textrm{if $(i,j) \in L_{k+1}$} \\
\max(C(\vec x,k,i), C(\vec x,k,j)) & \textrm{if $(j,i) \in L_{k+1}$} \\
C(\vec x,k,i) & \textrm{otherwise.} \\
\end{cases}
\]
The \emph{output} of $C$ on $\vec x$ is the sequence $C(\vec x) =
\tuple{C(\vec x,d,1), C(\vec x,d,2), \ldots, C(\vec x,d,n)}$. 
A comparator network is called a \emph{sorting network} if the output
$C(\vec x)$ is sorted (ascendingly) for all input sequences $\vec
x$. The comparator network depicted in
Figure~\ref{fig:network-example} is a sorting network. The figure
further indicates the values $C(\vec x, k,i)$ on each channel $i$
after each layer $k$ for $\vec x=\tuple{5,2,0,7}$ (on the left) and
for $\vec x= \tuple{0,0,1,1}$ (on the right).

Given sufficient parallel computational power (e.g., assuming the
networks are directly implemented in hardware), independent
comparators can be evaluated in parallel, and hence the depth of a
sorting network corresponds to the number of parallel steps needed to
sort $n$ inputs. Thus, given number of channels $n$, we focus on
finding sorting networks of minimal depth.\footnote{In general, it is possible
  to construct a sorting network with fewer comparators, albeit larger
  depth, than the one that achieves $T(n)$. We refer the readers
  interested in the minimum number of comparators needed to sort $n$
  channels to~\cite{Knuth73} and~\cite{CCFS2014} where the optimal
  values are presented for $n \leq 8$ and $n = 9, 10$, respectively.}
We denote the smallest
depth of a sorting network on $n$ channels by~$T(n)$.

Prior to this paper, the precise values of $T(n)$ were known only for
$n\leq 10$: the values for $n\leq 8$ are given in~\cite{Knuth73}, and
those for $n=9,10$ are reported by Parberry in~\cite{Parberry91}.
These, and the best previously known bounds for $n\leq 16$, are summarized in
Table~\ref{tab:previousoptimal}.

\begin{table}[!ht]
  \centering
    \begin{tabular}{|r|c|c|c|c|c|c|c|c|}
    \hline 
    $n$ & 1 & 2 & 3,4 & 5,6 & 7,8 & 9,10 & 11,12 & 13,14,15,16 \\
    \hline
    $T(n)\leq$ & 0 & 1 & 3 & 5 & 6 & 7 & 8 & 9 \\
    \hline
    $T(n)\geq$ & 0 & 1 & 3 & 5 & 6 & 7 & 7 & 7\\
    \hline
    \end{tabular}
  \caption{The best previously known upper and lower bounds for $T(n)$.}
  \label{tab:previousoptimal}
\end{table}

In this paper, we prove optimality
of the upper bounds of $T(n)$ for $11\leq n\leq 16$. For example, we
show that $T(11) \geq 8$. To prove such a result, we have to
establish that none of the $7$-layer, $11$-channel comparator networks
is a sorting network.

Each comparator joins two distinct channels, and hence one
can view each layer of an $n$-channel comparator network as a matching
on $n$ elements~\cite{Parberry91}. It turns out that there are
$35{,}696$ matchings on $11$ elements.\footnote{Sequence
  \texttt{A000085} of the the On-Line Encyclopedia of Integer
  Sequences, published electronically at \url{http://oeis.org}.}
So,
to establish the lower bound $T(11) \geq 8$, we have to show that none
of the $35{,}696^7 \geq 10^{31}$ comparator networks on $11$~channels
with $7$~layers is a sorting network. Similarly, establishing that $T(13)
\geq 9$ requires showing that none of the $568{,}504^8 \geq 10^{46}$
comparator networks on $13$~channels with $8$~layers
is a sorting network.  These
numbers immediately make any form of exhaustive search infeasible.

In the first part of this paper, we show how to reduce the size of
this search space. Then, in Section~\ref{sec:encode}, we reduce the
existence of a sorting network to the problem of propositional
satisfiability.

In full, our algorithm to determine whether a sorting network of a
given depth exists consists of four phases.  In the first phase, we
extend the approach introduced by Parberry in~\cite{Parberry91}, and
partition of the set of two-layer comparator networks into
equivalence classes. We then select representatives of some of these
equivalence classes so that, if there is a sorting network of the given
size, then there is one with the first two layers equal to some
chosen representative.
In the next phase, we reduce the existence of a sorting network
beginning with one of the calculated representatives to satisfiability of a
corresponding propositional formula. Finally, we determine the
satisfiability of the obtained formulas using a SAT solver.

On the face of it, to determine whether a given $n$-channel
candidate comparator network is a sorting one it seems necessary to
try all possible permutations of $\{1,\ldots,n\}$ as inputs. The
following classical result states that it suffices to consider only
Boolean inputs, i.e.~sequences of $0$ and $1$s.
This reduces the size of the set of inputs from $n!$
permutations to $2^n$ Boolean inputs.

\begin{lemma}[\rm The zero-one principle~\cite{Knuth73}]
\label{lemma:zeroone}
A comparator network $C$ is a sorting network if and only if $C$ sorts
all Boolean inputs.
\end{lemma}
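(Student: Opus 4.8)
The plan is to prove the two implications separately. The left-to-right direction is immediate: if $C$ sorts every input sequence, then in particular it sorts every Boolean one. So all the content lies in showing that sorting all Boolean inputs is already enough.

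First I would establish the key structural fact that comparator networks \emph{commute with monotone maps}: for any monotone $f\colon\mathbb{R}\to\mathbb{R}$ (i.e.\ $a\le b$ implies $f(a)\le f(b)$) and any input $\vec x=\tuple{x_1,\ldots,x_n}$, writing $f(\vec x)=\tuple{f(x_1),\ldots,f(x_n)}$, one has $C(f(\vec x),k,i)=f(C(\vec x,k,i))$ for every layer $0\le k\le d$ and channel $i$, and hence $C(f(\vec x))=f(C(\vec x))$. This is proved by induction on $k$: the base case $k=0$ is just the definition of the input, and the inductive step reduces, channel by channel, to the two identities $\min(f(a),f(b))=f(\min(a,b))$ and $\max(f(a),f(b))=f(\max(a,b))$, which hold precisely because a monotone $f$ preserves the order of $a$ and $b$ and therefore commutes with selecting the smaller or the larger of the two; the ``otherwise'' case of the recurrence for $C(\vec x,k+1,i)$ is trivial.

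Next I would argue by contraposition. Suppose $C$ is not a sorting network, so there is an input $\vec x$ for which $C(\vec x)$ is not ascending, i.e.\ there exist positions $p<q$ with $C(\vec x,d,p)>C(\vec x,d,q)$. Put $v=C(\vec x,d,p)$ and define the threshold function $f(a)=0$ if $a<v$ and $f(a)=1$ if $a\ge v$; this $f$ is monotone, and $f(\vec x)$ is a Boolean input. By the commutation fact, $C(f(\vec x))=f(C(\vec x))$, so channel $p$ of this output carries $f(v)=1$ while channel $q$ carries $f(C(\vec x,d,q))=0$, since $C(\vec x,d,q)<v$. Thus $C(f(\vec x))$ has a $1$ strictly before a $0$ and is not sorted, so $C$ fails on a Boolean input. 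This proves the contrapositive and hence the lemma.

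The argument has no genuine obstacle; the only point that requires care is the commutation lemma, where the task is to verify that a \emph{single} comparator — not merely the network as a whole — commutes with every monotone map, the global statement then following by the straightforward layer-by-layer induction. It is also worth noting that $f$ need only be monotone, not strictly monotone: this is exactly what permits collapsing the output to two values and thereby extracting a Boolean counterexample from an arbitrary one.
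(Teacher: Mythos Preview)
Your argument is correct and is precisely the classical proof of the zero-one principle: establish that comparator networks commute with monotone pointwise maps, then use a threshold function to extract a Boolean counterexample from any unsorted output. The paper itself does not prove this lemma at all; it simply states it as a known result and cites Knuth~\cite{Knuth73}, so there is no paper proof to compare against. Your write-up would serve well as the missing justification.
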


In the remainder of the paper, we consider only comparator networks with
Boolean inputs.
We will write $\BB^n=\{0,1\}^n$ to denote the set of Boolean inputs, and,
given a comparator network $C$, we define
$\outputs(C)=\sset{C(\vec x)}{\vec x\in\BB^n}$. Hence, $C$ is
a sorting network if and only if all elements of $\outputs(C)$ are
sorted (in ascending order).

\section{Equivalence of comparator networks}

A first step in reducing the search space of all comparator networks
can already be found in the work of Parberry~\cite{Parberry91}.  A
layer on $n$ channels is called \emph{maximal} if it contains
$\left\lfloor\frac{n}{2}\right\rfloor$ comparators, i.e., no further
comparators can be added to the layer.
\begin{lemma}[Parberry~\cite{Parberry91}]
  \label{lem:parberry}
  Let $L$ be any maximal layer on $n$~channels.  If there is a sorting
  network on $n$~channels with depth~$d$, then there is one whose
  first layer is~$L$.
\end{lemma}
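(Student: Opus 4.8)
The plan is to take an arbitrary depth-$d$ sorting network $C=L_1;\dots;L_d$ and transform it, without increasing its depth, into a sorting network whose first layer is exactly the prescribed maximal layer $L$. I would do this in two steps: first enlarge the first layer until it is maximal, and then permute channels so that it becomes exactly $L$.

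\textbf{Step 1: making the first layer maximal.} I would first check that if $C=L_1;C_2$ is a sorting network and $i,j$ are channels not used by $L_1$, then $(L_1\cup\{(i,j)\});C_2$ is again a sorting network. By the zero--one principle (Lemma~\ref{lemma:zeroone}) it suffices to consider a Boolean input $\vec x$. If $x_i\le x_j$ the added comparator is inert and the computation is unchanged. If $x_i>x_j$, then since $L_1$ touches neither channel $i$ nor channel $j$, the state after the enlarged first layer on input $\vec x$ is exactly the state after $L_1$ on the input $\vec x$ with its $i$-th and $j$-th coordinates swapped; hence the overall output equals $C$ evaluated on that swapped input, which is sorted. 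Iterating, I may extend $L_1$ to a maximal layer, so from here on $L_1$ is assumed maximal.

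\textbf{Step 2: permuting the first layer to $L$.} Since $L_1$ and $L$ are both maximal, i.e.\ matchings with $\lfloor n/2\rfloor$ edges, there is a permutation $\pi$ of $\{1,\dots,n\}$ taking the edge set of $L_1$ onto that of $L$; choosing within each edge which endpoint maps where, I may assume that every comparator $(a,b)\in L_1$ with $a<b$ is sent to the comparator $(\pi(a),\pi(b))\in L$ with $\pi(a)<\pi(b)$. I then build a network $C'$ together with permutations $\pi=\pi_0,\pi_1,\dots,\pi_d$, processing the layers of $C$ left to right and maintaining the invariant: for every input $\vec x$, the state of $C'$ after $k$ layers equals the state of $C$ after $k$ layers on input $\pi_0^{-1}(\vec x)$, with channels renamed through $\pi_k$. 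For a comparator $(i,j)$ of layer $k$ of $C$ with $i<j$: if $\pi_{k-1}(i)<\pi_{k-1}(j)$, put the comparator $(\pi_{k-1}(i),\pi_{k-1}(j))$ into layer $k$ of $C'$ and keep $\pi_k=\pi_{k-1}$ on those channels; otherwise put the comparator $(\pi_{k-1}(j),\pi_{k-1}(i))$ and let $\pi_k$ be $\pi_{k-1}$ composed with the transposition exchanging $\pi_{k-1}(i)$ and $\pi_{k-1}(j)$. A direct calculation confirms the invariant, and by the orientation chosen for $\pi_0$ the first layer of $C'$ is exactly $L$; note also that $C'$ has depth $d$ and uses only comparators $(p,q)$ with $p<q$.

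Finally, I would show that $C'$ sorts. Being a network of this kind, $C'$ leaves an already ascending input unchanged; applying the invariant with input $(1,2,\dots,n)$ and using that $C$ sorts the permuted input $\pi_0^{-1}(1,2,\dots,n)$ back to $(1,2,\dots,n)$ forces $\pi_d=\mathrm{id}$. Then the invariant at $k=d$ says the output of $C'$ on any $\vec x$ equals $C(\pi_0^{-1}(\vec x))$, which is the sorted rearrangement of $\vec x$; so $C'$ is a sorting network of depth $d$ whose first layer is $L$. The crux is exactly this last point: a priori the construction only yields a network that sorts every input into the fixed order $\pi_d$, and it is the insistence on keeping all comparators oriented low-to-high — so that $C'$ fixes sorted inputs — that pins $\pi_d$ to the identity. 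Step~1 and the bookkeeping in Step~2 are routine by comparison; this ``untangling'' of a channel relabelling into comparator reorientations is the one genuinely delicate ingredient.
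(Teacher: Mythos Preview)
Your proof is correct. The paper itself does not prove this lemma---it simply cites Parberry~\cite{Parberry91}---so there is no ``paper's own proof'' to compare against directly; however, your argument is the standard one. Step~1 is the usual observation that adding a comparator on two unused channels cannot hurt, and Step~2 is precisely the untangling construction the paper alludes to (citing Exercise~5.3.4.16 of~\cite{Knuth73}) when it says a generalized sorting network can always be untangled into a standard one of the same depth: you apply the permutation $\pi$ taking $L_1$ to $L$, obtaining a generalized network with first layer $L$, and then untangle the remaining layers while keeping the first layer intact. Your explicit identification of the key point---that orienting every comparator low-to-high forces $\pi_d=\mathrm{id}$ via the fixed-point argument on the sorted input---is exactly the crux of the untangling lemma, and you have it right.
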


This lemma implies that, when searching for an optimal-depth sorting network,
the first layer can be fixed to any maximal first layer,
effectively reducing the problem by one layer. In this paper we
consider the following choice of the first layer of $n$-channel
sorting networks:
\begin{align*}
  F_n &= \sset{(2i-1,2i)}{1\leq i\leq 
            \left\lfloor\frac{n}{2}\right\rfloor}
\end{align*}
See for example the network of Figure~\ref{fig:network-example},
whose first layer is~$F_4$.

Assuming the first layer has been fixed to some maximal layer,
Parberry also states~\cite{Parberry91} that one need not consider
second layers which are identical modulo permutations of channels.
Let $\pi$ be a permutation on $\{1,\ldots,n\}$.  For a comparator
$(i,j)$ we define $\pi((i,j)) = (\pi(i),\pi(j))$. Then for a layer
$L$, we define \[\pi(L)=\sset{(\pi(i),\pi(j))}{(i,j)\in L}.\] If $i<j$
for each $(i,j)\in \pi(L)$, then $\pi(L)$ is also a layer, otherwise
we say that $\pi(L)$ is a \emph{generalized layer}.  The definition
naturally extends to a network $C=L_1;\cdots;L_k$ by applying the
permutation on each layer independently: $\pi(C) =
\pi(L_1);\cdots;\pi(L_k)$, yielding a \emph{generalized comparator
  network}.\label{label:permutation}
It is well known (see e.g.~Exercise~5.3.4.16 of~\cite{Knuth73})
that a generalized sorting network can always be ``untangled'' into
a standard sorting network of the same dimensions (see Figure~\ref{fig:equiv-nets} for an example). Furthermore, this
operation preserves the ``standard prefix'', i.e., the longest
prefix of the network that does not have generalized layers. Formally, if $E$ is a comparator network and $F$ is a generalized comparator network such that $E;F$ is a generalized sorting network, then there is a comparator network $F'$ of the same depth as $F$ so that $E;F'$ is a sorting network. 

\begin{figure}[hb]
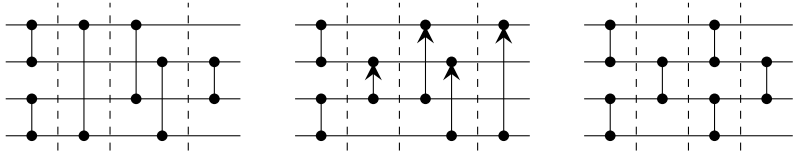

  \centering
  \begin{sortingnetwork}{4}{0.7}
    \addcomparator12
    \addcomparator34
    \nextlayer
    \addcomparator14
    \nextlayer
    \addcomparator24
    \addlayer
    \addcomparator13
    \nextlayer
    \addcomparator23
  \end{sortingnetwork}
  \begin{sortingnetwork}{4}{0.7}
    \addcomparator12
    \addcomparator34
    \nextlayer
    \addrevcomparator32
    \nextlayer
    \addrevcomparator[0.25]42
    \addlayer
    \addrevcomparator[0.25]31
    \nextlayer
    \addrevcomparator[0.17]41
  \end{sortingnetwork}
  \begin{sortingnetwork}{4}{0.7}
    \addcomparator12
    \addcomparator34
    \nextlayer
    \addcomparator23
    \nextlayer
    \addcomparator12
    \addcomparator34
    \nextlayer
    \addcomparator23
  \end{sortingnetwork}
  \caption{The $4$-channel networks to the left and to the right are
    equivalent via the permutation $(1\,3)(2\,4)$. The middle network
    is the generalized comparator network obtained by applying the
    permutation to the left comparator network. The right network can
    then be obtained by untangling the middle network.}
  \label{fig:equiv-nets}
\end{figure}

Taking into account that the first layer is fixed
(Lemma~\ref{lem:parberry}), Parberry~\cite{Parberry91}
considered permutations that leave the first layer intact.

\begin{lemma}[\rm Theorem~5.4 of~\cite{Parberry91}]
\label{lem:parberryLevel2}
Let $L_1$ and $L_2$ be layers on $n$ channels.
Let $\pi$ be a
permutation such that $L_1$ is maximal, $\pi(L_1)=L_1$,
and $\pi(L_2)$ is a layer.
Then there is a depth-$d$ sorting network of the
form $L_1;L_2;C$ if and only if there is one of the form
$L_1;\pi(L_2);C'$.
\end{lemma}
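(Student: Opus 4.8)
\emph{Proof plan.}
The plan is to prove only the forward implication --- if $L_1;L_2;C$ is a depth-$d$ sorting network then so is some network of the form $L_1;\pi(L_2);C'$ --- and to obtain the converse from it by symmetry. The hypotheses are stable under $\pi\mapsto\pi^{-1}$: $\pi(L_1)=L_1$ gives $\pi^{-1}(L_1)=L_1$, $L_1$ is still maximal, and $\pi^{-1}(\pi(L_2))=L_2$ is a layer; so applying the forward implication with $\pi^{-1}$ and $\pi(L_2)$ in place of $\pi$ and $L_2$ turns a depth-$d$ sorting network $L_1;\pi(L_2);C'$ back into one of the form $L_1;L_2;C''$.

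For the forward implication, I would start from a depth-$d$ sorting network $N=L_1;L_2;C$ and relabel its channels by $\pi$. Since $\pi(L_1)=L_1$, this gives the generalized comparator network $\pi(N)=L_1;\pi(L_2);\pi(C)$; because $\pi(L_2)$ is a layer, only $\pi(C)$ can contain generalized layers, so $L_1;\pi(L_2)$ is a standard prefix of $\pi(N)$. Next I would record what $\pi(N)$ computes: relabeling commutes with the propagation of an input, so, writing $P_\sigma$ for the operator that moves the entry in position $i$ of a length-$n$ sequence to position $\sigma(i)$, a routine induction on layers gives $\pi(N)(\vec x)=P_\pi(N(P_{\pi^{-1}}(\vec x)))$ for every input $\vec x$. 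As $N$ sorts and sorting is insensitive to the order of the input, $N(P_{\pi^{-1}}(\vec x))$ is the sorted rearrangement of $\vec x$, so $\pi(N)$ sorts every input but delivers the output permuted by the single fixed permutation $P_\pi$. Thus $\pi(N)$ is a relabeling of a sorting network, and the untangling recalled before the lemma (cf.\ Figure~\ref{fig:equiv-nets}) applies to it. Applying that property with $E=L_1;\pi(L_2)$ and $F=\pi(C)$ then produces a comparator network $C'$ of the same depth as $\pi(C)$, namely $d-2$, such that $L_1;\pi(L_2);C'$ is a standard sorting network of depth $d$, as required.

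The step I expect to need the most care is the last one: checking that untangling $\pi(N)$ yields an honest sorting network and not merely a standard network whose output is still scrambled by a permutation. This is Knuth's untangling exercise~\cite{Knuth73}, but it is worth seeing why it works here. Each elementary untangling step takes a generalized comparator $(i,j)$ with $i>j$, replaces it by the standard comparator $(j,i)$, and swaps channels $i$ and $j$ in every later layer; a short check shows this changes the map $f$ computed by the network into $P_{(i\,j)}\circ f$, while leaving the standard prefix untouched. After finitely many such steps the network is standard and computes $\vec x\mapsto P_{\rho\pi}(\text{sorted rearrangement of }\vec x)$ for some permutation $\rho$. But a standard comparator leaves an already-ordered pair of values in place, so a standard comparator network maps every sorted $0/1$ sequence to itself; evaluating on the inputs $0^{n-k}1^k$ for $k=0,\dots,n$ then forces $P_{\rho\pi}$ to fix every sorted $0/1$ sequence, whence $\rho\pi=\mathrm{id}$. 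So the extraneous permutation cancels and the untangled network sorts --- and this is the one place where being standard, rather than merely generalized, is used in an essential way.
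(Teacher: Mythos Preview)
Your proof is correct and follows exactly the route the paper sets up: the paper does not actually prove this lemma (it is quoted as Theorem~5.4 of~\cite{Parberry91}), but immediately before stating it the paper records the untangling fact---a generalized sorting network with a standard prefix $E$ can be untangled to a standard sorting network keeping $E$ fixed---which is precisely the tool you invoke with $E=L_1;\pi(L_2)$. Your symmetry reduction for the converse and your explicit verification that the untangled network genuinely sorts (rather than outputting $P_{\rho\pi}$ of the sorted sequence) are both sound; in fact this last point makes your argument more careful than the paper's own informal usage, since the paper elsewhere (e.g.\ in the proof of Lemma~\ref{lemma:symsubset}) calls $\pi(C)$ a ``generalized sorting network'' even though its output is only sorted up to the fixed permutation~$P_\pi$.
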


The proof of depth optimality for sorting networks with~$9$ channels
described in~\cite{Parberry91} is based on the application of
Lemma~\ref{lem:parberryLevel2} together with a brute force algorithm
that first partitions the set of two-layer networks with a fixed
maximal first layer into equivalence classes modulo permutations
that fix the first layer. The ``small'' number of equivalence classes
for $n\leq 10$ channels is computed in this way and reported in~\cite{Parberry91}.
However, when partitioning networks into equivalence classes using a
brute-force approach, one must consider the rapidly increasing number
of permutations, and this approach does not scale as the number of
channels grows. Furthermore, even if these equivalence classes were
given, the search algorithm described in \cite{Parberry91} does not
scale for larger numbers of channels.

The main theme of the first half of this paper is a better computation
and exploitation of symmetries in the first two levels of comparator
networks. Using the terminology of the definition below, we aim to
find as small as possible \emph{complete sets of filters}. For
example, for $n=16$ we reduce the number of second layers that must be
considered from $46{,}206{,}736$ to only $211$.

\begin{definition}
\label{def:complete}
A set $\FF$ of comparator networks on $n$ channels is a
complete set of filters for the optimal-depth sorting network problem
if  there exists an optimal-depth sorting network
on $n$ channels  of the form $C;C'$ for
some $C\in\FF$.
\end{definition}

We now introduce a notion of equivalence of comparator networks that
is stronger than the one considered by Parberry~\cite{Parberry91}.
Let $C$ be a comparator network on $n$ channels.  The graph
representation of $C$ is a directed and labeled graph $\GG(C)=(V,E)$,
where each node in $V$ corresponds to a comparator in $C$ and
$E\subseteq V\times \{\sent{1},\sent{2}\}\times V$.  Let $c(v)$ denote
the comparator corresponding to a node $v$.
Then $(u,\sent 1,v)\in E$ if the minimum output of $c(u)$ is an input of $c(v)$,
and $(u,\sent 2,v)\in E$ if the maximum output of $c(u)$ is an input of $c(v)$.
Note that the number of channels cannot be
inferred from this representation, as channels that are unused
are not represented.

Figure~\ref{fig:graphs} illustrates the graph representations of the
left and right networks from Figure~\ref{fig:equiv-nets}, where the
comparators are labeled alphabetically in order of occurrence
(left-to-right, top-down). Note that these two graphs can be seen to be
isomorphic by mapping the vertices as $a\mapsto b'$, $b\mapsto a'$,
$c\mapsto c'$, $d\mapsto d'$, $e\mapsto e'$ and $f\mapsto f'$.

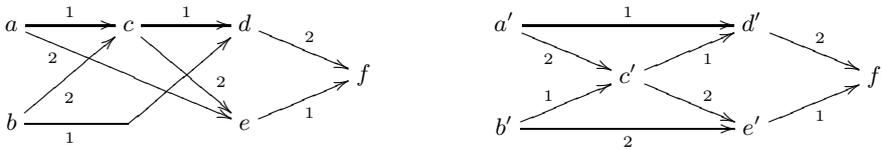
\begin{figure}[htb]
  \centering
\small
$\xymatrix@R-2em@C+1em{
  a \ar[r]^{1} \ar[rrdd]_(.2){2}
  & c \ar[r]^{1} \ar[rdd]^(.7){2}
  & d \ar[rd]^{2}
  \\
  &&& f
  \\
  b \ar[ruu]_(.4){2} \ar'[r]+0_{1}[rruu]
  && e \ar[ur]_{1}
}$
\qquad\qquad
$
\xymatrix@R-2em@C+1em{
  a' \ar[rr]^{1} \ar[dr]_(.4){2}
  &&
  d' \ar[rd]^{2}
  \\
  & c' \ar[ru]_(.6){1} \ar[rd]^(.6){2}
  && f'
  \\
  b' \ar[ur]^(.4){1} \ar[rr]_{2}
  &&
  e' \ar[ur]_{1}
}$
\caption{Graph representations of the two networks in
  Figure~\ref{fig:equiv-nets}.}
  \label{fig:graphs}
\end{figure}

Clearly, graphs representing comparator networks are acyclic, and the
degrees of their vertices are bounded by~$4$.
The strong relationship between equivalence of comparator networks and
isomorphism of their corresponding graphs is reflected in 
Lemma~\ref{lem:isomorphism} below,
which implies that the comparator network equivalence problem is
polynomially reduced to the bounded-valence graph isomorphism problem.

\begin{definition}
  Let $C_1$ and $C_2$ be $n$-channel comparator networks. Then we
  write $C_1 \approx C_2$ if $\GG(C_1)$ and $\GG(C_2)$ are isomorphic.
\end{definition}

\begin{lemma}[\rm Choi \&\ Moon~\cite{DBLP:conf/gecco/ChoiM02}, Proposition~2]
  \label{lem:isomorphism}
  Let $C = C_1;D$ and $C_2$ be $n$-channel comparator networks such
  that $C_1$ and $C_2$ have the same depth and $ \GG(C_1) \approx
  \GG(C_2)$. If $C$ is a sorting network, then there is an $n$-channel
  comparator network $D'$ of the same depth as $D$ such that $C_2;D'$
  is a sorting network.
\end{lemma}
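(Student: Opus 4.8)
The plan is to exploit the graph isomorphism $\GG(C_1)\approx\GG(C_2)$ to produce a permutation of the channels that, when applied to the tail $D$, yields the desired network $D'$. The key observation is that the graph $\GG(C)$ records exactly how the data flows through the comparators: if the minimum (resp. maximum) output of one comparator feeds a given comparator, that is encoded by a $\sent 1$-edge (resp. $\sent 2$-edge). So an isomorphism between $\GG(C_1)$ and $\GG(C_2)$ tells us that $C_1$ and $C_2$ perform ``the same'' sequence of comparisons up to renaming the channels.

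First I would make precise the correspondence between a channel at the output of $C_1$ (resp. $C_2$) and the data-flow information in $\GG(C_1)$ (resp. $\GG(C_2)$). Each output channel of $C_1$ is either untouched (so it carries an original input) or carries the min/max output of a uniquely determined last comparator acting on it. I would argue that the isomorphism $\varphi\colon\GG(C_1)\to\GG(C_2)$ (which respects the edge labels $\sent 1,\sent 2$) induces a bijection $\pi$ on the channels: a channel carrying the $\sent 1$-output (resp. $\sent 2$-output) of comparator $v$ at depth-$d$ level is sent to the channel carrying the $\sent 1$-output (resp. $\sent 2$-output) of $\varphi(v)$; untouched channels can be matched arbitrarily among themselves. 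The crucial claim is then that $C_1$ and $\pi^{-1};C_2$ (i.e. $C_2$ followed by relabeling back) compute, for every Boolean input, the same multiset-to-channel assignment — more precisely $C_1(\vec x, d, i) = C_2(\vec x, d, \pi(i))$ for all $\vec x$ and $i$. This should follow by induction on the depth, using that isomorphic graphs force the two networks to apply matching comparators in matching order, so the value on each channel is determined the same way.

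Granting this claim, the argument concludes quickly. Consider the generalized comparator network obtained by applying $\pi$ to the tail: $C_2;\pi(D)$. By the claim, feeding any $\vec x$ into $C_2$ and then permuting channels by $\pi^{-1}$ gives exactly the state after $C_1$ on input $\vec x$; equivalently, $C_2;\pi(D)$ and $C_1;D = C$ produce the same outputs up to a fixed permutation of the output channels. Since $C$ is a sorting network, $C_2;\pi(D)$ is a generalized sorting network (its outputs are sorted up to a global relabeling, hence it sorts every input after that relabeling — but a generalized network whose outputs are a permutation of a sorted sequence on every input is a generalized sorting network). Now I invoke the untangling result quoted earlier in the excerpt: since $C_2$ is a standard comparator network and $\pi(D)$ is a (possibly generalized) network with $C_2;\pi(D)$ a generalized sorting network, there is a standard comparator network $D'$ of the same depth as $\pi(D)$ (hence of the same depth as $D$) with $C_2;D'$ a sorting network. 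This is exactly the statement.

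The main obstacle I anticipate is the bookkeeping in the induction establishing $C_1(\vec x,k,\cdot)$ and $C_2(\vec x,k,\pi(\cdot))$ agree: one must be careful that the isomorphism, which only sees the ``used'' part of the networks, interacts correctly with untouched channels, and that matching comparators really do receive matching pairs of inputs at every stage (this uses acyclicity of $\GG$ to order the comparators consistently, and the edge labels to track which of min/max is propagated). Handling comparators that act on channels untouched by earlier layers, and verifying that the induced $\pi$ is well defined independently of the alphabetical labeling, are the fiddly points. Everything after the claim is routine given the untangling lemma already available in the text.
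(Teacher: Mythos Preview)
The paper does not actually prove this lemma; it is stated with attribution to Choi and Moon and used without proof, so there is no in-paper argument to compare yours against. Your overall strategy --- extract a channel permutation $\pi$ from the graph isomorphism, argue that $C_2;\pi(D)$ is a generalized sorting network, then untangle --- is the natural one and would succeed, but the central technical claim as you state it is wrong.

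You assert that $C_1(\vec x,d,i)=C_2(\vec x,d,\pi(i))$ for the \emph{same} input $\vec x$ on both sides. This fails already for $C_1=\{(1,2)\}$ and $C_2=\{(3,4)\}$ on four channels: the graphs are isomorphic (a single node), your recipe gives $\pi=(1\,3)(2\,4)$, yet $C_1(1000)=0100$ while $C_2(1000)=1000$, and no permutation relates these consistently across all inputs. The point is that $\GG(C)$ records how comparator outputs feed comparator inputs, but it does \emph{not} record which input channels feed the initial comparators; that information is lost, so the same $\vec x$ need not induce matching values.

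The correct statement is that $\pi$ acts on the input as well: $C_2(\pi(\vec x))=\pi(C_1(\vec x))$ for all $\vec x$, equivalently $\outputs(C_2)=\pi(\outputs(C_1))$. This is what your induction can actually establish once you track the permuted input (and it is where the ``untouched channels matched arbitrarily'' freedom really enters: it fixes the input-side bijection). With this in hand your conclusion goes through unchanged: $(C_2;\pi(D))(\pi(\vec x))=\pi(D)\bigl(\pi(C_1(\vec x))\bigr)=\pi\bigl(D(C_1(\vec x))\bigr)=\pi(C(\vec x))$, a fixed permutation of a sorted sequence, so $C_2;\pi(D)$ is a generalized sorting network and untangling (as cited in the paper just before this lemma) yields the required $D'$.
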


The graph isomorphism problem is one of a very small number of problems
belonging to NP that are neither known to be solvable in polynomial time nor
known to be NP-complete.
However, it is known that the isomorphism of graphs of bounded valence can be
tested in polynomial time~\cite{DBLP:journals/jcss/Luks82}, so the comparator
network equivalence problem can be efficiently solved.

\section{Complete sets of two-layer filters}
\label{sec:prefix}

Recall that our goal is for given $n$ to compute as small as possible
complete sets of filters consisting of two-layer networks on $n$
channels. We now show how to exploit the graph representation to
compute such a set.

\subsection{A Symbolic Representation}

The obvious approach for finding all two-layer prefixes modulo
symmetry is to generate all two-layer networks, and then apply graph
isomorphism to find canonical representatives of the equivalence
classes.  We evaluated this approach using the popular graph
isomorphism tool \verb!nauty!~\cite{DBLP:journals/jsc/McKayP14}, but
found that the exponential growth in the number of two-layer prefixes
prevents this approach from scaling. Therefore, we opted for a
symbolic representation of these graphs that captures isomorphism.

For the special case of two-layer networks, the vertices in the
resulting graphs always have degree~$1$ or~$2$.
Therefore, they always consist of sets of ``sticks'' and ``cycles'',
and they are completely characterized by the maximal-length simple
paths they contain.  Moreover, this representation can be determined
directly from the network, as illustrated in Figure~\ref{ex:words}.

It is useful to adopt the following terminology on channels.  A
channel in a comparator network is called a \emph{min-channel}
(respectively, a \emph{max-channel}) if it is the smaller (resp.\ larger)
channel in some comparator of the first layer. We will also
occasionally refer to a min- or max- channel at a layer~$d$ with the
obvious meaning. A channel of a comparator network is called a
\emph{free} channel if it is not used in the first layer.

\begin{definition}\label{def:path}
  A \emph{path} in a two-layer network $C$ is a sequence
  $\tuple{p_1p_2\ldots p_k}$ of distinct channels such that each pair
  of consecutive channels is connected by a comparator in~$C$.

  The \emph{word} corresponding to $\tuple{p_1p_2\ldots p_k}$ is
  $\tuple{w_1w_2\ldots w_k}$, where:
  \[w_i=\begin{cases}
  \sent{0} & \mbox{if $p_i$ is the free channel} \\
  \sent{1} & \mbox{if $p_i$ is a min-channel} \\
  \sent{2} & \mbox{if $p_i$ is a max-channel}
  \end{cases}\]
\end{definition}

A path is maximal if it is a simple path (with no repeated nodes) that
cannot be extended (in either direction).
A network is connected if its graph representation is connected.

\begin{definition}
  \label{defn:word}
  Let $C$ be a connected two-layer network on $n$ channels.  Then
  $\word(C)$ is defined as follows, where there are three kinds of words.
  \begin{description}
  \item[Head-word.] If $n$ is odd, then\/ $\word(C)$ is the word
    corresponding to the maximal path in $C$ starting with
    the (unique) free channel. 
  \item[Stick-word.] If $n$ is even and $C$ has two channels not used
    in layer~$2$, then $\word(C)$ is the lexicographically smallest of
    the words corresponding to the two maximal paths in $C$ starting
    with one of these unused channels (which are reverse to one another).
  \item[Cycle-word.] If $n$ is even and all channels are used by a
    comparator in layer $2$, then $\word(C)$ is
    the lexicographically
    smallest word corresponding to a maximal path in $C$ that
    begins with two channels connected in layer~$1$.
  \end{description}
\end{definition}

The set of all possible words (not necessarily minimal with respect to
lexicographic ordering) can be described by the following BNF-style grammar.
\begin{align}
  \label{eq:word}
  \mathsf{Word} &::= \mathsf{Head} \mid \mathsf{Stick} \mid \mathsf{Cycle}
  & \mathsf{Stick} &::= (\sent{12}+\sent{21})^+ \\
  \nonumber
  \mathsf{Head} &::= \sent{0}(\sent{12}+\sent{21})^\ast
  & \mathsf{Cycle} &::= \sent{12}(\sent{12}+\sent{21})^+
\end{align}
To avoid ambiguity, we annotate each word with a tag from the set
$\{\sent h,\sent s,\sent c\}$
to indicate whether it is a Head-, Stick- or Cycle-word, respectively.
In Figure~\ref{ex:words}, the three two-layer networks $a$--$c$ lead to the
generation of a word of each kind resulting from the paths shown in $a'$--$c'$,
respectively.

\begin{figure}
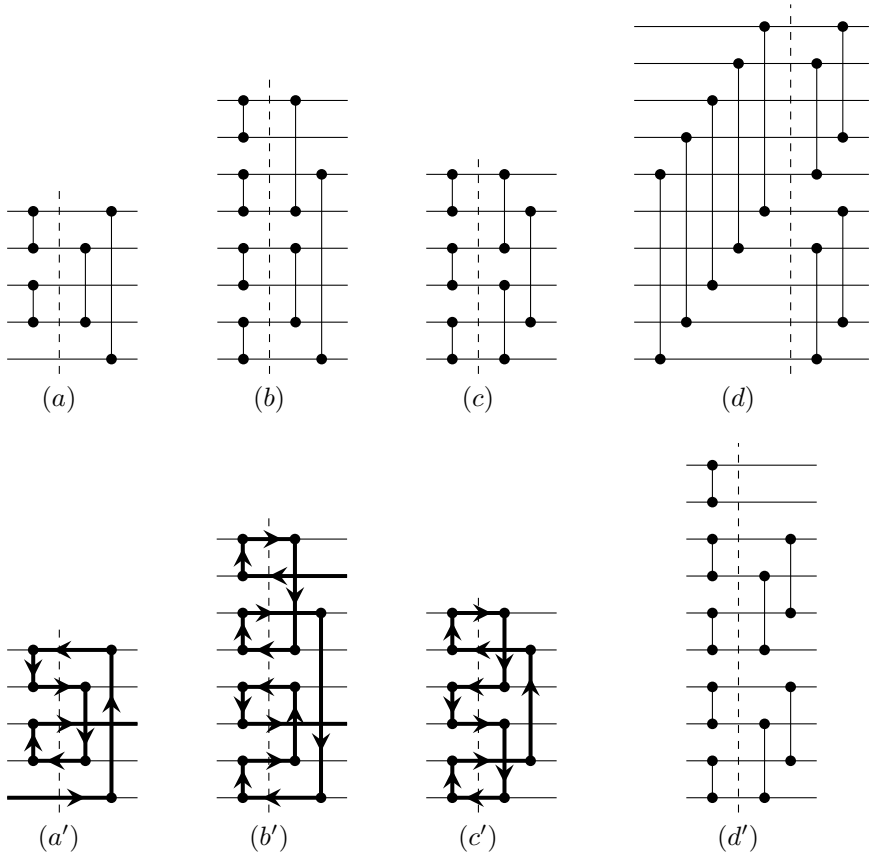

\begin{tabular}{cccc}
\begin{sortingnetwork}{5}{0.7}
    \addcomparator{5}{4}
    \addcomparator{3}{2}
    \nextlayer
    \addcomparator{2}{4}
    \addlayer
    \addcomparator{1}{5}
\end{sortingnetwork}
&
\begin{sortingnetwork}{8}{0.7}
    \addcomparator{7}{8}
    \addcomparator{5}{6}
    \addcomparator{3}{4}
    \addcomparator{1}{2}
    \nextlayer
    \addcomparator{5}{8}
    \addcomparator{2}{4}
    \addlayer
    \addcomparator{1}{6}
\end{sortingnetwork}
&
\begin{sortingnetwork}{6}{0.7}
    \addcomparator{5}{6}
    \addcomparator{3}{4}
    \addcomparator{1}{2}
    \nextlayer
    \addcomparator{4}{6}
    \addcomparator{1}{3}
    \addlayer
    \addcomparator{2}{5}
\end{sortingnetwork}
&
\begin{sortingnetwork}{10}{0.7}
  \addcomparator16
  \addlayer
  \addcomparator27
  \addlayer
  \addcomparator38
  \addlayer
  \addcomparator49
  \addlayer
  \addcomparator5{10}
  \nextlayer
  \addcomparator14
  \addcomparator69
  \addlayer
  \addcomparator25
  \addcomparator7{10}
\end{sortingnetwork}
\\
$(a)$ & $(b)$ & $(c)$ & $(d)$ \\[1em]
\begin{sortingnetwork}{5}{0.7}
    \addcomparator{5}{4}
    \addcomparator{3}{2}
    \nextlayer
    \addcomparator{2}{4}
    \addlayer
    \addcomparator{1}{5}
    \addarrow{-1}131
    \addarrow[0.7]3135
    \addarrow3505
    \addarrow0504
    \addarrow0424
    \addarrow[0.8]2422
    \addarrow2202
    \addarrow0203
    \addarrow[0.4]0343
\end{sortingnetwork}
&
\begin{sortingnetwork}{8}{0.7}
    \addcomparator{7}{8}
    \addcomparator{5}{6}
    \addcomparator{3}{4}
    \addcomparator{1}{2}
    \nextlayer
    \addcomparator{5}{8}
    \addcomparator{2}{4}
    \addlayer
    \addcomparator{1}{6}
    \addarrow4707
    \addarrow0708
    \addarrow0828
    \addarrow[0.6]2825
    \addarrow2505
    \addarrow0506
    \addarrow[0.4]0636
    \addarrow3631
    \addarrow3101
    \addarrow0102
    \addarrow[0.8]0222
    \addarrow2224
    \addarrow2404
    \addarrow0403
    \addarrow[0.4]0343
\end{sortingnetwork}
&
\begin{sortingnetwork}{6}{0.7}
    \addcomparator{5}{6}
    \addcomparator{3}{4}
    \addcomparator{1}{2}
    \nextlayer
    \addcomparator{4}{6}
    \addcomparator{1}{3}
    \addlayer
    \addcomparator{2}{5}
    \addarrow2101
    \addarrow0102
    \addarrow[0.5]0232
    \addarrow3235
    \addarrow3505
    \addarrow0506
    \addarrow0626
    \addarrow[0.8]2624
    \addarrow2404
    \addarrow0403
    \addarrow0323
    \addarrow[0.8]2321
\end{sortingnetwork}
&
\begin{sortingnetwork}{10}{0.7}
  \addcomparator12
  \addcomparator34
  \addcomparator56
  \addcomparator78
  \addcomparator9{10}
  \nextlayer
  \addcomparator13
  \addcomparator57
  \addlayer
  \addcomparator24
  \addcomparator68
\end{sortingnetwork}
\\
$(a')$ & $(b')$ & $(c')$ & $(d')$
\end{tabular}

\caption{
  Networks and paths.
  Networks~($a$--$c$) correspond to the three cases in Definition~\ref{defn:word},
  and ($a'$--$c'$) depict the corresponding maximal paths.
  Path~$(a')$ starts from the free channel,
  with corresponding word $\sent{01221_h}$.
  Path~$(b')$ starts from a free channel at layer~$2$, corresponding to the
  word~$\sent{21212112_s}$; the reverse path corresponds to the smaller word
  $\sent{21121212_s}$, which represents network~$(b)$.
  For the cycle in~$(c')$, the smallest word is $\sent{121221_c}$,
  obtained on the reverse path starting from channel~$1$.\newline
  Network~$(d)$ consists of three connected components (the sets of channels
  $\{1,4,6,9\}$, $\{2,5,7,10\}$ and $\{3,8\}$), corresponding to
  the first two layers of the $10$-channel
  sorting network from Figure~49 of~\cite{Knuth73}.
  The first two components contain cycles represented by $\sent{1221_c}$,
  and the third yields the Stick-word $\sent{12_s}$.
  The network is thus represented by the sentence $\sent{12_s;1221_c;1221_c}$.
  In turn, this sentence generates the equivalent network~$(d')$.
}
\label{ex:words}
\end{figure}

\begin{definition}
  \label{defn:sentence}
  The \emph{word representation} of a two-layer comparator network $C$,
  $\word(C)$, is the multi-set containing $\word(C')$ for each connected
  component $C'$ of $C$; we will denote this set by the ``sentence''
  $w_1;w_2;\ldots;w_k$, where the words are in lexicographic order (including
  their tags).
\end{definition}
In particular, a connected network will be represented by a sentence with only
one word, so there is no ambiguity in the notation $\word(C)$.  The
requirement that the first layer is maximal corresponds to the requirement
that the multi-set $\word(C)$ has at most one Head-word.
Figure~\ref{ex:words}$(d)$ illustrates the case of a multi-component
two-layer network.

Conversely, a word $w$ defines a two-layer network as follows.

\begin{definition}
  \label{defn:net}
  Let $w$ be a word in the language of Equation~\eqref{eq:word}, and $n=|w|$.
  The two-layer network $\net(w)$ has first layer~$F_n$
  and second layer defined as follows.
  \begin{enumerate}
  \item If $w$ is a Stick-word or a Cycle-word, ignore the first character;
    then, for $k=0,\ldots,\left\lfloor\frac n2\right\rfloor-1$, take the next
    two
    characters $xy$ of $w$ and add a second-layer comparator between channels
    $2k+x$ and $2(k+1)+y$.
    Ignore the last character;
    if $w$ is a Cycle-word, connect the two remaining channels at the end.
  \item If $w$ is a Head-word, proceed as above but start by connecting the free
    channel to the channel indicated by the second character.
  \end{enumerate}
\end{definition}

To generate a network from a sentence, we simply generate the networks for each
word in the sentence and compose them in the same order.
Figure~\ref{ex:words}$(d')$ illustrates this construction.

The following lemma shows that abstracting networks to words captures
network equivalence.

\begin{lemma}
  \label{lem:equiv}
  Let $C$ and $C'$ be two-layer comparator networks on $n$ channels. Then $C \approx C'$ if and only if $\word(C)=\word(C')$.
\end{lemma}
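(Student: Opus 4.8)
The plan is to prove both implications by relating the word representation of a two-layer network to its graph representation $\GG(C)$, and then invoke the definition of $\approx$ (graph isomorphism) together with Lemma~\ref{lem:isomorphism}. The key observation is that, as noted in the text, for a two-layer network every vertex of $\GG(C)$ has degree~$1$ or~$2$, so each connected component of $\GG(C)$ is either a simple path (a ``stick'', when it has two degree-$1$ endpoints; these correspond either to free channels or to channels unused in layer~$2$) or a simple cycle. The word $\word(C')$ of a connected component $C'$ is obtained by walking along the unique such maximal path/cycle and recording, for each channel visited, whether it is free ($\sent0$), a min-channel ($\sent1$), or a max-channel ($\sent2$); for cycles and sticks the lexicographically smallest traversal is chosen, and for the head-word the traversal is forced to start at the free channel. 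I will first check that this labelling information — the cyclic or linear sequence of $\sent0/\sent1/\sent2$ tags along the component — is exactly the isomorphism invariant of the labelled graph $\GG(C')$: the min/max role of a channel in layer~$1$ is precisely what the edge labels $\sent1,\sent2$ in $\GG(C)$ encode, so an isomorphism of $\GG(C')$ with $\GG(C'')$ must map the path/cycle structure to the path/cycle structure preserving these labels, and conversely a matching of the tag sequences yields such an isomorphism.

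For the forward direction ($C \approx C' \Rightarrow \word(C)=\word(C')$), I would argue componentwise: any graph isomorphism $\GG(C)\to\GG(C')$ must send each connected component of $\GG(C)$ to a connected component of $\GG(C')$, inducing a bijection on components that preserves the tag sequences (up to the allowed traversal symmetries). Since $\word(C)$ and $\word(C')$ are defined as the multi-sets of the canonical (lexicographically minimal, appropriately tagged) words of the components, equality of these multi-sets follows. One must be slightly careful that the three cases of Definition~\ref{defn:word} are mutually exclusive and graph-invariant: a head-word component is the unique one containing a free channel (odd $n$ globally contributes exactly one such channel), a stick-word component has two channels unused in layer~$2$, and a cycle-word component has all channels used in layer~$2$ — and whether a channel is used in layer~$2$ is the degree of the vertex in $\GG(C)$, hence an isomorphism invariant.

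For the converse ($\word(C)=\word(C') \Rightarrow C\approx C'$), I would use the reconstruction map $\net$ of Definition~\ref{defn:net}. The cleanest route is to show that $\GG(\net(w))$ depends only on the word $w$ (immediate from the construction), and that $\GG(C) \approx \GG(\net(\word(C)))$ for every two-layer $C$ — i.e.\ building the canonical network from a network's own word gives back an isomorphic graph. Both facts are verified componentwise by tracing the explicit pairing rule of Definition~\ref{defn:net} against the path-walk of Definition~\ref{def:path}: the comparator added ``between channels $2k+x$ and $2(k+1)+y$'' is exactly the layer-$2$ edge joining the $k$-th and $(k{+}1)$-st consecutive channels of the path, with $x,y\in\{\sent1,\sent2\}$ recording their min/max roles, so the labelled adjacency structure is reproduced. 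Then $\word(C)=\word(C')$ gives $\GG(\net(\word(C)))=\GG(\net(\word(C')))$, and chaining the isomorphisms yields $\GG(C)\approx\GG(C')$, i.e.\ $C\approx C'$.

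The main obstacle I anticipate is purely bookkeeping rather than conceptual: carefully matching the index arithmetic of Definition~\ref{defn:net} (the ``$2k+x$'' offsets, the instruction to ignore the first and last characters, the special head-word start, and the cycle-closing step) to the traversal of Definition~\ref{def:path}, and simultaneously checking that the lexicographic-minimization / forced-start choices in Definition~\ref{defn:word} genuinely pick out a canonical representative of each graph-isomorphism class — in particular that reversing a stick and rotating/reflecting a cycle are the only traversal ambiguities, so that two components have the same canonical word iff their labelled graphs are isomorphic. Once that correspondence is pinned down, the multi-set statement for disconnected $C$ follows since graph isomorphism on $\GG(C)$ acts as a tag-sequence–preserving bijection on connected components.
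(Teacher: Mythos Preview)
Your proposal is correct and follows essentially the same approach as the paper, though it is far more detailed: the paper's proof is a three-line sketch that, for the direction $C\approx C'\Rightarrow\word(C)=\word(C')$, simply observes that the graph isomorphism amounts to a channel permutation $\pi$ under which paths in $C$ starting at $j$ correspond to paths in $C'$ starting at $\pi(j)$, and dismisses the converse as ``straightforward'' without invoking $\net$ at all. Your explicit use of $\net$ and the verification that $\GG(C)\approx\GG(\net(\word(C)))$ is a perfectly good way to unpack that ``straightforward'', and the component-by-component bookkeeping you outline is exactly what the paper leaves implicit.
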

\begin{proof}
  The ``if'' part follows from the observation that, for
  two-layer networks, $C \approx C'$ means that there is a permutation
  $\pi$ such that $C'$ equals $\pi(C)$ with possibly some comparators
  reversed in the second layer.
  Thus, any path obtained in $C$ beginning at channel $j$ can be obtained in
  $C'$ by beginning at channel $\pi(j)$, and vice versa.
  The ``only if'' part is straightforward.
\end{proof}

\begin{remark}
In algebraic terms, the functions $\word$ and $\net$ form an adjunction between
the preorders of words (with lexicographic ordering) and two-layer comparator
networks (with equivalence).
The function $\word$ can be seen as a ``forgetful'' functor that forgets the
specific order of channels in a net, whereas $\net$ generates the ``free'' network
with first layer $F_n$ from a given word.
Furthermore, $\word$ always returns the minimum element in the fiber
$\net^{-1}(w)$, whence lexicographic minimal words can be used to
characterize equivalent networks.
\end{remark}
\begin{remark}
Note that Definition~\ref{defn:net} can be adapted to any choice of
a maximal first layer.
\end{remark}

Definition~\ref{defn:net} fixes the first layer of a
two-layer comparator network $\net(w)$ to $F_n$. Denote the set of all
possible second layers (in a two-layer network whose first layer
is $F_n$) by $G_n$. We denote the equivalence classes of the two-layer
networks whose first layer is $F_n$ and whose second layer is a member of
$G_n$ by $R(G_n)$. We view  $R(G_n)$ as a set of representatives of
the equivalence classes. So  $R(G_n)$ is viewed as a maximal set of
non-equivalent networks.

\begin{theorem}
  For any $n\geq 3$, the set $R(G_n)$ of two-layer comparator
  networks is a complete set of filters for the optimal-depth sorting
  network problem.
\end{theorem}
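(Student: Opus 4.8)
The plan is to start from an arbitrary optimal-depth sorting network on $n$ channels and to rewrite it, in two depth-preserving steps, into one whose two-layer prefix is one of the chosen representatives in $R(G_n)$. Fix $d = T(n)$ and let $S$ be a sorting network of depth $d$ on $n$ channels; since $n \geq 3$ we have $d = T(n) \geq 3$, so a two-layer prefix will always be meaningful.

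\emph{Step 1 (normalize the first layer).} The layer $F_n$ has $\lfloor n/2\rfloor$ comparators, so it is a maximal layer for every parity of $n$. Hence Lemma~\ref{lem:parberry} applies to $S$ and produces a sorting network $S' = F_n;L_2;L_3;\cdots;L_d$ of the same depth $d$ whose first layer is exactly $F_n$. By the definition of $G_n$ the layer $L_2$ belongs to $G_n$, so $F_n;L_2$ is one of the two-layer networks classified by $R(G_n)$.

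\emph{Step 2 (normalize the second layer up to equivalence).} Let $P \in R(G_n)$ be the representative of the $\approx$-equivalence class of $F_n;L_2$, so that $\GG(F_n;L_2) \approx \GG(P)$ and both networks have depth $2$. Writing $S' = (F_n;L_2);D$ with $D = L_3;\cdots;L_d$ of depth $d-2$, Lemma~\ref{lem:isomorphism} yields a comparator network $D'$ of depth $d-2$ such that $P;D'$ is a sorting network. Its depth is $2 + (d-2) = d = T(n)$, so $P;D'$ is an optimal-depth sorting network of the form $C;C'$ with $C = P \in R(G_n)$; thus $R(G_n)$ is a complete set of filters.

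There is no genuine obstacle here: the statement is a composition of three facts already available --- Parberry's first-layer normalization (Lemma~\ref{lem:parberry}), the observation that, by construction, $R(G_n)$ contains a representative of the $\approx$-class of \emph{every} two-layer network with first layer $F_n$ (so nothing is lost by passing to the representative), and the Choi--Moon transfer lemma (Lemma~\ref{lem:isomorphism}), which moves the sorting property across isomorphic prefixes. The only points that need care are purely bookkeeping: checking that $F_n$ is maximal for both parities of $n$, and verifying the depth accounting so that Lemma~\ref{lem:isomorphism} is invoked on two prefixes of equal depth $2$ and each rewriting step keeps the total depth equal to $T(n)$.
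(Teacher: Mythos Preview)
Your proof is correct and follows exactly the route the paper has in mind: the theorem is stated without an explicit proof, but it is meant as an immediate consequence of Parberry's first-layer normalization (Lemma~\ref{lem:parberry}) together with the Choi--Moon transfer lemma (Lemma~\ref{lem:isomorphism}) applied to the chosen representative in $R(G_n)$. Your two-step decomposition and depth bookkeeping spell this out precisely.
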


As a consequence of Lemma~\ref{lem:equiv},
$R(G_n)$ can be constructed simply by generating all multi-sets of
words with at most one Head-word yielding exactly $n$ channels.  This
procedure has been implemented straightforwardly in Prolog, resulting
in the values in Table~\ref{tab:Gn}.

\begin{table}
  \[\begin{array}{c|r|r|r|r|r|r|r|r|r|r|r|r}
  n & \multicolumn1{c|}{3}
  & \multicolumn1{c|}{4}
  & \multicolumn1{c|}{5}
  & \multicolumn1{c|}{6}
  & \multicolumn1{c|}{7}
  & \multicolumn1{c|}{8}
  & \multicolumn1{c|}{9}
  & \multicolumn1{c|}{10}
  & \multicolumn1{c|}{11}
  & \multicolumn1{c|}{12}
  & \multicolumn1{c|}{13}
  & \multicolumn1{c}{14}  \\ \hline
  |G_n| & 4 & 10 & 26 & 76 & 232 & 764 & 2{,}620 & 9{,}496 & 35{,}696 & 140{,}152 & 568{,}504 & 2{,}390{,}480 \\
  |R(G_n)| & 4 & 8 & 16 & 20 & 52 & 61 & 165 & 152 & 482 & 414 & 1{,}378 & 1{,}024 \\ \hline
  \end{array}\]
  \[\begin{array}{c|r|r|r|r|r}
  n
  & \multicolumn1{c|}{15}
  & \multicolumn1{c|}{16}
  & \multicolumn1{c|}{17}
  & \multicolumn1{c|}{18}
  & \multicolumn1{c}{19} \\ \hline
  |G_n| & 10{,}349{,}536 & 46{,}206{,}736 & 211{,}799{,}312 & 997{,}313{,}824 & 4{,}809{,}701{,}440 \\
  |R(G_n)| & 3{,}780 & 2{,}627 & 10{,}187 & 6{,}422 & 26{,}796 \\ \hline
  \end{array}\]

  \caption{Values of $|G_n|$ and $|R(G_n)|$ for $n\leq 19$.
    Besides the values given in the table, $|R(G_{20})|=15{,}906$ was computed
    in a few seconds, and $|R(G_{30})|=1{,}248{,}696$ in under a minute.}
  \label{tab:Gn}
\end{table}

As mentioned in Section~\ref{sec:backgr}, $|G_n|$ corresponds to the number of
matchings in a complete graph with $n$ nodes, since every comparator joins two
channels.
The sequence $|R(G_n)|$ does not appear to be known already, and it does not
appear to have a simple description.
The following property is interesting to note.

\begin{theorem}\mbox{}\label{thm:interesting1}
  For odd $n$,
  $\left|R(G_n)\right|=\left|R(G_{n-1})\right|+2\left|R(G_{n-2})\right|$.
\end{theorem}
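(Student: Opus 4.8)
The plan is to set up a bijection that splits the words-with-tags contributing to $R(G_n)$ for odd $n$ into pieces whose sizes are $|R(G_{n-1})|$ and $|R(G_{n-2})|$ (the latter counted twice). Recall from Lemma~\ref{lem:equiv} and the discussion preceding Table~\ref{tab:Gn} that $R(G_n)$ is in bijection with the set of multi-sets of words (tagged as Head, Stick, or Cycle) whose lengths sum to $n$ and which contain at most one Head-word. Since $n$ is odd, every such multi-set must contain \emph{exactly} one Head-word (the total length of any collection of Stick- and Cycle-words is even, because each such word has even length by the grammar in Equation~\eqref{eq:word}). So I would first observe that an element of $R(G_n)$ is uniquely determined by a pair: a Head-word $h$ of some odd length $2j+1$, and a multi-set $M$ of Stick/Cycle words of total even length $n-(2j+1)$.

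The key step is to analyze the Head-word. By the grammar, $\mathsf{Head} ::= \sent{0}(\sent{12}+\sent{21})^\ast$, so a Head-word is $\sent 0$ followed by a (possibly empty) concatenation of the blocks $\sent{12}$ and $\sent{21}$; equivalently, a Head-word of length $2j+1$ is determined by a binary string of length $j$ recording the block choices. I would split on whether the Head-word is just $\sent{0}$ (length $1$) or longer (length $\geq 3$). If $h=\sent 0$, then the remaining words form a multi-set of Stick/Cycle words summing to $n-1$; together with the (forced-empty, since $n-1$ is even and could still admit a Head-word only if... no) — more cleanly: multi-sets of Stick/Cycle-words of total length $n-1$ are exactly the elements of $R(G_{n-1})$ that contain \emph{no} Head-word. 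If $h$ has length $\geq 3$, I would strip the leading $\sent 0$ and the following block ($\sent{12}$ or $\sent{21}$), which has $2$ characters and two possible values, leaving a Head-word of length $2j-1$; the rest of the sentence is unchanged. This peels off exactly $2$ characters and introduces a factor of $2$ (the two block choices), mapping to a sentence on $n-2$ channels that still has exactly one Head-word, i.e.\ an element of $R(G_{n-2})$.

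So the recursion I would extract is: $|R(G_n)| = (\text{number of no-Head multisets of total length } n-1) + 2\,|R(G_{n-2})|$, and it remains to identify the first summand with $|R(G_{n-1})|$. Here is where care is needed, since $R(G_{n-1})$ (with $n-1$ even) also contains sentences \emph{with} a Head-word, which I have not accounted for. The resolution is the earlier splitting applied one level down: the elements of $R(G_{n-1})$ with a Head-word are exactly those where the Head-word is $\sent 0$ (length $1$) plus a no-Head multiset of length $n-2$, or a longer Head-word plus something — but for even $n-1$ a Head-word has odd length and the rest has odd total length, which is impossible for a collection of even-length Stick/Cycle words. Hence for even $m$, a Head-word can appear only if the remaining multiset is empty, forcing $m$ odd, contradiction — so in fact \emph{no} element of $R(G_{n-1})$ contains a Head-word when $n-1$ is even and nonzero except... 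I need to recheck: a single Head-word alone has odd length, so it cannot have even length $n-1$; and a Head-word plus nonempty Stick/Cycle part has odd $+$ even $=$ odd length. Therefore for even $m\geq 2$, \emph{every} element of $R(G_m)$ is Head-free, so the ``no-Head multisets of total length $n-1$'' are precisely all of $R(G_{n-1})$. This closes the argument: $|R(G_n)| = |R(G_{n-1})| + 2|R(G_{n-2})|$.

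The main obstacle I anticipate is bookkeeping the parity/tag constraints consistently — in particular being scrupulous that (i) every Stick- and Cycle-word has even length, (ii) a valid sentence on an odd number of channels has exactly one Head-word while one on an even number of channels has none, and (iii) the peeling map on long Head-words is genuinely a $2$-to-$1$ correspondence onto all of $R(G_{n-2})$ (its image must be shown to be \emph{every} sentence with exactly one Head-word, which follows because any such sentence's Head-word can be prepended with $\sent 0$ and a chosen block). None of these steps is deep, but the statement is a clean algebraic identity and the proof is essentially this combinatorial decomposition of tagged words.
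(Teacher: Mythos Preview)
Your approach is essentially the same as the paper's: split on whether the unique Head-word is $\sent 0$ alone or longer, and in the longer case peel off one two-character block to land in $R(G_{n-2})$ (the paper removes the \emph{last} block rather than the first, but this is immaterial). There is a minor slip where you write ``strip the leading $\sent 0$ and the following block'' but evidently mean to remove only the block and keep the $\sent 0$, as your own next sentence (``peels off exactly $2$ characters \ldots\ still has exactly one Head-word'') confirms.
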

\begin{proof}
  The proof is based on the word representation of the networks.
   If $n$ is odd, then $\word(C)$ contains exactly one word beginning with
    \sent{0}.
    If this word is \sent{0_h}, then removing it yields a network with $n-1$
    channels, and this construction is reversible.
    Otherwise, removing the two last letters in this word yields a network with
    $n-2$ channels; since the removed letters can be \sent{12} or \sent{21},
    this matches each network on $n-2$ channels to two networks on $n$
    channels.\qedhere
\end{proof}

\subsection{Saturation}

We now introduce a notion that further restricts the set of networks
we need to consider when searching for optimal sorting networks.
Instead of just looking at the structure of the comparators (modulo
permutation), we further take into account the actual effect
of the network on its inputs, and focus only on
those networks that achieve the ``most'' amount of sorting. Similar to the
way that we use grammars to characterize isomorphic networks, here too
we first define the desired semantic property, and later provide a
syntactic characterization in terms of a grammar. 
The following lemma makes precise what we mean by achieving the ``most'' amount
of sorting.

\begin{lemma}
  \label{lemma:subset}
  Let $C=P;S$ be a sorting network of depth $d$ and $Q$ be a
  comparator network such that $P$ and $Q$ have the same depth, and
  $\outputs(Q)\subseteq\outputs(P)$.  Then $Q;S$ is a sorting network
  of depth $d$.
\end{lemma}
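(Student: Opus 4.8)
The plan is to invoke the zero-one principle (Lemma~\ref{lemma:zeroone}) to reduce the claim to a statement about Boolean inputs, where the hypothesis $\outputs(Q)\subseteq\outputs(P)$ can be used directly. First I would settle the bookkeeping on depths: if $P$ and $Q$ both have depth $m$ and $S$ has depth $d-m$, then $Q;S$ has depth $m+(d-m)=d$, exactly as required, so it remains only to show that $Q;S$ sorts every input.

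Next, by Lemma~\ref{lemma:zeroone} it suffices to show that $Q;S$ sorts every Boolean input $\vec x\in\BB^n$. For such an $\vec x$, feeding $\vec x$ through $Q;S$ is the same as feeding $\vec x$ through $Q$ and then feeding the result through $S$, i.e.\ $(Q;S)(\vec x)=S(Q(\vec x))$. Now $Q(\vec x)\in\outputs(Q)$ by definition, and since $\outputs(Q)\subseteq\outputs(P)$ there is some $\vec y\in\BB^n$ with $P(\vec y)=Q(\vec x)$.

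The conclusion then follows immediately: $(Q;S)(\vec x)=S(Q(\vec x))=S(P(\vec y))=(P;S)(\vec y)=C(\vec y)$, and this is sorted because $C$ is a sorting network. As $\vec x$ was arbitrary, $Q;S$ sorts all Boolean inputs, hence by the zero-one principle it is a sorting network, and by the depth computation above it has depth~$d$.

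There is no real obstacle here; the only points requiring a modicum of care are (i)~that composition of comparator networks corresponds to composition of the induced functions on $\BB^n$, which is immediate from the inductive definition of $C(\vec x,k,i)$, and (ii)~that $Q(\vec x)$ must itself be a Boolean vector so that it is a legal input to $S$ --- both are automatic once one works over $\BB^n$ as licensed by the zero-one principle.
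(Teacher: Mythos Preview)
Your proof is correct and essentially identical to the paper's: both observe that $Q;S$ has depth~$d$, then take an arbitrary Boolean input $\vec x$, use $\outputs(Q)\subseteq\outputs(P)$ to find $\vec y$ with $P(\vec y)=Q(\vec x)$, and conclude $(Q;S)(\vec x)=C(\vec y)$ is sorted. The only cosmetic difference is that you invoke the zero-one principle explicitly, whereas the paper has already adopted the convention of working over $\BB^n$ throughout and so omits that step.
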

\begin{proof}
  Since $P$ and $Q$ have the same depth, the depth of $Q;S$ is $d$.
  Let $\vec x\in\BB^n$ be an arbitrary input.
  Then $Q(x)\in\outputs(Q)\subseteq\outputs(P)$.
  Hence, there is $y\in\BB^n$ such that $Q(x)=P(y)$.
  Thus, $(Q;S)(x) = S(Q(x)) = S(P(y)) = (P;S)(y) = C(y)$, which is sorted
  since $C$ is a sorting network.
\end{proof}

Lemma~\ref{lemma:subset} generalizes, so that it suffices that there
exists a permutation mapping the set of outputs of one network
into the set of outputs of the other network.

\begin{lemma}
  \label{lemma:symsubset}
  Let $C=P;S$ be a sorting network of depth $d$ and $Q$ be a
  comparator network such that $P$ and $Q$ have the same depth, and
  $\outputs(Q)\subseteq\pi(\outputs(P))$ for some permutation $\pi$ on
  $n$ channels.  Then there exists a sorting network of the form
  $Q;S'$ of depth $d$.
\end{lemma}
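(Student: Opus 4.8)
The plan is to reduce the statement to Lemma~\ref{lemma:subset} (the case $\pi=\mathrm{id}$) by ``untwisting'' $Q$ through $\pi^{-1}$, accepting that this detour passes through \emph{generalized} comparator networks. Write $p$ for the common depth of $P$ and $Q$ and $e$ for the depth of $S$, so $d=p+e$. First I would consider the generalized comparator network $\tilde Q=\pi^{-1}(Q)$, which again has depth $p$. Since applying a permutation to a network permutes its set of outputs accordingly, $\outputs(\tilde Q)=\pi^{-1}(\outputs(Q))$, and from $\outputs(Q)\subseteq\pi(\outputs(P))$ we obtain $\outputs(\tilde Q)\subseteq\outputs(P)$.

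Next I would invoke Lemma~\ref{lemma:subset} --- whose proof never uses that the network in the role of $Q$ is standard rather than generalized --- to conclude that $\tilde Q;S$ is a generalized sorting network of depth $d$. Now I would untangle $\tilde Q;S$ into a standard sorting network $N$ of the same depth. The crucial observation is that the untangling procedure works through the layers from left to right, and the channel relabeling it performs when it fixes a comparator in some layer only propagates to later layers; this is precisely the observation behind the fact, recalled before Lemma~\ref{lem:parberryLevel2}, that untangling preserves the standard prefix. Consequently the first $p$ layers of $N$ are exactly an untangling of $\tilde Q$ taken on its own; call this standard network $Q'$, so $N=Q';S''$ with $Q'$ of depth $p$ and $S''$ of depth $e$. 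Since untangling only relabels channels --- which changes neither which comparator each min- or max-output is fed into, nor whether it is a min- or a max-output --- we have $\GG(Q')\approx\GG(\tilde Q)$; and $\GG(\tilde Q)=\GG(\pi^{-1}(Q))$ is isomorphic to $\GG(Q)$ for the same reason. Hence $Q'\approx Q$.

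Finally, $N=Q';S''$ is a standard sorting network, $Q'$ and $Q$ have the same depth, and $\GG(Q')\approx\GG(Q)$; so Lemma~\ref{lem:isomorphism} produces a comparator network $S'$ of the same depth as $S''$ (namely $e$) with $Q;S'$ a sorting network. Its depth is $p+e=d$, which is what is required.

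The step I expect to be the main obstacle is the middle one: making precise that untangling $\tilde Q;S$ leaves its first $p$ layers ``untangled among themselves'', so that their graph representation stays under control and Lemma~\ref{lem:isomorphism} can then be used to trade the equivalent prefix $Q'$ for $Q$ itself. A minor point to check along the way is that Lemma~\ref{lemma:subset}, the identity $\outputs(\pi^{-1}(Q))=\pi^{-1}(\outputs(Q))$, and the invariance of the graph representation under channel relabeling all go through for generalized comparator networks; each of these is routine.
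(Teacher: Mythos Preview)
Your argument is correct, but it takes a longer route than the paper's. The paper applies $\pi$ to $P;S$ rather than $\pi^{-1}$ to $Q$: since $\pi(C)=\pi(P);\pi(S)$ is a generalized sorting network and $\outputs(Q)\subseteq\pi(\outputs(P))=\outputs(\pi(P))$, Lemma~\ref{lemma:subset} (applied with the generalized network $\pi(P)$ in the role of $P$) gives that $Q;\pi(S)$ is a generalized sorting network; because $Q$ is already standard, the untangling step acts only on the suffix $\pi(S)$, producing $S'$ with $Q;S'$ a sorting network of depth~$d$. By permuting the side that will be discarded anyway, the paper keeps $Q$ intact throughout and so never needs your final appeal to Lemma~\ref{lem:isomorphism} or the analysis of how untangling interacts with the prefix. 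Your approach works, and your observations about graph invariance under relabeling and untangling are sound, but the extra machinery is avoidable by choosing the other direction for the permutation.
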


\begin{proof}
  Let $C=P;S$ be a sorting network of depth $d$.  Then
  $\pi(C)=\pi(P);\pi(S)$ is a generalized sorting network.  Since
  $\outputs(Q)\subseteq\pi(\outputs(P))=\outputs(\pi(P))$,
  Lemma~\ref{lemma:subset} implies that $Q;\pi(S)$ is also a
  generalized sorting network.
  Untangling $\pi(S)$, we obtain $S'$ such that $Q;S'$ is a
  sorting network of depth~$d$.  
\end{proof}

For comparator networks $C_a$ and $C_b$, if
$\outputs(C_b)\subseteq\pi(\outputs(C_a))$ for some permutation $\pi$,
we write $C_b\preceq C_a$, and say that $C_b$ \emph{subsumes} $C_a$.
Note that this relation includes equivalence.
By Lemma~\ref{lemma:symsubset}, it suffices to consider
two-layer networks that are minimal with respect to subsumption.

\begin{corollary}
  The set of equivalence classes of two-layer networks that are
  minimal with respect to subsumption is a complete set of filters.
\end{corollary}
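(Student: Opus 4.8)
The plan is to derive this immediately from the theorem above (that $R(G_n)$ is a complete set of filters) together with Lemma~\ref{lemma:symsubset}, the only extra ingredient being that there are only finitely many two-layer networks on $n$ channels. Fix $n\geq 3$. By that theorem there is an optimal-depth sorting network of the form $C_0;C_0'$ with $C_0$ a two-layer network on $n$ channels; let $d$ be its (optimal) depth. Now view $\preceq$ as a preorder on two-layer networks on $n$ channels. Since this set is finite, there are no infinite strictly descending $\preceq$-chains, so by repeatedly passing from the current network to a strictly $\preceq$-smaller one we obtain, after finitely many steps, a two-layer network $Q$ with $Q\preceq C_0$ that is $\preceq$-minimal among all two-layer networks on $n$ channels.

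The rest is one application of Lemma~\ref{lemma:symsubset}. The networks $C_0$ and $Q$ both have depth $2$, and $Q\preceq C_0$ unfolds to $\outputs(Q)\subseteq\pi(\outputs(C_0))$ for some permutation $\pi$ on $n$ channels; applying the lemma with $P=C_0$ and $S=C_0'$ yields a sorting network $Q;S'$ of depth $d$, which is thus of optimal depth. Since $Q$ is $\preceq$-minimal, its equivalence class lies in the set $M$ described in the statement; replacing $Q$ by the representative $\net(\word(Q))\approx Q$ and transferring $S'$ through Lemma~\ref{lem:isomorphism} if one wishes to use a canonical representative, we exhibit an optimal-depth sorting network whose two-layer prefix lies in $M$. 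By Definition~\ref{def:complete}, $M$ is a complete set of filters.

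I do not expect a real obstacle --- the statement is a corollary for exactly this reason. The points requiring (mild) care are: getting the direction of subsumption right, so that the $\preceq$-smaller network $Q$ plays the role of the new prefix and $C_0$ plays the role of $P$ in Lemma~\ref{lemma:symsubset}; justifying the existence of a $\preceq$-minimal network below $C_0$, which follows from finiteness (one may also observe that $\preceq$ descends to a well-founded partial order on $\approx$-equivalence classes, using as in the proof of Lemma~\ref{lem:equiv} that equivalent two-layer networks have output sets related by a permutation); and reconciling Definition~\ref{def:complete}, which refers to networks, with the statement, which refers to equivalence classes, by choosing any representative and appealing to Lemma~\ref{lem:isomorphism}.
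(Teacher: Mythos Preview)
Your proposal is correct and follows the same route the paper intends: the paper states the corollary immediately after Lemma~\ref{lemma:symsubset} with only the remark that ``it suffices to consider two-layer networks that are minimal with respect to subsumption,'' and you have simply spelled out that implication in full --- starting from some two-layer prefix of an optimal-depth network, descending by finiteness to a $\preceq$-minimal $Q$, and applying Lemma~\ref{lemma:symsubset} once. The only superfluous step is invoking the theorem on $R(G_n)$ to obtain $C_0$: since $T(n)\geq 3$ for $n\geq 3$, any optimal-depth sorting network already has a two-layer prefix, so that theorem is not needed here.
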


Suppose one wishes to compute these minimal (up to the subsumes
relation) elements directly. Having fixed the first layer to some
maximal layer (e.g., $F_n$), there are still $|G_n|$
many possibilities for the second layer (see Table~\ref{tab:Gn}), the size
of their output sets is potentially exponential, and there are $n!$
permutations to consider, per pair of output sets, to determine
subsumption. So this problem quickly becomes intractable. One might
consider clever optimizations to reduce the computation time, but such
an approach does not scale well either.

Instead, we shall define a new class of networks, which we call
\emph{saturated two-layer networks}. This class, as it turns out, has
a simple syntactical characterization using the notion of
words. Moreover, we shall prove that it forms a complete set of filters. We
experimentally verify for small values of $n$ that the class of saturated 
two-layer networks is precisely the set of minimal two-layer networks with
respect to subsumption. We conjecture that the equality holds for all values
of $n$. Note that our results on the depth-optimality of sorting networks do
not depend on this conjecture as we only require that the class of saturated
two-layer prefixes forms a complete set of filters, which we prove in
Theorem~\ref{theorem:saturatedfilter}.

Before we introduce saturated networks formally, we point out that
restricting attention to such networks significantly reduces the
number of two-layer prefixes we need to consider.  The numbers
$|S_n|$ of saturated two-layer networks and $|R(S_n)|$ of their
equivalence classes modulo graph isomorphism are given in
Table~\ref{tab:Sn}. For example, for $n=16$, we reduce the number of
two-layer prefixes to consider from $2{,}627$ to
$323$.

\begin{definition}
  A comparator network $C$ is \emph{redundant} if there exists a
  network $C'$ obtained from $C$ by removing a comparator such that
  $\outputs(C')=\pi(\outputs(C))$ for some permutation $\pi$.
  A network $C$ is \emph{saturated} if it is non redundant, and every network
  $C'$ obtained by adding a comparator to $C$ satisfies
  $\outputs(C')\not\subseteq\pi(\outputs(C))$ for every permutation $\pi$.
\end{definition}

For example, any comparator network that contains comparators
between the same two channels at consecutive layers is redundant.
The notion of saturation is a generalization of Parberry's work in the first layer~\cite{Parberry89}:
Lemma~\ref{lem:parberry} can be restated as saying that the first layer
of a saturated comparator network on $n$ channels always contains
$\left\lfloor\frac n2\right\rfloor$ comparators.

The following property quantifies the impact of removing redundant two-layer
prefixes.

\begin{theorem}\mbox{}\label{thm:interesting2}
 The number of non-equivalent redundant two-layer networks on $n$
   channels is $|R(G_{n-2})|$.
\end{theorem}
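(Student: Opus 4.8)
The heart of the argument is a structural characterization of redundant two-layer networks, followed by an easy bijection on equivalence classes.
Throughout I take the first layer of the networks under consideration to be $F_n$, as in the definition of $R(G_n)$.
Recall that $C_1\approx C_2$ means $\GG(C_1)\cong\GG(C_2)$, and that a graph isomorphism matches connected components bijectively and isomorphically; hence $C_1\approx C_2$ holds exactly when the multisets of $\approx$-classes of their connected components coincide.
Let $\rho$ denote the connected two-channel, two-layer network consisting of a single comparator placed in \emph{both} layers.
Since the output of the first layer of $\rho$ is already sorted on its two channels, the second comparator of $\rho$ is inert, so deleting it leaves $\outputs(\rho)$ (and hence $\outputs$ of any network containing a $\rho$ component) unchanged.

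The first step is the \emph{Key Lemma}: a two-layer network $C$ on $n$ channels is redundant if and only if, up to $\approx$, $C=N\sqcup\rho$ for some two-layer network $N$ on $n-2$ channels (where $\sqcup$ is disjoint union of networks on disjoint channel sets).
The ``if'' direction follows from the inertness of $\rho$'s second comparator, as just noted.
For ``only if'', suppose $C'$ is obtained from $C$ by deleting a comparator $c$ and $\outputs(C')=\pi(\outputs(C))$ for some permutation $\pi$.
If $c$ is in the second layer, write the output map of $C$ as $h\circ\sigma_c\circ g$, where $g$ applies the first layer, $\sigma_c$ sorts the two channels joined by $c$, and $h$ applies the remaining second-layer comparators; then $\outputs(C)=h(\sigma_c(g(\BB^n)))$ and $\outputs(C')=h(g(\BB^n))$, and since $\sigma_c(g(\BB^n))\subseteq g(\BB^n)$ we get $\outputs(C)\subseteq\outputs(C')$, hence $\outputs(C)=\outputs(C')$ (equal finite cardinalities) and $\sigma_c$ must act injectively on $g(\BB^n)$, i.e.\ the two channels of $c$ are comparably ordered in every vector of $g(\BB^n)$.
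A short case analysis on whether each of these channels is a min-channel, a max-channel, or the free channel of the first layer shows this forces $c$ to coincide with a first-layer comparator; then $c$ and its first-layer copy form a component isomorphic to $\rho$, and $C=N\sqcup\rho$.
If instead $c$ is a first-layer comparator $(2i-1,2i)$, then either it also occurs in the second layer --- in which case $C$ already contains a $\rho$ component and we are done --- or it does not, and the same reduction gives $\outputs(C)\subseteq\outputs(C')$ together with an output of $C'$ (for a suitably chosen input) in which channels $2i-1$ and $2i$ are reversed; one checks this output cannot occur among the outputs of $C$ even after permuting channels, so $\outputs(C)\subsetneq\outputs(C')$, contradicting $\outputs(C')=\pi(\outputs(C))$.

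Given the Key Lemma, define $\Phi([C])=[N]$ for a redundant two-layer network on $n$ channels, where $N$ arises from a representative $N\sqcup\rho$ of $[C]$ by deleting one $\rho$ component, and define $\Psi([D])=[D\sqcup\rho]$ for an arbitrary two-layer network $D$ on $n-2$ channels.
By the component-multiset description of $\approx$, deleting (resp.\ adjoining) one copy of the fixed component $\rho$ is a well-defined operation on multisets of $\approx$-classes --- in particular independent of the chosen representative and, when $C$ has several $\rho$ components, of which one is removed --- so $\Phi$ and $\Psi$ are well defined, and the same description makes them mutually inverse: $\Psi(\Phi([C]))=[C]$ since $C=N\sqcup\rho$, and $\Phi(\Psi([D]))=[D]$ since the component multiset of $D\sqcup\rho$ with one $\rho$ removed is exactly that of $D$.
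Hence $\Phi$ is a bijection between $\approx$-classes of redundant two-layer networks on $n$ channels and $\approx$-classes of all two-layer networks on $n-2$ channels, the latter numbering $|R(G_{n-2})|$ by definition, which proves the theorem.

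The main obstacle is the ``only if'' direction of the Key Lemma, and within it the sub-case of deleting a first-layer comparator: one must rule out that the second layer ``silently re-sorts'' the two freed channels, which comes down to showing that restoring the order of two specific channels cannot be achieved by the remaining second-layer comparators without disturbing some other channel --- a finite but slightly delicate case distinction.
The second-layer sub-case and all the bookkeeping with component multisets are routine by comparison.
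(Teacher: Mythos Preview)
Your high-level approach matches the paper's exactly: both rest on the characterization ``a two-layer network is redundant iff it contains a copy of $\rho$'' (in the paper's language, iff $\word(C)$ contains the Cycle-word $\sent{12_c}$), followed by the obvious add/remove-$\rho$ bijection with $(n-2)$-channel networks. The paper actually \emph{asserts} this characterization without proof and passes straight to the bijection via the word representation; you try to prove it directly, which is more than the paper does.

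That said, your argument for the second-layer case has a concrete error. You claim $\sigma_c(g(\BB^n))\subseteq g(\BB^n)$, but this fails: with $n=4$, $g=F_4$, and $c=(2,3)$, the vector $0100\in g(\BB^4)$ is mapped by $\sigma_c$ to $0010\notin g(\BB^4)$, and indeed the derived inclusion $\outputs(C)\subseteq\outputs(C')$ is false in easy examples (take $L_2=\{(1,4),(2,3)\}$ and remove $(2,3)$: then $0010\in\outputs(C)\setminus\outputs(C')$). The repair is simple: since $\sigma_c$ commutes with $h$ (disjoint channels), write $C=\sigma_c\circ C'$ instead. Then $\outputs(C)=\sigma_c(\outputs(C'))$, so $|\outputs(C)|\leq|\outputs(C')|$ with equality exactly when every $y\in\outputs(C')$ already has $y_{c_1}\leq y_{c_2}$; because $h$ does not touch $c_1,c_2$, this pushes back to $g(\BB^n)$, and your case analysis on channel types then goes through unchanged. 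For the first-layer sub-case your inclusion $\outputs(C)\subseteq\outputs(C')$ \emph{is} valid (here $\sigma_c$ acts before $L_2$ and the relevant image really is $g(\BB^n)\subseteq g'(\BB^n)$), so only the strictness witness remains --- and that is exactly the ``slightly delicate case distinction'' you already flag.
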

\begin{proof}
  The proof is based on the word representation of the networks.  If $C$
  is a redundant net, then the sentence $\word(C)$ contains the word
  $\sent{12_c}$.  Removing one occurrence of this word yields a
  sentence corresponding to a network with $n-2$ channels.  This
  construction is reversible, so there are $|R(G_{n-2})|$ words
  corresponding to redundant networks on $n$ channels.
\end{proof}

In order to characterize saturated networks syntactically, we adopt
the notion of a pattern.  A \emph{pattern} $P$ is a partially
specified network: it is a set of channels connected by comparators,
but it may also include ``external'' comparators (represented as a
singleton node) that are connected to one channel in $P$ and one
channel not in $P$.
A comparator network $C$ contains a pattern $P$ of depth~$d$
on $m$ channels if there are a depth-$d$ prefix $C_1$ of $C$ and distinct
channels $c_1,\ldots,c_m$ of $C_1$ such that:
(i)~if $P$ contains a comparator between channels $i$ and $j$
at layer~$1\leq k\leq d$, then $C_1$ contains a comparator between
channels $c_i$ and $c_j$ at layer~$k$; (ii)~if $P$ contains an
external comparator touching channel $i$ at layer~$1\leq k\leq d$,
then $C_1$ contains a comparator between channel $c_i$ and a channel
$c\not\in\{c_1,\ldots,c_m\}$ at layer~$k$; (iii)~$C_1$ contains no
other comparators connecting to or between channels $c_1,\ldots,c_m$.

Figure~\ref{fig:patterns} depicts two patterns~$(a)$ and~$(b)$ and two
networks~$(c)$ and~($d)$.
The depth~$2$, $3$-channel pattern depicted in~$(a)$ occurs in network~$(c)$
but not in~$(d)$, while the pattern in~$(b)$ does not occur in
either network~$(c)$ or~$(d)$: its third channel is never used, while
all channels of~$(c)$ and~$(d)$ are used in the first two layers.

\begin{figure}[hb]
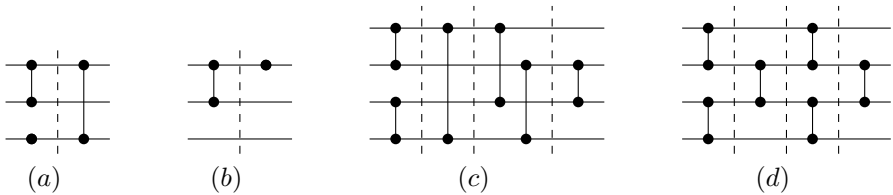

  \begin{tabular}{cccc}
    \begin{sortingnetwork}{3}{0.7}
      \addcomparator23
      \addcomparator11
      \nextlayer
      \addcomparator13
    \end{sortingnetwork}
    &
    \begin{sortingnetwork}{3}{0.7}
      \addcomparator23
      \nextlayer
      \addcomparator33
    \end{sortingnetwork}
    &
    \begin{sortingnetwork}{4}{0.7}
      \addcomparator12
      \addcomparator34
      \nextlayer
      \addcomparator14
      \nextlayer
      \addcomparator24
      \addlayer
      \addcomparator13
      \nextlayer
      \addcomparator23
    \end{sortingnetwork}
    &
    \begin{sortingnetwork}{4}{0.7}
      \addcomparator12
      \addcomparator34
      \nextlayer
      \addcomparator23
      \nextlayer
      \addcomparator12
      \addcomparator34
      \nextlayer
      \addcomparator23
    \end{sortingnetwork}
    \\
    $(a)$ & $(b)$ & $(c)$ & $(d)$
  \end{tabular}

  \caption{Two patterns~$(a,b)$ and two networks~$(c,d)$.}
  \label{fig:patterns}
\end{figure}
\begin{theorem}
  \label{thm:sat-char}
  Let $C$ be a saturated two-layer network.
  Then $C$ contains none of the two-layer patterns in Figure~\ref{fig:allpatterns}.
\end{theorem}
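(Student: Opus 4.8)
The plan is to prove the contrapositive: if $C$ is a two-layer network that contains one of the forbidden patterns in Figure~\ref{fig:allpatterns}, then $C$ is not saturated, i.e.\ either $C$ is redundant or there is a comparator we can add to $C$ whose addition shrinks (up to permutation) the output set. Since saturation is a statement about $\outputs(C)$ and the effect of adding a comparator, and since a pattern occurrence is by definition localized to a set of channels $c_1,\dots,c_m$ forming a connected component (or part of one) in the first two layers, the argument is naturally \emph{local}: we analyze the restriction of any input to the channels touched by the pattern, show that adding the indicated comparator on those channels does not enlarge the set of value-tuples that can appear there, and observe that the rest of the network is unaffected. Concretely, for each pattern I would exhibit the comparator $g$ to be added (its two endpoints being specific channels among $c_1,\dots,c_m$, possibly one fresh/free channel in the ``external'' case) and a permutation $\pi$ witnessing $\outputs(C;\{g\})\subseteq\pi(\outputs(C))$; by the definition of saturated this already shows $C$ is not saturated. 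The redundant case (a component whose word is $\sent{12_c}$, i.e.\ two channels joined by a comparator in \emph{both} layers) is one of the patterns and is handled by Theorem~\ref{thm:interesting2}'s observation directly.

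The key steps, in order: (1)~Enumerate the patterns in Figure~\ref{fig:allpatterns} and, for each, name the channels it touches and the role of each (min-channel, max-channel, free channel, second-layer partner). (2)~For each pattern, do the finite Boolean case analysis: since a two-layer prefix on $m$ channels has at most $2^m$ Boolean inputs and these patterns are small (at most a handful of channels), enumerate the possible value-vectors on those channels after layer~2, both with and without the added comparator $g$, and check that the ``with $g$'' set is contained in a permuted copy of the ``without $g$'' set. The permutation will typically just swap the two channels of $g$ on the input side, or be the identity, so this is routine but must be done pattern-by-pattern. (3)~Lift the local containment to the whole network: because the pattern's channels $c_1,\dots,c_m$ have no other comparators in the first two layers (condition~(iii) in the definition of ``contains a pattern''), the behaviour of $C$ on every other channel is independent of what happens on $c_1,\dots,c_m$, so a local output containment $\outputs(C;\{g\})|_{c_1,\dots,c_m}\subseteq\sigma(\outputs(C)|_{c_1,\dots,c_m})$ extends to a global one with $\pi$ acting as $\sigma$ on $\{c_1,\dots,c_m\}$ and as the identity elsewhere. (4)~Conclude via Lemma~\ref{lemma:symsubset} / the definition of saturated that $C$ is not saturated, completing the contrapositive.

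The main obstacle is step~(2): getting the permutation witnesses exactly right for the patterns where the added comparator straddles two \emph{different} connected components, or where it involves a free channel that was untouched in layer~2. In those cases the ``obvious'' candidate permutation (swap the endpoints of $g$) need not work, and one has to argue more carefully that adding $g$ can only merge indistinguishable output profiles rather than create genuinely new ones — essentially the same phenomenon that makes $\sent{12_c}$-components redundant, but one layer deeper and with more channels in play. I would organize step~(2) so that the cross-component and free-channel patterns are treated last, reusing the single-component analysis as a building block. A secondary, purely bookkeeping concern is making sure the enumeration of Figure~\ref{fig:allpatterns} is exhaustive for the claim as stated; since the theorem only asserts that saturated networks \emph{avoid} these patterns (not a converse), we merely need each listed pattern to force non-saturation, so no completeness argument about the pattern list itself is required here.
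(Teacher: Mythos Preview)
Your high-level contrapositive plan is right, and the pattern-by-pattern case split is exactly what the paper does. But Step~(3) has a real gap. You invoke condition~(iii) to claim that the pattern's channels $c_1,\dots,c_m$ are isolated from the rest of the network, so a local output-containment lifts globally. That is false for patterns~$(1a)$, $(1b)$ and~$(2)$: these patterns carry \emph{external} comparators at layer~$2$, which by definition connect some $c_i$ to a channel \emph{outside} $\{c_1,\dots,c_m\}$. Condition~(iii) only forbids comparators \emph{other than those in the pattern}; it does not forbid the external ones. So changing the values on the pattern's channels does change what the external partner channels see at layer~$2$, and your product-structure lifting (``$\sigma$ on $\{c_1,\dots,c_m\}$, identity elsewhere'') collapses. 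Your finite enumeration in Step~(2) over $2^m$ inputs cannot capture the interaction with the unknown partner channel, whose identity and layer-$1$ behaviour you do not control.

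The paper sidesteps this by arguing globally rather than locally. For each pattern it names the comparator $(i,j)$ to add at layer~$2$, takes an arbitrary $\vec x$ with $C'(\vec x)\neq C(\vec x)$, deduces from the pattern structure what the layer-$1$ values on $c_1,\dots,c_m$ must be, and then exhibits an input swap $\vec x\mapsto\vec x'$ (swapping specific \emph{input} coordinates, not output coordinates) such that $C(\vec x')=C'(\vec x)$. The swap is chosen so that after layer~$1$ the values on the channels carrying external comparators are unchanged, hence the external partners see the same thing and the rest of the network is unaffected. That is the idea your proposal is missing; once you have it, the permutation $\pi$ is either the identity or a single transposition depending on whether $(i,j)$ comes out standard or reversed. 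A minor separate point: the redundant component $\sent{12_c}$ is not among the patterns in Figure~\ref{fig:allpatterns}; non-redundancy is already part of the hypothesis ``saturated'', so there is no redundant case to handle here.
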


\begin{figure}[!hb]
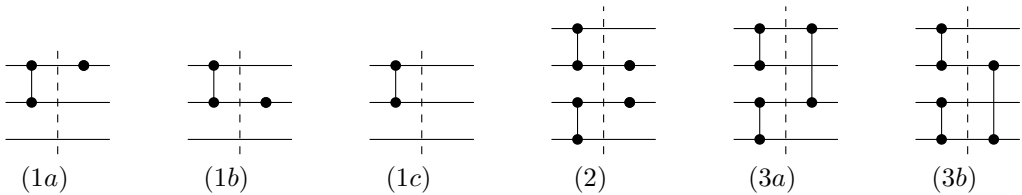

  \begin{tabular}{cccccc}
    \begin{sortingnetwork}{3}{0.7}
      \addcomparator23
      \nextlayer
      \addcomparator33
    \end{sortingnetwork}
    &
    \begin{sortingnetwork}{3}{0.7}
      \addcomparator23
      \nextlayer
      \addcomparator22
    \end{sortingnetwork}
    &
    \begin{sortingnetwork}{3}{0.7}
      \addcomparator23
      \nextlayer
    \end{sortingnetwork}
    &
    \begin{sortingnetwork}{4}{0.7}
      \addcomparator12
      \addcomparator34
      \nextlayer
      \addcomparator22
      \addcomparator33
    \end{sortingnetwork}
    &
    \begin{sortingnetwork}{4}{0.7}
      \addcomparator12
      \addcomparator34
      \nextlayer
      \addcomparator24
    \end{sortingnetwork}
    &
    \begin{sortingnetwork}{4}{0.7}
      \addcomparator12
      \addcomparator34
      \nextlayer
      \addcomparator13
    \end{sortingnetwork}
    \\
    $(1a)$ & $(1b)$ & $(1c)$ & $(2)$ & $(3a)$ & $(3b)$
  \end{tabular}
  \caption{Patterns forbidden in a saturated two-layer network.}
  \label{fig:allpatterns}
\end{figure}
\begin{proof}
  In all cases we show how to find $i$ and $j$
  such that $\outputs(C;(i,j))\subseteq\outputs(C)$ whenever $C$ contains
  one of the patterns in Figure~\ref{fig:allpatterns}.
  Depending on how the pattern is embedded in~$C$, $(i,j)$ may be a generalized
  comparator; in that case,
  $\outputs(C;(j,i))\subseteq\outputs(\pi(C))$ for $\pi=(i\,j)$.
  \begin{enumerate}
  \item Assume by contradiction that $C$ includes the pattern~$(1a)$
    and let the channels corresponding to those in the pattern be
    $a$, $b$ and~$c$, from top to bottom.
    Define $C'=C;(c,b)$ and let $\vec x\in\BB^n$.
    If $C'(\vec x)\neq C(\vec x)$, then $C(\vec x,1,c)=1$ and $C(\vec x,1,b)=0$.
    Since $b$ is a max-channel, this means that $C(\vec x,0,a)=C(\vec x,0,b)=0$
    and $C(\vec x,0,c)=1$.
    Then $C(\vec x')=C'(\vec x)$ for the input $\vec x'$ obtained from
    $\vec x$ by exchanging the values in positions~$c$ and~$b$.
    Therefore $\outputs(C')\subseteq\outputs(C)$, contradicting the fact
    that $C$ is saturated.

    Case~$(1b)$ is similar, adding a comparator $(a,c)$, and either
    construction applies to case~$(1c)$.

  \item Assume by contradiction that $C$ includes the pattern~$(2)$,
    and let the channels corresponding to those in the pattern
    be $a$, $b$, $c$ and~$d$, from top to bottom.
    Define $C'=C;(a,d)$, and let $\vec x\in\BB^n$.
    If $C'(\vec x)\neq C(\vec x)$, then $C(\vec x,1,a)=1$ and $C(\vec x,1,d)=0$.
    Since $a$ is a min-channel and $d$ is a max-channel, this means
    that $C(\vec x,0,a)=(\vec x,0,b)=1$ and $(\vec x,0,c)=(\vec x,0,d)=0$.
    Then $C(\vec x')=C'(\vec x)$ for the input $\vec x'$ obtained from
    $\vec x$ by exchanging the values in positions~$a$ and~$d$.
    Therefore $\outputs(C')\subseteq\outputs(C)$, contradicting the fact
    that $C$ is saturated.

  \item Assume by contradiction that $C$ includes the pattern~$(3a)$, and
    let the channels corresponding to those in the pattern be $a$, $b$, $c$
    and~$d$, from top to bottom.
    Define $C'=C;(b,d)$, and let $\vec x\in\BB^n$.
    If $C'(\vec x)\neq C(\vec x)$, then $C(\vec x,1,b)=1$ and $C(\vec x,1,d)=0$.
    Then $C(\vec x')=C'(\vec x)$ for the input $\vec x'$ obtained from
    $\vec x$ by exchanging the values in positions~$a$ and~$b$ with the
    values in positions~$c$ and~$d$, respectively.
    Note that this will permute $C(\vec x,1,a)$ and $C(\vec x,1,c)$,
    but it will not affect the final values on channels $a$ and $c$.
    Therefore $\outputs(C')\subseteq\outputs(C)$, contradicting the fact that
    $C$ is saturated.

    Case~$(3b)$ is similar.
  \end{enumerate}
  In all three cases, it is straightforward to verify that the inclusion
  $\outputs(C')\subseteq\outputs(C)$ is strict.
\end{proof}

In fact, the patterns in Figure~\ref{fig:allpatterns} are actually
\emph{all} of the patterns that make a comparator network non-saturated.
\begin{theorem}
  \label{thm:sat-thm}
  If $C$ is a non-redundant two-layer comparator network on
  $n$ channels containing none of the patterns in
  Figure~\ref{fig:allpatterns}, then $C$ is saturated.
\end{theorem}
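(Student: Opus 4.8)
The plan is to establish the contrapositive: if $C$ is a non-redundant two-layer comparator network that is \emph{not} saturated, then $C$ contains one of the six patterns of Figure~\ref{fig:allpatterns}. Because $C$ is non-redundant, failing to be saturated means that some comparator can be appended, producing $C'=C;(i,j)$ with $\outputs(C')\subseteq\pi(\outputs(C))$ for some permutation $\pi$. Applying a permutation to $C$ preserves non-redundancy, pattern containment and subsumption (cf.\ Lemma~\ref{lemma:symsubset}), so I may take the added comparator to be standard ($i<j$); and since appending $(i,j)$ to layer~$2$ --- when both channels are free there --- or in a fresh third layer gives the same output set, there is essentially a single notion of ``adding $(i,j)$''. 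The useful description is $\outputs(C')=\{\sigma_{ij}(v)\mid v\in\outputs(C)\}$, where $\sigma_{ij}$ swaps coordinates $i$ and $j$ exactly on those vectors carrying a $\sent 1$ on channel $i$ and a $\sent 0$ on channel $j$.

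The structural backbone is that, since the first layer of $C$ is the maximal layer $F_n$, the network is a disjoint union of connected components --- each a head, stick or cycle in the sense of Section~\ref{sec:prefix} --- and $\outputs(C)$ is the direct product of the output sets of these components. I would first dispatch the degenerate case in which $\sigma_{ij}$ acts trivially, i.e.\ $\outputs(C')=\outputs(C)$: then channels $i$ and $j$ lie in a common component $K$ and in every output of $K$ a $\sent 1$ on channel $i$ forces a $\sent 1$ on channel $j$. Inspecting the head/stick/cycle output sets, this can happen only when the sub-path of $K$ joining $i$ to $j$ is a single first-layer comparator with both channels unused at layer~$2$; since re-adding that comparator is redundant, non-redundancy of $C$ pins $K$ down to a bare $\sent{12}$-stick, and pairing it with a further channel of $C$ exhibits one of the patterns $(1a)$--$(1c)$ (according to whether that channel is unused, a min-channel carrying a layer-$2$ comparator, or a max-channel carrying one).

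In the principal case $\sigma_{ij}$ moves some output. If $i$ and $j$ lie in distinct components $K_i$ and $K_j$, then the output that is all-$\sent 1$ on $K_i$ and all-$\sent 0$ on $K_j$ is sent by $\sigma_{ij}$ to a vector of $\outputs(C')$ whose membership in $\pi(\outputs(C))$, through the product structure, forces $\pi$ to match $\{K_i,K_j\}$ against a pair of components of $C$ that agrees with their merger; chasing the resulting constraints pins the shapes of $K_i$ and $K_j$ near $i$ and $j$ and yields pattern~$(2)$ (or again a $(1\ast)$ pattern when one of the two is an isolated head channel). If $i$ and $j$ lie in the same component $K$, the product structure lets me replace $\pi$ by one that stabilises the channel set of $K$, reducing to $\outputs(C_K;(i,j))\subseteq\rho(\outputs(C_K))$ for an induced permutation $\rho$ on $K$; a case analysis on the positions of $i$ and $j$ along the path or cycle of $K$ --- exploiting that $\sigma_{ij}$ alters only coordinates $i,j$, which forces $\rho$ to be essentially the identity on a bounded neighbourhood --- shows that the two min-channels (for~$(3a)$) or the two max-channels (for~$(3b)$) of two first-layer comparators of $K$ are joined at layer~$2$ with their remaining endpoints free at layer~$2$: precisely pattern~$(3a)$ or~$(3b)$.

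The step I expect to be the real obstacle is controlling the global permutation $\pi$, since a priori it may shuffle channels across components; the crux is to show that, after adjusting it, $\pi$ acts within the one or two components touched by $\{i,j\}$. I would obtain this by leveraging that $\outputs(C')$ and $\outputs(C)$ differ only through the coordinate pair $(i,j)$: projecting $\outputs(C')\subseteq\pi(\outputs(C))$ onto small coordinate sets forces equalities between projections of $\outputs(C)$ that, by the product structure and the explicit description of head/stick/cycle output sets, are realisable only for the component shapes occurring in the six patterns. A more syntactic alternative I would keep in reserve is to first translate ``$C$ avoids all six patterns'' into a grammar for the words of the components of $C$ (a refinement of the grammar in Equation~\eqref{eq:word}) and then verify saturation directly for the networks this grammar generates, trading the permutation bookkeeping for a check on word shapes.
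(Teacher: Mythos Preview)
Your approach is genuinely different from the paper's, and considerably more laborious. The paper does \emph{not} argue by contrapositive. It proceeds directly: assume $C$ avoids all six patterns and take any comparator $(i,j)$ that could be appended, so that $i$ and $j$ are both unused at layer~$2$. Absence of patterns $(1a)$--$(1c)$ forces $i$ and $j$ to be used at layer~$1$; absence of $(2)$ forces them to be both min-channels or both max-channels there; absence of $(3a)$/$(3b)$ forces their layer-$1$ partners not to be free at layer~$2$. With this local picture fixed, the paper does a short case split (four cases, up to a min/max symmetry) on whether those partners are min- or max-channels at layer~$2$, and in each case exhibits a single explicit input --- for instance $1^k11001^m$ in the first case --- whose image under $C'$ is shown, by a two-line argument about the first-layer comparator, not to lie in $\outputs(C)$.

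The gain is precisely that this sidesteps the obstacle you yourself flag as ``the real obstacle'': you never have to reason backward from an arbitrary inclusion $\outputs(C')\subseteq\pi(\outputs(C))$ and control an unknown permutation interacting with the component product structure. The structural constraints drop out of pattern-avoidance, and the witnesses make the non-inclusion concrete. Your ``reserve'' alternative --- translate pattern-avoidance into structural constraints on the words and verify saturation directly for the resulting networks --- is in spirit what the paper does, and is the route you should take. One caveat worth noting if you do: the paper's witness arguments, as written, establish $C'(\vec x)\notin\outputs(C)$ but do not explicitly treat $C'(\vec x)\notin\pi(\outputs(C))$ for arbitrary $\pi$; so even on the direct route you will still want to say a word about why the constructed witnesses survive permutation.
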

\begin{proof}
  Let $C$ be a non-redundant two-layer comparator network,
  and assume that the second layer of $C$ has at least two unused channels
  (otherwise there is nothing to prove).
  If one of these channels were unused at layer~1, then the network would
  contain pattern~$(1a)$, $(1b)$ or~$(1c)$.
  Thus, the two channels are necessarily
  used in a comparator in layer~$1$ by Theorem~\ref{thm:sat-char}.
  From the same theorem, they must be both min-channels or both max-channels,
  otherwise the network would contain pattern~$(2)$;
  and the channels they are connected to at layer~$1$ cannot be connected at
  layer~$2$, otherwise the network would contain pattern~$(3a)$ or~$(3b)$.

  There are eight different cases to consider.
  We detail the cases where the two unused channels are max channels.
  Assume that the four relevant channels are adjacent, labeled $a$, $b$, $c$
  and $d$ from top to bottom, with first-layer comparators $(a,b)$ and $(c,d)$.
  This does not lose generality, but makes the presentation simpler: for the
  general case, just apply the permutation that brings any network to this
  particular form 
  to the reasoning below.
  This transformation can always be done preserving the
  standard comparator network form.

  Let $k$ be the number of channels above~$a$ and $m$ be the number of
  channels below~$d$.
  Let $C'$ be obtained from $C$ by adding the comparator $(b,d)$ at
  layer~$2$.
  The four possibilities depend on whether channels~$a$ and~$c$ are min- or
  max-channels at layer~$2$, and are represented in Figure~\ref{fig:sat-thm}.

  \begin{figure}[!ht]
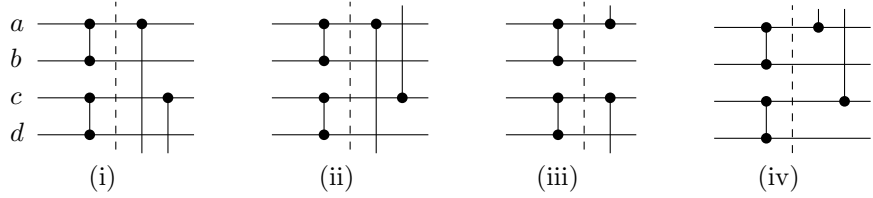

    \centering
    \begin{tabular}{cccc}
      \begin{sortingnetwork}{4}{0.7}
        \nodelabel{%
          \raisebox{-.6em}[0em][0em]{\makebox[0pt][l]{\hspace*{-2em}$d$}},%
          \raisebox{-.6em}[0em][0em]{\makebox[0pt][l]{\hspace*{-2em}$c$}},%
          \raisebox{-.6em}[0em][0em]{\makebox[0pt][l]{\hspace*{-2em}$b$}},%
          \raisebox{-.6em}[0em][0em]{\makebox[0pt][l]{\hspace*{-2em}$a$}}}
        \addcomparator{3}{4}
        \addcomparator{1}{2}
        \nextlayer
        \addhalfcomparator{4}{0.5}
        \addlayer
        \addhalfcomparator{2}{0.5}
      \end{sortingnetwork}
      &
      \begin{sortingnetwork}{4}{0.7}
        \addlayer
        \addcomparator{3}{4}
        \addcomparator{1}{2}
        \nextlayer
        \addhalfcomparator{4}{0.5}
        \addlayer
        \addhalfcomparator{2}{4.5}
      \end{sortingnetwork}
      &
      \begin{sortingnetwork}{4}{0.7}
        \addlayer
        \addcomparator{3}{4}
        \addcomparator{1}{2}
        \nextlayer
        \addhalfcomparator{4}{4.5}
        \addhalfcomparator{2}{0.5}
      \end{sortingnetwork}
      &
      \begin{sortingnetwork}{4}{0.7}
        \addlayer
        \addcomparator{3}{4}
        \addcomparator{1}{2}
        \nextlayer
        \addhalfcomparator{4}{4.5}
        \addlayer
        \addhalfcomparator{2}{4.5}
      \end{sortingnetwork}
      \\
      (i) & (ii) & (iii) & (iv)
    \end{tabular}

    \caption{Possible cases for channels~$a$ and~$c$ in the proof of
      Theorem~\ref{thm:sat-thm}.
      To obtain $C'$, add a comparator between channels~$b$ and $d$.}
    \label{fig:sat-thm}
  \end{figure}

  \begin{description}
  \item[(Min/min)] Channel~$a$ and $b$ are min-channels at layer~$2$, so the
    network looks as in Case~(i) of Figure~\ref{fig:sat-thm}.

    Consider the input string $1^k11001^m$.
    Since $C'(1^k11001^m)=1^k10011^m$, we have $1^k10011^m\in\outputs(C')$.
    We now show that $1^k10011^m\not\in\outputs(C)$.
    Because of the comparator~$(a,b)$ at layer~$1$, to obtain the $0$ on
    channel~$b$ the input string would necessarily have a~$0$ on channel~$a$.
    But then the output would also have a~$0$ on channel~$a$, hence it could
    not be $1^k10011^m$.

  \item[(Min/max)] The network now looks as in Case~(ii) of
    Figure~\ref{fig:sat-thm}, and the argument is similar.
    Choose an input $\vec x$ such that $C'(\vec x)$ has $1001$ on channels
    $a$--$d$; this is possible by placing $0$s on the two positions that may
    be compared to~$c$ at layer~$2$ and $1$s on the two positions that may be
    compared to~$a$ at layer~$2$.
    Then the argument from the previous case again shows that
    $C'(\vec x)\not\in\outputs(C)$.

  \item[(Max/min)] The network now looks as in Case~(iii) of
    Figure~\ref{fig:sat-thm}.  This can be reduced to the previous case
    by interchanging $a$ and $b$ with $c$ and $d$, respectively.

  \item[(Max/max)] The network now looks as in Case~(iv) of
    Figure~\ref{fig:sat-thm} and the reasoning is a bit more involved.

    Consider once more the input string $1^k11001^m$.
    Channel~$c$ is now a second-layer max-channel connected to some
    channel~$j\leq k$, and~$C'(1^k11001^m)=1^{j-1}01^{k-j}10111^m$.
    In order to obtain this output with network~$C$, it is again necessary to
    have inputs $0$ on channels~$a$ and~$b$; but since there are only two $0$s
    in the output, this means that channel~$a$ must also be connected to
    channel $j$ on layer~$2$, which is impossible.
  \end{description}

  The cases where $a$ and $c$ are the unused (min) channels are similar.
  \begin{description}
  \item[(Min/min)] Similar to the case (Max/max) above, using the input string
    $0^k11000^m$ and analyzing the result on channel~$c$.

  \item[(Min/max)] Similar to the case (Min/max) above, using an input string
    that produces an output of the form $v1001w$, and analyzing the result on
    channel~$c$.

  \item[(Max/min)] This can be reduced to the previous case
    by interchanging $a$ and $b$ with $c$ and $d$, respectively.

  \item[(Max/max)] Similar to the case (Min/min) above, using the input string
    $0^k11000^m$, and analyzing the result on channel $c$.\qedhere
  \end{description}
\end{proof}

As a corollary of Theorems~\ref{thm:sat-char} and~\ref{thm:sat-thm}, we
show that we can always assume the first two layers of a sorting
network to be saturated.

\begin{corollary}\label{corollary:saturated}
  Let $L_1$ and $L_2$ be layers on $n$ channels such that $L_1$ is
  maximal.  Then there is a layer $S$ such that
  $\outputs(L_1;S)\subseteq \outputs(L_1;L_2)$ and $L_1;S$ is
  saturated.
\end{corollary}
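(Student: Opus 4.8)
The plan is to prove Corollary~\ref{corollary:saturated} by repeatedly saturating the two-layer prefix until no further improvement is possible, using Theorems~\ref{thm:sat-char} and~\ref{thm:sat-thm} to recognise when we are done. Concretely, starting from the given network $C_0 = L_1;L_2$, I would build a finite chain $C_0, C_1, C_2, \ldots$ of two-layer networks, each on $n$ channels with first layer (equivalent to) a maximal layer, such that $\outputs(C_{i+1}) \subseteq \pi_i(\outputs(C_i))$ for some permutation $\pi_i$, and such that the chain strictly decreases in some well-founded measure. Since the relation $\preceq$ composes (if $\outputs(C_b)\subseteq\pi(\outputs(C_a))$ and $\outputs(C_c)\subseteq\sigma(\outputs(C_b))$ then $\outputs(C_c)\subseteq(\sigma\circ\pi)(\outputs(C_a))$), the endpoint $C_m$ of the chain satisfies $\outputs(C_m)\subseteq\pi(\outputs(L_1;L_2))$ for a single permutation $\pi$; applying $\pi^{-1}$ and noting that the property of being saturated is permutation-invariant, we obtain the desired layer $S$ with $L_1$ as stated (after possibly untangling, as in Lemma~\ref{lemma:symsubset}, to keep the first layer literally equal to $L_1$).

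The step structure I would carry out is: (1)~first reduce to the non-redundant case — if $C_i$ is redundant, remove the offending comparator (one of a pair of comparators between the same channels at consecutive layers, which is exactly the \sent{12_c} pattern of Theorem~\ref{thm:interesting2}) to get $C_{i+1}$ with a strictly smaller number of comparators and $\outputs(C_{i+1})=\pi(\outputs(C_i))$; (2)~if $C_i$ is non-redundant but contains one of the forbidden patterns $(1a)$–$(3b)$ of Figure~\ref{fig:allpatterns}, apply the construction in the proof of Theorem~\ref{thm:sat-char} to add a comparator $(i,j)$ (or its reverse, absorbing the transposition into a permutation), obtaining $C_{i+1}$ with $\outputs(C_{i+1})\subsetneq\pi(\outputs(C_i))$; the proof of Theorem~\ref{thm:sat-char} explicitly notes the inclusion is strict; (3)~if $C_i$ is non-redundant and contains none of the forbidden patterns, then $C_i$ is saturated by Theorem~\ref{thm:sat-thm}, and we stop with $C_m = C_i$.

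For termination I would use a lexicographic measure: first the size $|\outputs(C_i)|$, which is a positive integer and never increases (it decreases strictly in step~(2) after pulling back the permutation, since the permuted output set has the same cardinality), and as a tie-breaker the number of comparators, which strictly decreases in step~(1) and cannot increase without bound because a two-layer network on $n$ channels has at most $n$ comparators. Since both components are bounded below, the chain is finite, so the procedure halts at a saturated $C_m$. The main obstacle — though it is more bookkeeping than genuine difficulty — is the permutation management: each saturation step may only decrease $\outputs$ up to a permutation, and the pattern-based additions of Theorem~\ref{thm:sat-char} may introduce generalized comparators, so I need to carefully compose the permutations along the chain and invoke the untangling observation from page~\pageref{label:permutation} (and Lemma~\ref{lemma:symsubset}) exactly once at the end to recover an honest layer $S$ with first layer $L_1$. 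A secondary point to check is that each $C_{i+1}$ still has a maximal first layer, which holds because we never touch the first layer after the initial reduction and $L_1$ is maximal by hypothesis.
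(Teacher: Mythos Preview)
Your approach is essentially the same as the paper's: reduce to the non-redundant case, then iteratively eliminate the forbidden patterns of Figure~\ref{fig:allpatterns} using the comparator additions from the proof of Theorem~\ref{thm:sat-char}, and conclude saturation via Theorem~\ref{thm:sat-thm}. You are considerably more explicit than the paper about termination (your lexicographic measure on $(|\outputs|,\#\text{comparators})$ is correct) and about the permutation bookkeeping arising from possibly-generalized added comparators; the paper's proof glosses over both points and even leaves the sentence asserting the output inclusion unfinished.
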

\begin{proof}
  By removing comparators if necessary, we can assume that $L_1;L_2$
  is nonredundant.
  If $L_1;L_2$ is not saturated, then, by Theorem~\ref{thm:sat-thm},
  it must contain some of the patterns from
  Figure~\ref{fig:allpatterns}. Now, for each pattern occurring in $L_1;L_2$, the
  argument in the proof of Theorem~\ref{thm:sat-char} tells us how to
  eliminate it by adding comparators to $L_2$. Denote the
  obtained layer by $S$. The argument in the proof of
  Theorem~\ref{thm:sat-char} further ensures that

  By construction, the network $L_1;S$ does not contain any of the
  patterns from Figure~\ref{fig:allpatterns}, and so, by
  Theorem~\ref{thm:sat-thm}, it is saturated.
\end{proof}

The above Corollary together with Lemma~\ref{lemma:symsubset} imply
that if there is a sorting network of a given size, then there is one
whose first two layers are saturated, i.e., the set of all saturated two-layer networks
is a complete set of filters.

\begin{theorem}
\label{theorem:saturatedfilter}
For every $n$, both the set $S_n$ of two-layer saturated networks and the
set of representatives of its
equivalence classes $R(S_n)$ are complete sets of filters on $n$
channels.
\end{theorem}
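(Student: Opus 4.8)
The plan is to combine Corollary~\ref{corollary:saturated} with Lemma~\ref{lemma:symsubset} and Lemma~\ref{lem:parberry}. First I would fix an arbitrary optimal-depth sorting network $C$ on $n$ channels, say of depth $d = T(n)$. By Lemma~\ref{lem:parberry} we may assume that its first layer is the maximal layer $F_n$; write $C = F_n; L_2; S$ where $L_2$ is its second layer and $S$ is the remaining $(d-2)$-layer suffix. Apply Corollary~\ref{corollary:saturated} to $L_1 = F_n$ (which is maximal) and $L_2$: this yields a layer $S^\star$ such that $F_n; S^\star$ is saturated and $\outputs(F_n; S^\star) \subseteq \outputs(F_n; L_2)$.

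Next I would invoke Lemma~\ref{lemma:symsubset} (or already Lemma~\ref{lemma:subset}, since here the inclusion of output sets holds on the nose, with $\pi$ the identity) with $P = F_n; L_2$, $Q = F_n; S^\star$, which have the same depth~$2$. It gives a sorting network of the form $(F_n; S^\star); S'$ of depth $d$. Since $F_n; S^\star \in S_n$, this witnesses that $S_n$ is a complete set of filters in the sense of Definition~\ref{def:complete}: there is an optimal-depth sorting network of the form $C_0; C_0'$ with $C_0 = F_n; S^\star \in S_n$. For the statement about $R(S_n)$: by Lemma~\ref{lem:equiv} the saturated two-layer network $F_n; S^\star$ is equivalent ($\approx$) to its chosen class representative $C_0'' \in R(S_n)$, i.e.\ $\GG(C_0) \approx \GG(C_0'')$ with both of depth~$2$; Lemma~\ref{lem:isomorphism} then converts the sorting network $C_0; S'$ into a sorting network $C_0''; S''$ of the same depth, so $R(S_n)$ is a complete set of filters as well.

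The only genuinely delicate point is that Corollary~\ref{corollary:saturated} be applied correctly: its proof works by repeatedly eliminating the forbidden patterns of Figure~\ref{fig:allpatterns} by \emph{adding} comparators to the second layer, each addition only shrinking the output set, and Theorems~\ref{thm:sat-char} and~\ref{thm:sat-thm} guarantee that after removing all patterns the result is saturated. So no separate work is needed here beyond noting that saturation as produced by that corollary is exactly membership in $S_n$ (the first layer is $F_n$, which is maximal, matching the convention behind $S_n$). I expect essentially no obstacle — the theorem is a packaging of the preceding corollary and lemmas — the main care being to route through Lemma~\ref{lem:isomorphism} for the $R(S_n)$ half so that replacing a network by an equivalent representative is justified.
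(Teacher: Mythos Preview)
Your proposal is correct and follows essentially the same route as the paper's proof, which simply cites Corollary~\ref{corollary:saturated} together with Lemma~\ref{lemma:symsubset} for $S_n$, and then Lemma~\ref{lem:isomorphism} for $R(S_n)$. You have merely unpacked the argument in more detail (including the use of Lemma~\ref{lem:parberry} to ensure a maximal first layer before invoking Corollary~\ref{corollary:saturated}, and noting that Lemma~\ref{lemma:subset} already suffices since no permutation is needed); the reference to Lemma~\ref{lem:equiv} in the $R(S_n)$ step is harmless but unnecessary, as equivalence to a representative is immediate from the definition of $R(S_n)$.
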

\begin{proof}
  Corollary~\ref{corollary:saturated} and Lemma~\ref{lemma:symsubset}
  imply that $S_n$ is a complete set of
  filters. By Lemma~\ref{lem:isomorphism}, $R(S_n)$ is
  also a complete set of filters.
\end{proof}

We conjecture that in fact the following result holds.
\begin{conjecture}\label{conjecture}
  If networks $C_1$ and $C_2$ on $n$ channels are both saturated and
  non-equivalent, then $\outputs(C_1)\not\subseteq(C_2)$ for any permutation $\pi$.
\end{conjecture}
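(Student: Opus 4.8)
The statement is a conjecture, so what follows is a plan of attack rather than a proof. The natural strategy is to extract from the output set $\outputs(C)$ of a saturated two-layer network a \emph{permutation-invariant} statistic that is both complete --- it recovers the word representation $\word(C)$ --- and \emph{monotone}, so that if $\outputs(C_1)\subseteq\pi(\outputs(C_2))$ for some permutation $\pi$ then the two statistics are comparable, with equality only when $C_1\approx C_2$. The first natural candidate is the \emph{weight enumerator} $W_C(z)=\sum_k a_k(C)\,z^k$, where $a_k(C)$ is the number of elements of $\outputs(C)$ with exactly $k$ entries equal to $\sent 1$: coordinate permutations preserve Hamming weights, so $W_C$ is invariant, and $\outputs(C_1)\subseteq\pi(\outputs(C_2))$ forces $a_k(C_1)\le a_k(C_2)$ for all $k$. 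Crucially, both $\outputs$ and $W$ \emph{factor over connected components}: if $C$ has components $C^{(1)},\dots,C^{(r)}$ on disjoint channel sets, then up to a permutation $\outputs(C)=\prod_i\outputs(C^{(i)})$ and $W_C=\prod_i W_{C^{(i)}}$. Using the pattern characterisation of Theorems~\ref{thm:sat-char} and~\ref{thm:sat-thm}, one checks that a saturated two-layer network is a disjoint union of ``cycles'', together with at most one ``head'' component when $n$ is odd; so it suffices to understand $\outputs$ for these two connected building blocks.

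I would then split the argument into two reductions. First, reduce the relation $\outputs(C_1)\subseteq\pi(\outputs(C_2))$ to a component-by-component statement: show that such an inclusion forces a bijection between the components of $C_1$ and of $C_2$ under which the corresponding output sets are again related by inclusion-after-permutation. This needs an ``indecomposability'' lemma --- that the output set of a connected saturated two-layer network, viewed as a set system on its channels, is not a nontrivial direct product (and, more, that its weight enumerator is not spuriously a product of enumerators of smaller connected networks). The small example $C=\{(1,2),(3,4)\};\{(1,3),(2,4)\}$ is encouraging here: $\outputs(C)$ is not a nontrivial direct product on the channels even though $W_C(z)=(z^2+1)(z^2+z+1)$ factors, so the obstruction to decomposition is visible only at the set-system level. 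Second, reduce to connected networks: for connected saturated $C_1,C_2$ on the same number $m$ of channels, show $\outputs(C_1)\subseteq\pi(\outputs(C_2))$ implies $C_1\approx C_2$ --- equivalently, that the words of distinct saturated cycles (resp.\ heads) on $m$ channels have weight enumerators that are incomparable coefficient by coefficient, and that the enumerator determines the word.

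For the connected case one computes $\outputs(\net(w))$ directly from the word $w=w_1\cdots w_{2m}$ by grouping inputs according to the value $a_i\in\{0,1,2\}$ of each first-layer pair and tracking how the alternating second-layer comparators propagate these pair-values around the cycle; from this one reads off $W_{\net(w)}$. The hope is that $W$ already determines $w$ up to the rotations and reflections folded into the definition of $\word$, and that distinct cycle-words are pairwise incomparable. If the weight enumerator proves too coarse for large $m$ --- a real possibility --- one would refine it to a bivariate enumerator (e.g.\ tracking weight together with the number of $\sent{01}$ occurrences, or the number of inversions relative to a fixed sorted target), and ultimately to the isomorphism type of the output set system, which is a complete invariant for equality-up-to-permutation; the price is heavier bookkeeping.

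The main obstacle is precisely this connected analysis: taming ``for any permutation $\pi$'' requires pinning down enough invariants of a cycle's output set and then carrying out the combinatorics needed to establish both completeness (the invariants recover the word) and the antichain property (no saturated cycle subsumes another of the same length). The component-decomposition step is a secondary obstacle, resting on the indecomposability lemma. Since the paper only conjectures this statement and explicitly notes that the optimality results do not depend on it, I would expect a complete proof to be a substantial standalone argument rather than a short consequence of the machinery developed so far.
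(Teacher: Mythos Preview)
The paper does not prove this statement: it is explicitly labeled a conjecture, with the remark that ``the general case remains open'' and that it ``has been verified experimentally for $n\leq 15$''. So there is no paper proof to compare against, and your decision to offer a plan of attack rather than a proof is appropriate.

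That said, your plan contains a structural error. You assert that, by Theorems~\ref{thm:sat-char} and~\ref{thm:sat-thm}, a saturated two-layer network is a disjoint union of cycles together with at most one head component when $n$ is odd. This is false: the grammar~\eqref{eq:satword} and the Corollary following Theorem~\ref{thm:sat-thm} allow Stick-words as well --- the special stick $\sent{12_s}$, and $\mathsf{eStick}$s or $\mathsf{oStick}$s of length at least~$6$ (length-$4$ sticks are ruled out, and sticks must begin and end with the same symbol). Any component-by-component reduction would therefore have to treat three connected building blocks, not two, and the ``connected case'' analysis must include sticks. This is not fatal to the overall strategy, but it does mean the case analysis you sketch is incomplete as stated.

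Beyond that, the plan is a plausible research direction but remains speculative at every step: whether the weight enumerator (or any refinement) is both complete and yields the antichain property, whether the indecomposability lemma holds, and whether inclusion-after-permutation really forces a component-to-component bijection are all open subquestions. Since the paper itself neither attempts nor needs a proof --- it stresses that the optimality results do not depend on the conjecture --- your honest assessment in the final paragraph is the right one.
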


Particular cases of Conjecture~\ref{conjecture} are implied by
Theorem~\ref{thm:sat-thm}, but the general case remains open. The
conjecture has been verified experimentally for $n\leq 15$.

The characterization of saturation given by Theorem~\ref{thm:sat-thm} is
straightforward to translate in terms of the word associated with a network.
\begin{corollary}
  Let $C$ be a two-layer network.
  Then $C$ is saturated if $w=\word(C)$ satisfies the following properties.
  \begin{enumerate}
  \item If $w$ contains \sent{0_h} or $\sent{12_s}$, then all other words
    in $w$ are cycles.
  \item No stick $w$ has length $4$.
  \item Every stick in $w$ begins and ends with the same symbol.
  \item If $w$ contains a head or stick ending with $x$, then every head or stick in $w$ ends with $x$, for $x\in\{\sent{1},\sent{2}\}$.
  \end{enumerate}
\end{corollary}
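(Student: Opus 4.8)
The plan is to deduce the corollary from Theorems~\ref{thm:sat-char} and~\ref{thm:sat-thm}, which together state that a non-redundant two-layer network is saturated exactly when it contains none of the six patterns of Figure~\ref{fig:allpatterns}. So, taking $C$ non-redundant --- equivalently, by the proof of Theorem~\ref{thm:interesting2}, that $w=\word(C)$ contains no two-letter Cycle-word \sent{12_c}, which is automatic once $w$ is read inside the language of Equation~\eqref{eq:word} --- it suffices to show that if $C$ contains one of the patterns $(1a)$--$(3b)$ then $w$ violates one of Properties~1--4. I would begin by fixing the ``dictionary'' between letters of a word and the local structure of a maximal path in a connected two-layer network: such a path alternates layer-$1$ and layer-$2$ edges; in its word, \sent{1} (resp.~\sent{2}) marks a first-layer min-channel (resp.~max-channel) and \sent{0} marks the free channel; the two endpoints of a Stick-word and the last channel of a Head-word are precisely the channels not used in layer~$2$; when a Stick-word has length $\ge 4$ the layer-$1$ partner of each of its endpoints is a channel used in layer~$2$; and a length-$2$ Stick-word \sent{12_s} is exactly a first-layer comparator neither of whose endpoints is used in layer~$2$.

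With the dictionary in place the six cases are short. If $C$ contains $(1a)$, $(1b)$ or $(1c)$, then the free channel is not used in layer~$2$, so \sent{0_h} occurs in $w$; furthermore there is a first-layer comparator $(c,c')$ with $c$ not used in layer~$2$, and by the dictionary (using that there is at most one Head-word) $c$ lies in a Stick-word, either of length $\ge 4$ (patterns $(1a)$, $(1b)$) or equal to \sent{12_s} (pattern $(1c)$); either way $w$ contains a non-cycle word besides \sent{0_h}, contradicting Property~1. If $C$ contains $(3a)$ (resp.~$(3b)$), then condition~(iii) of pattern containment forces the four channels involved to be a whole connected component, namely the path $c_1\!-\!c_2\!-\!c_4\!-\!c_3$ whose endpoints $c_1,c_3$ are the min-channels (resp.~max-channels) not used in layer~$2$; this component is a length-$4$ Stick-word \sent{1221_s} (resp.~\sent{2112_s}), contradicting Property~2. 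Finally, if $C$ contains $(2)$, its two first-layer comparators supply a channel that is an endpoint with letter \sent{1} and a channel that is an endpoint with letter \sent{2}, each lying in a Head-word or a Stick-word of length $\ge 4$; if the two channels lie in the same Stick-word, that word has length~$4$ (contradicting Property~2) or begins and ends with different symbols (contradicting Property~3); if they lie in different components, Property~3 makes ``the last symbol of a Stick-word'' independent of orientation, so $w$ then contains a head or stick ending in \sent{1} and another ending in \sent{2}, contradicting Property~4. Hence, under Properties~1--4, $C$ contains no forbidden pattern and Theorem~\ref{thm:sat-thm} yields saturation.

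The step I expect to take the most care is pattern~$(2)$: one must enumerate how the four distinguished channels can sit as path endpoints versus interior nodes, rule out both needed endpoint-letters coming from the single Head-word, and --- the genuinely load-bearing point --- invoke Property~3 precisely so that Property~4's clause ``a head or stick ending with $x$'' becomes orientation-free and hence applicable. The remaining work is routine bookkeeping around the degenerate words (\sent{0_h}, length-$2$ sticks, short Cycle-words) and the observation, already noted, that \sent{12_c} must be excluded for Theorem~\ref{thm:sat-thm} to apply. It is worth remarking that Properties~1--4 are sufficient but not necessary --- e.g.\ the network with word \sent{12_s};\sent{12_s} is saturated yet fails Property~1 --- so the proof needs only the easy implication ``properties $\Rightarrow$ no pattern'' and no converse.
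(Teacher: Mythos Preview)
Your approach --- translating each forbidden pattern of Figure~\ref{fig:allpatterns} into a violation of one of Properties~1--4 and then invoking Theorem~\ref{thm:sat-thm} --- is precisely the paper's intended argument; the paper states the Corollary without proof, merely remarking that ``the characterization of saturation given by Theorem~\ref{thm:sat-thm} is straightforward to translate in terms of the word associated with a network.'' Your case analysis carries out this translation correctly (modulo having swapped the labels $(3a)$ and $(3b)$, which is immaterial since both yield a length-$4$ stick).

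One factual error in your closing remark: the network with word $\sent{12_s};\sent{12_s}$ --- that is, $L_1=\{(1,2),(3,4)\}$ and $L_2=\emptyset$ on four channels --- is \emph{not} saturated. Appending the comparator $(1,4)$ gives $\outputs(C')=\outputs(C)\setminus\{1100\}\subsetneq\outputs(C)$ (the input $1100$ is sent to $0101$, which was already an output), so saturation fails already with $\pi$ the identity. This is consistent with the paper's sentence-level constraint after grammar~\eqref{eq:satword}, which forbids two copies of $\sent{12_s}$. Your broader point that the Corollary only asserts sufficiency is correct in spirit, but this particular example does not witness non-necessity.
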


\noindent Thus, the set of saturated two-layer networks can be generated by using the
following restricted grammar.
\begin{align}
  \label{eq:satword}
  \mathsf{Word} &::= \mathsf{Head} \mid \mathsf{Stick} \mid \mathsf{Cycle}
  & \mathsf{Stick} &::= \sent{12} \mid \mathsf{eStick} \mid \mathsf{oStick} \\
  \nonumber\mathsf{Head} &::= \sent0 \mid \mathsf{eHead} \mid \mathsf{oHead}
  & \mathsf{eStick} &::= \sent{12}(\sent{12}+\sent{21})^+\sent{21} \\
  \nonumber\mathsf{eHead} &::= \sent0(\sent{12}+\sent{21})^\ast\sent{12}
  & \mathsf{oStick} &::= \sent{21}(\sent{12}+\sent{21})^+\sent{12} \\
  \nonumber\mathsf{oHead} &::= \sent0(\sent{12}+\sent{21})^\ast\sent{21}
  & \mathsf{Cycle} &::= \sent{12}(\sent{12}+\sent{21})^+
\end{align}

\noindent Furthermore, sentences are multi-sets $M$ such that:
\begin{itemize}
\item if $M$ contains the words \sent{0_h} or \sent{12_s},
  then all other elements of $M$ are cycles;
\item if $M$ contains an $\mathsf{eHead}$ or $\mathsf{eStick}$, then it
  contains no $\mathsf{oHead}$ or $\mathsf{oStick}$.
\end{itemize}
With these restrictions, generating all saturated networks for $n\leq
20$ can be done almost instantaneously.  The numbers $|S_n|$ of
saturated two-layer networks and $|R(S_n)|$ of equivalence classes
modulo permutation are given in rows two and four of
Table~\ref{tab:Sn}.

\begin{table}
  \[\begin{array}{c|r|r|r|r|r|r|r|r|r|r|r|r}
  n & \multicolumn1{c|}{3}
  & \multicolumn1{c|}{4}
  & \multicolumn1{c|}{5}
  & \multicolumn1{c|}{6}
  & \multicolumn1{c|}{7}
  & \multicolumn1{c|}{8}
  & \multicolumn1{c|}{9}
  & \multicolumn1{c|}{10}
  & \multicolumn1{c|}{11}
  & \multicolumn1{c|}{12}
  & \multicolumn1{c|}{13}
  & \multicolumn1{c}{14}  \\ \hline
  |G_n| & 4 & 10 & 26 & 76 & 232 & 764 & 2{,}620 & 9{,}496 & 35{,}696 & 140{,}152 & 568{,}504 & 2{,}390{,}480 \\
  |S_n| & 2 & 4 & 10 & 28 & 70 & 230 & 676 & 2{,}456 & 7{,}916 & 31{,}374 & 109{,}856 & 467{,}716 \\
  |R(G_n)| & 4 & 8 & 16 & 20 & 52 & 61 & 165 & 152 & 482 & 414 & 1{,}378 & 1{,}024 \\
  |R(S_n)| & 2 & 2 & 6 & 6 & 14 & 15 & 37 & 27 & 88 & 70 & 212 & 136 \\ 
  |R_n| & 1 & 2 & 4 & 5 & 8 & 12 & 22 & 21 & 48 & 50 & 117 & 94 \\ 
  \hline
  \end{array}\]
  \[\begin{array}{c|r|r|r|r|r}
  n
  & \multicolumn1{c|}{15}
  & \multicolumn1{c|}{16}
  & \multicolumn1{c|}{17}
  & \multicolumn1{c|}{18}
  & \multicolumn1{c}{19} \\ \hline
  |G_n| & 10{,}349{,}536 & 46{,}206{,}736 & 211{,}799{,}312 & 997{,}313{,}824 & 4{,}809{,}701{,}440 \\
  |S_n| & 1{,}759{,}422 & 7{,}968{,}204 & 31{,}922{,}840 & 152{,}664{,}200 & 646{,}888{,}154\\
  |R(G_n)| & 3{,}780 & 2{,}627 & 10{,}187 & 6{,}422 & 26{,}796 \\
  |R(S_n)| & 494 & 323 & 1{,}149 & 651 & 2{,}632 \\ 
  |R_n| & 262 & 211 & 609 & 411 & 1{,}367 \\ 
  \hline
  \end{array}\]

  \caption{Values of $|G_n|$, $|R(G_n)|$, $|S_n|$, $|R(S_n)|$ and $|R_n|$ for
    $n\leq 19$.}
  \label{tab:Sn}
\end{table}

\subsection{Reflections}

In the previous section, we have shown that it suffices to consider
only representatives of saturated networks, where two (saturated)
networks are equivalent if their corresponding graphs are
isomorphic. In this section, we extend the notion of equivalence by
noting that a (vertical) reflection of a sorting network is also a
sorting network.

Formally, the reflection of a comparator network $C$ on $n$ channels is the
network $C^R$ obtained from $C$ by replacing each comparator $(i,j)$ with
$(n-j+1,n-i+1)$. Note that this operation preserves the size and the depth of $C$.
Figure~\ref{fig:reflection} shows a two-layer, $6$-channel comparator network
and its reflection.
These networks are both elements of $R(S_6)$, corresponding to the two
different words~\sent{211212_s} and~\sent{121221_s}, and as such are not
equivalent using the theory developed so far.

\begin{figure}
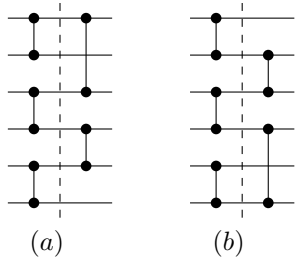

  \centering
  \begin{tabular}{cc}
  \begin{sortingnetwork}{6}{0.7}
    \addcomparator12
    \addcomparator34
    \addcomparator56
    \nextlayer
    \addcomparator23
    \addcomparator46
  \end{sortingnetwork}
  &
  \begin{sortingnetwork}{6}{0.7}
    \addcomparator12
    \addcomparator34
    \addcomparator56
    \nextlayer
    \addcomparator13
    \addcomparator45
  \end{sortingnetwork}
  \\
  $(a)$ & $(b)$
  \end{tabular}

  \caption{A two-layer, $6$-channel comparator network~$(a)$ and its reflection~$(b)$.}
  \label{fig:reflection}
\end{figure}

Given a vector $\vec x\in\BB^n$, denote by $\overline{\vec x^R}$ the
vector obtained from $\vec x$ by reversing and complementing each bit.
For example, $\overline{100^R} = 110$.

\begin{lemma}
  \label{lem:reflection}
  Let $C$ be a comparator network on $n$ channels and $C^R$ be its reflection.
  Then $\vec x\in\outputs(C)$ if and only if~~$\overline{\vec x^R}\in\outputs(C^R)$.
\end{lemma}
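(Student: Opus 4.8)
The plan is to prove the statement by induction on the depth $d$ of $C$, tracking how the operation $\vec x \mapsto \overline{\vec x^R}$ interacts with a single layer. First I would establish the base case $d=0$: here $C$ has no comparators, $\outputs(C) = \BB^n$, and the map $\vec x \mapsto \overline{\vec x^R}$ is a bijection of $\BB^n$ onto itself, so the equivalence is immediate (and $C^R$ also has depth $0$).

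For the inductive step, write $C = C_0; L$ where $C_0$ has depth $d-1$ and $L$ is the last layer. Then $C^R = C_0^R; L^R$, where $L^R$ replaces each comparator $(i,j)$ by $(n-j+1, n-i+1)$. The key lemma I would isolate and prove is the \emph{one-layer commutation property}: for any single layer $L$ and any $\vec y \in \BB^n$, one has $L(\vec y)^R$ complemented equals $L^R(\overline{\vec y^R})$, i.e. applying $L$ then reflecting-and-complementing the vector gives the same result as reflecting-and-complementing first and then applying $L^R$. To see this, note that a comparator $(i,j)$ sends the pair of values $(a,b)$ on channels $i,j$ to $(\min(a,b), \max(a,b))$. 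Under the map $\vec x \mapsto \overline{\vec x^R}$, channel $i$ goes to position $n-i+1$ and the bit is complemented; since complementation swaps $\min$ and $\max$ (i.e. $\overline{\min(a,b)} = \max(\bar a, \bar b)$), the comparator $(i,j)$ acting on $i<j$ corresponds exactly to the comparator $(n-j+1, n-i+1)$ acting on the complemented, reversed vector — the smaller-index channel still receives the minimum. Channels untouched by $L$ are simply relabeled and complemented, which matches. So the property holds comparator-by-comparator and hence for the whole layer $L$.

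Granting the commutation lemma, the inductive step is routine: $\vec x \in \outputs(C)$ iff $\vec x = L(\vec y)$ for some $\vec y \in \outputs(C_0)$; by the commutation lemma this is equivalent to $\overline{\vec x^R} = L^R(\overline{\vec y^R})$, and by the induction hypothesis $\vec y \in \outputs(C_0)$ iff $\overline{\vec y^R} \in \outputs(C_0^R)$; combining, $\vec x \in \outputs(C)$ iff $\overline{\vec x^R} \in \outputs(C_0^R; L^R) = \outputs(C^R)$. One should also check that the operation $\vec x \mapsto \overline{\vec x^R}$ is an involution, so the "if and only if" is symmetric and no separate argument for the converse is needed.

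The main obstacle is getting the one-layer commutation lemma exactly right, in particular making sure that after reflection the roles of min-channel and max-channel are correctly exchanged: the comparator $(i,j)$ with $i<j$ puts the min on $i$ and the max on $j$; after reversing indices, $n-j+1 < n-i+1$, and after complementing values, what was the min becomes the max — so the max of the complemented values should land on channel $n-i+1$, which is exactly the larger index in $(n-j+1, n-i+1)$, consistent with $L^R$'s convention. This bookkeeping is the only delicate point; everything else is a direct computation or an appeal to the fact that $\vec x \mapsto \overline{\vec x^R}$ is a fixed bijection of $\BB^n$.
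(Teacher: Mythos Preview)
Your proof is correct and follows essentially the same approach as the paper: both arguments reduce to the observation that a single comparator $(i,j)$ acting on $\vec y$ corresponds, under the involution $\vec y\mapsto\overline{\vec y^R}$, to the comparator $(n-j+1,n-i+1)$ acting on $\overline{\vec y^R}$. The only cosmetic difference is that the paper inducts on the \emph{size} of $C$ (adding one comparator at a time), whereas you induct on depth and bundle the comparators of a layer into a one-layer commutation lemma that you then verify comparator-by-comparator anyway.
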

\begin{proof}
  By induction on the size of $C$.
  For the empty network the result is trivial.
  Assume the result holds for $C$ and consider the network $C;(i,j)$.
  Let $\vec x\in\BB^n$ and $\vec y=C(\vec x)$; the induction hypothesis guarantees that
  $\overline{\vec y^R}\in\outputs(C^R)$.
  Then, the comparator $(i,j)$ will change $\vec y$ if and only if the comparator $(n-j+1,n-i+1)$ changes $\overline{\vec y^R}$,
  interchanging the corresponding positions in both sequences.  This establishes the thesis for $C;(i,j)$.
\end{proof}

It follows from Lemma~\ref{lem:reflection} that $C$ can be extended to a sorting network of some depth $d$ if and only if $C^R$ can be extended to a sorting network of depth $d$.
Thus, we can further reduce the number of candidate two-layer prefixes
by eliminating those that are reflections of others.

\begin{corollary}
Let $S_n'$ be any subset of saturated two-layer networks on $n$ channels containing only one of $C$ and $C^R$ for each $C \in S_n$. Then both $S_n'$ and $R(S_n')$ form complete set of filters.
\end{corollary}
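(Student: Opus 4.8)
The plan is to replay the proof of Theorem~\ref{theorem:saturatedfilter}, adding the reflection symmetry of Lemma~\ref{lem:reflection}. First I would record that $S_n'$ is well defined, i.e.\ that reflection maps $S_n$ into itself, so that ``one of $C$ and $C^R$'' names two elements of $S_n$. This is a routine check: by Lemma~\ref{lem:reflection}, $\outputs(C^R)$ is the image of $\outputs(C)$ under the bijection $\vec x\mapsto\overline{\vec x^R}$; since bitwise complement commutes with every channel permutation while channel reversal is itself a channel permutation, the conditions defining ``non-redundant'' and ``saturated'' (both phrased up to channel permutations) transfer verbatim between $C$ and $C^R$, using also that reflection is a bijection on comparators, so that adding or removing a comparator on one side corresponds to doing so on the other. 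Hence $C$ is saturated iff $C^R$ is, and $S_n'\subseteq S_n$ meets each pair $\{C,C^R\}$ in exactly one element.

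For $S_n'$ itself, I would start from an optimal-depth sorting network; by Theorem~\ref{theorem:saturatedfilter} it may be taken of the form $D;D'$ with $D\in S_n$, and since it is optimal-depth it has depth $T(n)$. If $D\in S_n'$ there is nothing to do. Otherwise $D^R\in S_n'$, and since $D$ extends to the depth-$T(n)$ sorting network $D;D'$, the observation following Lemma~\ref{lem:reflection} yields a network $D''$ with $D^R;D''$ a sorting network; as reflection preserves depth, $D^R;D''$ also has depth $T(n)$. In either case some optimal-depth sorting network begins with a member of $S_n'$, so $S_n'$ is a complete set of filters in the sense of Definition~\ref{def:complete}.

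For $R(S_n')$, take the optimal-depth sorting network $E;E'$ with $E\in S_n'$ just obtained and let $E_0\in R(S_n')$ be the representative of the equivalence class of $E$, so $\GG(E)\approx\GG(E_0)$ and $E$, $E_0$ are both two-layer networks. Applying Lemma~\ref{lem:isomorphism} with $C_1=E$, $C_2=E_0$ and $D=E'$ produces $E_0'$ of the same depth as $E'$ with $E_0;E_0'$ a sorting network, necessarily of depth $T(n)$; hence $R(S_n')$ is a complete set of filters as well. I do not expect a real obstacle: the only points needing care are the well-definedness check of the first step and the two bookkeeping remarks that reflection and the passage to an isomorphism representative both preserve depth, so that optimality is not lost.
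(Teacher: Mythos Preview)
Your proposal is correct and follows precisely the argument the paper leaves implicit: the paper states this corollary without proof, relying on the sentence immediately preceding it (``$C$ can be extended to a sorting network of depth $d$ iff $C^R$ can'') together with Theorem~\ref{theorem:saturatedfilter} and Lemma~\ref{lem:isomorphism}. Your write-up simply spells out these steps, and the additional care you take in verifying that $S_n$ is closed under reflection is a welcome sanity check that the paper silently assumes.
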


Yet again, particular sets $S_n'$ and $R(S_n')$ can be constructed syntactically by considering the word representation.
Let $C$ be a saturated network such that $\word(C)$ contains at least one Head
or Stick word.
Reflection transforms min-channels into max-channels and vice-versa, so the
reflection of an \textsf{oHead} (respectively \textsf{oStick}) is an
\textsf{eHead} (resp.\ \textsf{eStick}), and conversely.
We can thus restrict the grammar defining saturated two-layer networks~\eqref{eq:satword}
to the following,
which allows neither \textsf{eHead}s nor \textsf{eStick}s.
\begin{align}
  \label{eq:reflword}
  \mathsf{Word} &::= \sent{0} \mid \mathsf{oHead} \mid \sent{12} \mid \mathsf{oStick} \mid \mathsf{Cycle} \\
  \nonumber\mathsf{oStick} &::= \sent{21}(\sent{12}+\sent{21})^+\sent{12} \\
  \nonumber\mathsf{oHead} &::= \sent0(\sent{12}+\sent{21})^\ast\sent{21} \\
  \nonumber\mathsf{Cycle} &::= \sent{12}(\sent{12}+\sent{21})^+
\end{align}

This handles the case of Head- and Stick-words. It remains to consider the reflections of Cycle-words.
Since reflection transforms min-channels into max-channels and conversely,
the word corresponding to the reflection of a cycle can be obtained by
interchanging \sent{1}s and \sent{2}s in the word corresponding to the cycle,
and then shifting and possibly reversing the result to obtain the
lexicographically smallest representative of that
cycle.
As it turns out, this will typically yield the original word;
in particular, for $n<12$, this is always the case, as the lemma below states.

Given a word $w$, denote by $\overline w$ the word obtained by interchanging
\sent{1}s and \sent{2}s in $w$, and by $w^R$ the reverse word to $w$.

\begin{lemma}
  Let $C$ be a network on an even number $n$ of channels consisting of a
  connected cycle.
  If $n < 12$, then $C$ is equivalent to $C^R$.
\end{lemma}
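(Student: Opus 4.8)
The plan is to work directly with the word representation of $C$. Since $C$ consists of a single connected cycle on $n$ channels, $\word(C)$ is a Cycle-word $w$ of the form $\sent{12}(\sent{12}+\sent{21})^+$ with $|w|=n$, and it is the lexicographically smallest word among all the words obtained by starting the maximal path at the two channels of any first-layer comparator and traversing in either direction. By Lemma~\ref{lem:reflection} together with Lemma~\ref{lem:equiv}, it suffices to show that $\word(C^R)=\word(C)$, i.e.\ that the canonical Cycle-word of the reflected network equals $w$. As noted in the text, reflection interchanges min- and max-channels, so the multiset of path-words of $C^R$ is obtained from that of $C$ by applying the bar operation (swapping $\sent1\leftrightarrow\sent2$); hence $\word(C^R)$ is the lexicographically smallest cyclic rotation (by an even shift, to keep the alignment with first-layer comparators) of $\overline w$ or of $\overline{w^R}$. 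So the statement reduces to the purely combinatorial claim: for every Cycle-word $w$ with $|w|=n<12$, the canonical form of $\overline w$ (under even cyclic shifts and reversal, renormalised to start with $\sent{12}$) equals $w$ itself.

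First I would make the structure of a Cycle-word explicit. Writing $w$ as $\sent{12}\,u_1\,u_2\cdots u_{m-1}$ where each $u_i\in\{\sent{12},\sent{21}\}$ and $n=2m$, the cycle is determined by the sequence of ``orientations'' $u_1,\dots,u_{m-1}$ (together with the implicit first block $\sent{12}$ and the closing edge). It is cleaner to encode the cycle by a binary string $b\in\{0,1\}^{m}$ recording, at each of the $m$ second-layer comparators around the cycle, whether it joins a min-channel to a min-channel / max-to-max (one symbol) or min-to-max (the other) — equivalently, the pattern of $\sent{12}$ vs $\sent{21}$ blocks read cyclically. Reflection corresponds to complementing $b$; cyclic shift of the starting comparator corresponds to cyclic rotation of $b$; reversal of the traversal direction corresponds to reversing $b$. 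The canonical word of a cycle is then determined by the equivalence class of $b$ under rotation and reversal, and the Lemma becomes: for $m\le 5$ (i.e.\ $n\le 10$, and trivially the odd-$n$ and $n<12$ boundary cases), every such $b$ is equivalent to its complement $\bar b$ under the dihedral action. This is now a finite check: enumerate all $b\in\{0,1\}^m$ for $m\in\{2,3,4,5\}$, factor by the dihedral group $D_m$, and verify that each orbit is closed under complementation — which I would present as a short case analysis (e.g.\ for $m=5$ there are only $8$ dihedral orbits, and one checks complementation pairs them up into itself). I would also explain \emph{why} this fails first at $n=12$ ($m=6$): the binary string $\sent{001011}$ and its complement $\sent{110100}$ lie in distinct dihedral orbits, giving the first inequivalent cycle/reflection pair.

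The main obstacle — really the only nontrivial point — is getting the dictionary between the three syntactic operations on words (starting point, direction, reflection) and the three combinatorial operations on the encoding $b$ (rotation, reversal, complement) exactly right, including the parity bookkeeping: cyclic shifts of the \emph{word} must move by two characters at a time to preserve the $\sent{12}/\sent{21}$ block structure and the correspondence with first-layer comparators, and one must check that the renormalisation ``rotate so the word starts with $\sent{12}$'' is always achievable for a genuine cycle (it is, since not all blocks can be $\sent{21}$ — the channel labelled $1$ is a min-channel). Once that correspondence is nailed down, the proof is a finite enumeration over at most a few dozen strings, and I would simply tabulate the dihedral orbits for $m=2,3,4,5$ and observe each is complement-closed, concluding $\word(C^R)=\word(C)$ and hence, by Lemmas~\ref{lem:equiv} and~\ref{lem:reflection}, that $C\approx C^R$.
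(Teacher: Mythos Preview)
There is a genuine gap in your dictionary between path operations and operations on the binary encoding---precisely the ``main obstacle'' you flagged.

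Take the block encoding: $b_k=0$ for a $\sent{12}$ block, $b_k=1$ for a $\sent{21}$ block. Reversing the direction of traversal does \emph{not} merely reverse $b$; it reverses \emph{and complements} each bit, because the reversed path visits the two channels of each first-layer comparator in the opposite order, turning every two-character block $\sent{12}$ into $\sent{21}$ and conversely. Thus the orbit of a cycle (under choice of starting point and direction) is $\{\text{rotations of }b\}\cup\{\text{rotations of }\overline{b^R}\}$, not the plain dihedral orbit. Reflection still complements $b$, so the correct condition for $C\approx C^R$ is: either $\bar b$ is a rotation of $b$, or $b$ is a rotation of $b^R$.

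Your condition (``$\bar b$ lies in the dihedral orbit of $b$'') is different and gives wrong answers on both sides. At $m=5$, take $b=00101$: here $b^R=10100$ is already a rotation of $b$, so the dihedral orbit is just the five rotations of $b$, none of which equals $\bar b=11010$; your test would declare this $10$-channel cycle asymmetric, contradicting the lemma. At $m=6$, your own example $b=001011$ has $\bar b=110100=b^R$, so $\bar b$ \emph{is} in the dihedral orbit of $b$; your test would declare it symmetric, though it is in fact the first asymmetric cycle. (Your alternative encoding via second-layer comparator types---``min--min/max--max versus min--max''---is the XOR-derivative of the block encoding; it identifies $b$ with $\bar b$ and hence $C$ with $C^R$ outright, so it cannot be ``equivalent'' to the block encoding.)

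With the corrected dictionary the plan does go through, and in fact the second disjunct alone already suffices: ``$b$ is a rotation of $b^R$'' says the binary necklace $b$ is achiral, and every binary necklace of length $\leq 5$ is achiral (the gap sequence between consecutive $1$s would need three pairwise distinct entries, but $1+2+3=6>5$), while $001011$, with gap sequence $(1,2,3)$, is the first chiral one at length~$6$. This is essentially the paper's argument, which phrases the same achirality condition as ``$\overline{w^R}$ can be cyclically shifted into $w$'' and bounds from below the length at which it can first fail via the run-length parameters~$k_i$.
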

\begin{proof}
  Every Cycle-word can be written as
  $w=\sent{12(12)}^{k_1}\sent{21(12)}^{k_2}\sent{21}\ldots\sent{21(12)}^{k_n}$.
  Then $\overline w^R$ is the word
  $\sent{(12)}^{k_n}\sent{21}\ldots\sent{21(12)}^{k_2}\sent{21(12)}^{k_1}\sent{12}$,
  and it can always be shifted into $w$ unless $k_1$, $k_2$ and $k_3$ are all
  distinct.
  The shortest word where $k_1$, $k_2$ and $k_3$ are all distinct
  is \sent{122112211212}, where $k_1=0$, $k_2=1$ and
  $k_3=2$, corresponding to a cycle on~$12$ channels.
\end{proof}

Call a word $w$ \emph{asymmetric} if $\word(\net(w)^R)\neq w$.
The previous result states that the shortest asymmetric Cycle-word has
length~$11$.
Table~\ref{tab:asymcycle} indicates the number $A_n$ of asymmetric
cycles on $n$~channels, modulo reflection.
These values are computed by generating all Cycle-words of length~$n$ using
the grammar in Equation~\eqref{eq:reflword} and testing
whether they are asymmetric.

This sequence has been described previously in~\cite{McLarnan81},
which describes the computation of the number of possible crystal
structures with particular kinds of symmetries. Sequence $A_n$ above
corresponds precisely to the cardinality of the space group $P6_3/mc$,
which is computed by a symbolic representation whose specification
exactly matches that of $A_n$.  The values given in~\cite{McLarnan81}
however differ from ours for the values $36+6k$, for integers $k\geq
0$.  We believe that this is due to errors in the original computations
described in~\cite{McLarnan81}.

\begin{table}[hb]
  \[\begin{array}{c|c|c|c|c|c|c|c|c|c|c|c|c|c|c|c}
    n & 12 & 14 & 16 & 18 & 20 & 22 & 24 & 26 & 28 & 30 & 32 & 34 & 36 & 38 & 40 \\ \hline
    A_n & 1 & 1 & 4 & 7 & 18 & 31 & 70 & 126 & 261 & 484 & 960 & 1{,}800 & 3{,}515 & 6{,}643 & 12{,}852
    \end{array}\]
  \caption{Number of asymmetric Cycle-words on $n$~channels, modulo reflection.}
\label{tab:asymcycle}
\end{table}

\begin{definition}
  \label{defn:Rn}
  The set of two-layer representative prefixes $R_n$ is the set of all
  networks generated from sentences $s$
  such that:
  \begin{itemize}
  \item all words $w\in s$ are generated from the grammar in Equation~\eqref{eq:reflword};
  \item all Stick-words $w\in s$ satisfy $w<w^R$;
  \item all Cycle-words $w\in s$ satisfy $w=\word(\net(w))$;
  \item if $s$ contains \sent{0_h} or \sent{12_s},
    then all other words in $s$ are Cycle-words;
  \item if $s$ does not contain \sent{oHead} or \sent{oStick} words
    and $k$ is the shortest length for
    which $s$ contains only one asymmetric Cycle-word $w$ of length~$k$
    (possibly with high multiplicity), then $w<\word(\net(w)^R)$.
  \end{itemize}
\end{definition}

\begin{theorem}
  \label{thm:refl-complete}
  The set $R_n$ contains a representative for all equivalence classes of
  networks on $n$ channels up to reflection.
  Furthermore, if $R_n$ contains two networks that are equivalent up to
  reflection, then those networks are represented by a sentence $s$ such that:
  \begin{itemize}
  \item $s$ contains only Cycle-words;
  \item for every length $k$, the number of distinct asymmetric words of
    length $k$ in $s$ is not~$1$.
  \end{itemize}
\end{theorem}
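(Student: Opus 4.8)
The plan is to carry out the whole argument on the word representation, translating back and forth with networks via Lemma~\ref{lem:equiv}. The first thing I would pin down is the action of reflection on a sentence: because reflection swaps min- and max-channels and reverses the channel order, on the component level it fixes \sent{0_h} and \sent{12_s}, sends each $\mathsf{Cycle}$-word $c$ to $\word(\net(c)^R)$, and sends every $\mathsf{oHead}$ (resp.\ $\mathsf{oStick}$) to an $\mathsf{eHead}$ (resp.\ $\mathsf{eStick}$) and conversely; hence $\word(C^R)$ is obtained from $\word(C)$ by reflecting each word and re-sorting, giving an involution $\rho$ on sentences with $\rho(\word(C))=\word(C^R)$. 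I would record two facts about $\rho$ for later use: it is an involution, and for each length $k$ it restricts to a fixed-point-free involution on the asymmetric $\mathsf{Cycle}$-words of length $k$ (here $w$ is asymmetric iff $\word(\net(w)^R)\neq w$), so it preserves, length by length, both the set of distinct asymmetric $\mathsf{Cycle}$-words appearing in a sentence and their multiplicities.

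For the first assertion, by Theorem~\ref{theorem:saturatedfilter} it suffices to show that every saturated two-layer network $C$ on $n$ channels is $\approx$-equivalent to a member of $R_n$, possibly after replacing it by $C^R$. I would set $s=\word(C)$, a sentence of the saturated grammar~\eqref{eq:satword}, and split into cases according to whether $s$ contains an $\mathsf{e}$-word ($\mathsf{eHead}$/$\mathsf{eStick}$), an $\mathsf{o}$-word ($\mathsf{oHead}$/$\mathsf{oStick}$) --- it cannot contain both, by the structural rules on saturated sentences --- or neither. If $s$ has an $\mathsf{e}$-word I would pass to $C^R$, whose sentence $\rho(s)$ then has only $\mathsf{o}$-words among its heads and sticks and thus lies in the restricted grammar~\eqref{eq:reflword}; the conditions of Definition~\ref{defn:Rn} concerning $\mathsf{Stick}$-words, $\mathsf{Cycle}$-words and the words \sent{0_h}, \sent{12_s} are then immediate from how $\word$ chooses the lexicographically least path-word for sticks and cycles and from the saturated-sentence rules, and the final condition is vacuous whenever an $\mathsf{oHead}$ or $\mathsf{oStick}$ is present. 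In the remaining ``neither'' case the final condition is active, so I would locate the least length $k$ at which $s$ has a unique distinct asymmetric $\mathsf{Cycle}$-word $w$ (if there is none, we are done); if $w<\word(\net(w)^R)$ we are done, and otherwise I would pass to $C^R$ once more, observing that $\rho(s)$ has the same least length $k$ with distinguished word $w'=\word(\net(w)^R)$ and that $\word(\net(w')^R)=w$ by involutivity, so the final condition becomes $w'<w$, which now holds. Either $C$ or $C^R$ thus ends up generated by a sentence satisfying Definition~\ref{defn:Rn}.

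For the ``furthermore'' part, suppose $C_1,C_2\in R_n$ are distinct but equivalent up to reflection. I would first note that distinct sentences obeying Definition~\ref{defn:Rn} give $\approx$-inequivalent networks, because $\word\circ\net$ fixes such sentences (that is exactly what the $\mathsf{Cycle}$- and $\mathsf{Stick}$-word conditions buy); hence $C_1\not\approx C_2$, forcing $C_1\approx C_2^R$, and writing $s_i=\word(C_i)$ this means $s_1=\rho(s_2)$, $s_2=\rho(s_1)$ and $s_1\neq s_2$. If $s_1$ had an $\mathsf{oHead}$ or $\mathsf{oStick}$, then $s_2=\rho(s_1)$ would have an $\mathsf{eHead}$ or $\mathsf{eStick}$, contradicting that $s_2$ is generated by grammar~\eqref{eq:reflword}; applying the same to $s_2$, neither sentence contains any $\mathsf{oHead}$, $\mathsf{oStick}$, $\mathsf{eHead}$ or $\mathsf{eStick}$, so each consists only of \sent{0_h}, \sent{12_s} and $\mathsf{Cycle}$-words. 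For the count statement I would argue by contradiction: if some length $k$ had exactly one distinct asymmetric $\mathsf{Cycle}$-word $w$ in $s_1$ and $k$ is least such, then $k$ is also least for $s_2$ with distinguished word $w'=\word(\net(w)^R)$, and the final condition of Definition~\ref{defn:Rn} gives $w<w'$ for $s_1$ but $w'<\word(\net(w')^R)=w$ for $s_2$ --- impossible. Hence at each length the number of distinct asymmetric $\mathsf{Cycle}$-words is $0$ or at least~$2$.

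What remains is to exclude \sent{0_h} and \sent{12_s} from $s_1$, and this is the step I expect to be the main obstacle. If \sent{0_h} or \sent{12_s} were present, all other words of $s_1$ would have to be $\mathsf{Cycle}$-words, so $s_1$ and $\rho(s_1)$ could differ only in their asymmetric cycles; one then has to squeeze a contradiction out of the final condition of Definition~\ref{defn:Rn} (now non-vacuous) together with the multiplicity bookkeeping above, showing that no such $s_1$ can coexist with $\rho(s_1)$ inside $R_n$. Apart from this point the proof is a matter of routine case-checking once $\rho$ is understood; the part that needs care throughout is keeping track of which statements are about \emph{distinct} words and which are about \emph{multiplicities} of words within a sentence.
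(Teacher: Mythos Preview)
Your approach---work entirely on the word representation, describe reflection as an involution $\rho$ on sentences, and do case analysis on the head/stick/cycle structure---is exactly the paper's approach, only carried out with considerably more care. The paper's proof is three sentences long and essentially records the same two observations you make: that $\mathsf{oHead}$s/$\mathsf{oStick}$s reflect to $\mathsf{eHead}$s/$\mathsf{eStick}$s (so cannot survive in $R_n$ together with their reflection), and that the last clause of Definition~\ref{defn:Rn} kills one of $s,\rho(s)$ whenever some length has a unique asymmetric cycle.

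The ``main obstacle'' you flag at the end is real, and it is not a gap in your argument but an imprecision in the theorem statement that the paper's own proof does not address either. The paper's proof only establishes that a redundant $s\in R_n$ contains no $\mathsf{oHead}$ or $\mathsf{oStick}$; it never rules out $\sent{0_h}$ or $\sent{12_s}$. And indeed these cannot be ruled out: take two asymmetric Cycle-words $A,B$ of length~$16$ drawn from distinct reflection pairs (possible since $A_{16}=4$), and set $s=\{\sent{0_h},A,B\}$ on $33$ channels. All five clauses of Definition~\ref{defn:Rn} hold for both $s$ and $\rho(s)=\{\sent{0_h},\rho(A),\rho(B)\}$ (the last clause is vacuous, as no length has exactly one distinct asymmetric cycle), and $s\neq\rho(s)$, so both lie in $R_{33}$; yet $s$ contains $\sent{0_h}$, contradicting the first bullet as written. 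The statement that is actually proved---by you and by the paper---is that $s$ contains no $\mathsf{oHead}$ or $\mathsf{oStick}$, which together with the second bullet still yields the intended consequence that $R_n$ has no redundancy for $n\leq 31$. So your proof is complete once you replace the target of the final paragraph with this weaker (and correct) conclusion; do not try to exclude $\sent{0_h}$ and $\sent{12_s}$.
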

\begin{proof}
  By construction, $R_n$ contains one network from each equivalence class in
  $G_n$ modulo reflection.
  Furthermore, since \textsf{oStick}s and \textsf{oHead}s are reflected
  to \textsf{eStick}s and \textsf{eHead}s, no network containing any of these
  words can be represented together with its reflection.
  Likewise, if a network containing only cycles contains only one
  asymmetric cycle of some length, then the last criterium will guarantee that
  either itself or its reflection will not be included in $R_n$.
\end{proof}

\begin{corollary}\label{lem:complete}
  For every $n$, the set $R_n$ is a complete
  sets of filters on $n$ channels.
\end{corollary}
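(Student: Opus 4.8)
The plan is to read off Corollary~\ref{lem:complete} from three facts already established: Theorem~\ref{theorem:saturatedfilter} (the representatives $R(S_n)$ of saturated two-layer networks form a complete set of filters), the remark following Lemma~\ref{lem:reflection} (a network extends to a sorting network of depth $d$ if and only if its reflection does), and Theorem~\ref{thm:refl-complete} (which says that $R_n$ meets every equivalence class of saturated two-layer networks up to reflection). The bridge turning ``same equivalence class'' into ``same ability to be completed to a sorting network of a given depth'' is Lemma~\ref{lem:isomorphism}.

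Concretely, I would argue as follows. By Theorem~\ref{theorem:saturatedfilter} there is an optimal-depth sorting network $P;S$ whose two-layer prefix $P$ is saturated; write $d=T(n)$ for its depth, and note that $P^R$ is again a two-layer saturated network (reflection preserves depth and maps $S_n$ to itself). Applying Theorem~\ref{thm:refl-complete} to the reflection-equivalence class of $P$, we get a network $C\in R_n$ with $C\approx P$ or $C\approx P^R$; let $P^\star$ be whichever of $P$, $P^R$ satisfies $\GG(C)\approx\GG(P^\star)$. Then $P^\star$ extends to a sorting network of depth $d$: this is the given network $P;S$ when $P^\star=P$, and it follows from the remark after Lemma~\ref{lem:reflection} when $P^\star=P^R$, since $P$ does. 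Fix such a network $P^\star;S^\star$ of depth $d$. Since $C$ and $P^\star$ are both two-layer networks with isomorphic graph representations and $P^\star;S^\star$ is a sorting network, Lemma~\ref{lem:isomorphism} produces a network $S'$ of the same depth as $S^\star$ such that $C;S'$ is a sorting network; its depth equals that of $P^\star;S^\star$, namely $d=T(n)$. Hence $C;S'$ is an optimal-depth sorting network with $C\in R_n$, which is exactly the assertion that $R_n$ is a complete set of filters.

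I do not expect a genuine obstacle: the corollary is pure bookkeeping on top of Theorem~\ref{thm:refl-complete}, and the only points that deserve a moment's care are the routine ones. First, one must keep every depth pinned to $d=T(n)$ throughout, so that ``sorting network of depth $d$'' stays synonymous with ``optimal-depth'' at each application of Lemma~\ref{lem:reflection} and Lemma~\ref{lem:isomorphism}. Second, one should observe that $R_n$ need not be an exactly minimal transversal of the reflection-equivalence classes --- by the last clause of Theorem~\ref{thm:refl-complete} it may, in the cycles-only case with several asymmetric cycles of a common length, contain both a network and its reflection --- but such superfluous prefixes are harmless, since completeness of a set of filters requires only that \emph{some} member be the prefix of an optimal-depth sorting network, never that all of them be essential.
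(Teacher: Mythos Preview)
Your argument is correct and is precisely the bookkeeping the paper leaves implicit: the corollary is stated without proof, as an immediate consequence of Theorem~\ref{thm:refl-complete} together with the earlier completeness of $R(S_n)$ (Theorem~\ref{theorem:saturatedfilter}), the reflection remark after Lemma~\ref{lem:reflection}, and Lemma~\ref{lem:isomorphism}. Your explicit chain $P\to P^\star\to C$ via reflection and graph isomorphism, and your observation that possible redundancy in $R_n$ is harmless for completeness, match the intended reasoning exactly.
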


It is possible to find two asymmetric cycles $A$ and $B$ of the same length
such that $\word(A)<\word(B)$ and $\word(B^R)<\word(A^R)$, hence for some~$n$
the set $R_n$ still may contain redundancy.
However, any network with two such cycles has at least $32$ channels, since
each such cycle requires at least $16$ channels.
In particular, for $n\leq 31$ the sets $R_n$ contain no redundancy.

The last line of Table~\ref{tab:Sn} shows the number of
representatives, $|R_n|$,
one needs to analyze to solve the optimal-depth problem for sorting networks
of size up to $19$.
We can compute the sets $R_n$ efficiently (in under two hours) for $n\leq 40$.
For greater $n$, the execution time begins to grow due to the complexity of
the test for asymmetric cycles.
This could be reduced by techniques such as tabling; however,
the interest of such optimizations is limited, since $40$
is well beyond the scope of current techniques for solving the optimal-depth
problem.

The sets of comparator networks, $R_n$, for $n\leq 16$ can be downloaded from\\
\url{http://www.cs.bgu.ac.il/~mcodish/Papers/Appendices/SortingNetworks/}.

\section{Propositional Encoding of Sorting Networks}
\label{sec:encode}

In the previous section, we showed how to compute a set of two-layer
networks, $R_n$, that is complete when looking for optimal-depth
sorting networks on $n$ channels. In this section, we employ this
result to represent existence of sorting networks of
a given depth by propositional formulas. 
Using a SAT solver on the
obtained formulas, we can both find the optimal-depth
sorting networks for $n \leq 16$
channels and prove their optimality.

For experimentation we used a cluster of Intel E8400 cores clocked at
$2$~GHz each. Each of the cores in the cluster has computational power
comparable to a core on a standard desktop computer.  Although we used all of the cores of the cluster in our experimentation,
each individual instance was run on a single core.
All of the times
indicated in all of our results, detailed in the following, are
obtained using a single core of the cluster.

Morgenstern et al.~\cite{MorgensternS11} observed that an $n$-channel
comparator network of a given depth~$d$ can be represented by a
propositional formula such that the existence of a depth-$d$,
$n$-channel, sorting network is equivalent to that formula's satisfiability. We
improve upon the work in~\cite{MorgensternS11}, and give a more natural
translation to propositional formulas. In contrast to the encoding
proposed in~\cite{MorgensternS11}, which did not prove sufficient to
solve any open instances beyond $n = 10$, ours facilitates the search
for optimal-depth sorting networks with up to $16$ channels.

The encoding uses (similarly to the one in~\cite{MorgensternS11}) the
zero-one principle (Lemma~\ref{lemma:zeroone}), which states that,
when checking whether a comparator network is a sorting network,
it suffices to consider only its outputs on
Boolean inputs.
We can represent the effect of a comparator on Boolean values $x, y \in \BB$
as $\min(x, y) = x \wedge y$ and $\max(x, y) = x \vee y$.

\subsection{Optimal-depth sorting networks}

We now describe the construction of a propositional formula
$\Psi(n,d)$ that is satisfiable if and only if an $n$-channel sorting
network of depth~$d$ exists. Moreover, the formula $\Psi(n,d)$ 
has the property that: if the formula is satisfiable, then an
$n$-channel sorting network of depth~$d$ can be easily extracted from
a satisfying assignment.

The formula uses the following set of Boolean variables, specifying
the position of the comparators in the network:
\[
  \networkVars^d_n=
       \sset{c^\ell_{i,j}}{1\leq\ell\leq d,1\leq i<j\leq n }
\] 
where the intention is that $c^\ell_{i,j}$ is true if and only if the
network contains a comparator between channels $i$ and $j$ at depth
$\ell$.

Further, to facilitate the specification of the encoding, we introduce
an additional set of Boolean variables capturing which channels are
``used'' at a given layer:
%%%%%%
\[
  \usedVars^d_n=
       \sset{u^\ell_k}{1\leq\ell\leq d,1\leq k\leq n }
\]
where the intention is that $u^\ell_k$ is true if and only if there is
some comparator on channel~$k$ at level $\ell$. 

The following formula enforces the relation between the
variables in $\usedVars^d_n$ and in $\networkVars^d_n$:
\[
    \varphi^{\emph{used}}_{n,d} ~~= 
          \bigwedge_{\scriptsize\begin{array}{l}
             1 \leq \ell \leq d\\
             1 \leq k \leq n
           \end{array}
          }
                \textstyle u^\ell_k \leftrightarrow 
                \bigvee incident^\ell_k(\networkVars^d_n)
\]
where  $incident^\ell_k(\networkVars^d_n)$ denotes, for each
channel $k$ and level $\ell$, which variables in $\networkVars^d_n$
correspond to comparators incident to channel $k$ at layer $\ell$:
\[
  incident^\ell_k(\networkVars^d_n)=\sset{c^\ell_{i,j}}{c^\ell_{i,j}\in
  \networkVars^d_n,~ i=k \mbox{~or~} j=k}\]

Now, a representation of a comparator network is valid if at each
layer every channel is used by at most one comparator. This is
enforced by the following formula:
\begin{eqnarray*}
\varphi^{valid}_{n,d} & = & \hspace{-5mm}
\bigwedge_{\scriptsize\begin{array}{l}
             1 \leq \ell \leq d\\
             1 \leq k \leq n
           \end{array}}\hspace{-3mm}    \once(incident^\ell_k(\networkVars^d_n))
\end{eqnarray*}
where for any set of Boolean variables $B=\{b_1,\ldots,b_n\}$, the
formula $\once(B)$ signifies that at most one of the variables
in $B$ takes the value $true$. We adopt the straightforward encoding:
\begin{eqnarray*}
\once(\{b_1,\ldots,b_n\}) & = & 
    \bigwedge_{i < j} (\neg b_i \lor \neg b_j) \\
\end{eqnarray*}

Given sets of Boolean variables $\bar x = \{x_1, \ldots, x_n\}$ and
$\bar y = \{y_1, \ldots, y_n\}$, we use the following formula to
express that $\{y_1, \ldots, y_n\}$ is obtained from $\{x_1, \ldots,
x_n\}$ by applying the $\ell$-th layer of the network:
\[
\varphi^\ell_{n,d}(\bar x,\bar y) = 
            \underbrace{\bigwedge_{i<j} c^\ell_{i,j} \rightarrow
            \left(\bigwedge \begin{array}{l} 
                  y_i\leftrightarrow x_i\land x_j\\
                  y_j\leftrightarrow x_i\lor x_j
                \end{array}\right)}_a
           ~~\wedge~~
           \underbrace{\bigwedge_{k} \neg u^\ell_k \rightarrow 
                    (x_k \leftrightarrow y_k)}_b
\]
The left part ($a$) specifies that the outputs of a
comparator on channels $i$ and $j$ are the minimum and maximum of
its inputs; the right part ($b$) specifies that, if a channel is not
incident to any comparator, then its output is equal to its input.

To express that the network sorts the input $\bar
b=\{b_1,\ldots,b_n\}$, we introduce Boolean variables $\bar x^i =
\{x^i_1,\ldots,x^i_n\}$ for $0\leq i\leq d$, where $\bar x^i$ shall
denote the values on the $n$ channels after the $i$th layer of the
network. We set $\bar x^0 = \bar b$, denote by $\bar b'$ the result of
sorting the given vector $\bar b$, and write:
\[ \varphi^{sort}_{n,d}(\bar b) = 
       %(\bar x_0 = \bar b)  \wedge
       \bigwedge_{\ell=1}^{d} 
       \varphi^\ell_{n,d}(\bar x^{i-1},\bar x^{i}) \wedge \bigwedge_{i=1}^n
       \bar x^d_i \leftrightarrow \bar b'_i
\]
Then, given a set of Boolean inputs $X \subseteq \BB^n$, the following
formula is satisfiable if and only if there is a depth-$d$,
$n$-channel network sorting all inputs from $X$:
\begin{equation}
\label{satEncoding1}
  \Psi(n,d,X)
     = \varphi^{\emph{used}}_{n,d}  \land  \varphi^{valid}_{n,d} \land
     \bigwedge_{{\bar b}\in X} \varphi^{sort}_{n,d}(\bar b)
\end{equation}
A sorting network must sort all Boolean inputs.
Hence, the following result holds.

\begin{lemma}
\label{lem:sat}
There exists a sorting network with $n$ channels and depth $d$ if and only
if the formula $\Psi(n,d, \BB^n)$ is satisfiable.
\end{lemma}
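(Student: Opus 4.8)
The plan is to prove both directions of the biconditional directly from the semantics built into the encoding, relying on the zero-one principle (Lemma~\ref{lemma:zeroone}) to reduce sorting to the Boolean case.

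First I would prove the ``only if'' direction. Suppose there is a sorting network $C = L_1;\ldots;L_d$ with $n$ channels and depth $d$. I would construct an assignment to the variables in $\networkVars^d_n$ by setting $c^\ell_{i,j}$ true exactly when $(i,j)\in L_\ell$, and to the variables in $\usedVars^d_n$ by setting $u^\ell_k$ true exactly when channel $k$ is incident to some comparator in $L_\ell$. By construction this assignment satisfies $\varphi^{\emph{used}}_{n,d}$, and it satisfies $\varphi^{valid}_{n,d}$ because $C$ is a genuine comparator network, so each channel is used at most once per layer. For each $\bar b\in\BB^n$, I would then set $\bar x^0 = \bar b$ and, inductively, $\bar x^i = C(\bar b, i, \cdot)$, i.e.\ the vector of channel values after layer $i$ as defined in Section~\ref{sec:backgr}. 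A routine check shows this assignment satisfies each conjunct $\varphi^\ell_{n,d}(\bar x^{i-1},\bar x^i)$: part~$(a)$ holds because a comparator outputs min and max (and $\min = \wedge$, $\max = \vee$ on Booleans), and part~$(b)$ holds because unused channels pass their value through. Finally, since $C$ sorts $\bar b$, we have $\bar x^d = \bar b'$, so the last conjunct of $\varphi^{sort}_{n,d}(\bar b)$ holds. Hence $\Psi(n,d,\BB^n)$ is satisfiable.

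Conversely, suppose $\Psi(n,d,\BB^n)$ has a satisfying assignment. From the truth values of the $c^\ell_{i,j}$ I would define layers $L_\ell = \sset{(i,j)}{c^\ell_{i,j}\text{ is true}}$ and set $C = L_1;\ldots;L_d$. Because the assignment satisfies $\varphi^{valid}_{n,d}$, each $L_\ell$ uses every channel at most once, so $C$ is a legitimate comparator network of depth $d$. It remains to show $C$ sorts every $\bar b\in\BB^n$. Fixing such a $\bar b$, the conjunct $\varphi^{sort}_{n,d}(\bar b)$ is satisfied, so the witnessing values $\bar x^0,\ldots,\bar x^d$ satisfy $\bar x^0 = \bar b$ and each $\varphi^\ell_{n,d}(\bar x^{i-1},\bar x^i)$. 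An induction on $\ell$ shows $\bar x^\ell_k = C(\bar b,\ell,k)$ for all $k$: part~$(a)$ forces the correct min/max behaviour on the two channels of each comparator in $L_\ell$ (using $\varphi^{\emph{used}}_{n,d}$ to know exactly which channels are incident), and part~$(b)$ together with $\varphi^{\emph{used}}_{n,d}$ forces the remaining channels to be unchanged. Then the final conjunct $\bigwedge_i \bar x^d_i \leftrightarrow \bar b'_i$ gives $C(\bar b) = \bar b'$, which is sorted. Since $\bar b$ was arbitrary, $C$ sorts all Boolean inputs, and by the zero-one principle $C$ is a sorting network.

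I do not expect any genuine obstacle here; the statement is essentially a soundness-and-completeness bookkeeping argument for a faithful encoding. The only point requiring a little care is the inductive claim that the witness variables $\bar x^i$ in a satisfying assignment must coincide with the actual propagation values $C(\bar b, i, \cdot)$ --- this is where one uses the interaction of parts~$(a)$ and~$(b)$ of $\varphi^\ell_{n,d}$ with the definition of $\usedVars^d_n$ via $\varphi^{\emph{used}}_{n,d}$, to ensure every channel's value at layer $i$ is pinned down exactly (comparator channels by $(a)$, non-comparator channels by $(b)$). Once that lemma is in place, both directions follow immediately.
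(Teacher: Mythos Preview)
Your proposal is correct and is exactly the natural soundness-and-completeness argument the encoding is designed for. The paper itself does not spell out a proof at all: it simply notes ``A sorting network must sort all Boolean inputs. Hence, the following result holds'' and states the lemma, treating it as immediate from the construction of $\Psi(n,d,X)$; your write-up is just the explicit unfolding of that construction, and the one point you flag as requiring care (that the witness values $\bar x^i$ are forced to equal $C(\bar b,i,\cdot)$ via the interaction of parts~$(a)$, $(b)$ and $\varphi^{\emph{used}}_{n,d}$) is precisely the bookkeeping the paper leaves implicit.
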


Later we show how to modify an encoding based on the formula in
Equation~(\ref{satEncoding1}) to make use of the results from
Section~\ref{sec:prefix}.
First, to improve the performance of an actual SAT solver on 
such an encoding, we introduce several additional optimizations which
we now briefly describe.

\begin{description}
\item[No redundant comparators.]
If a comparator occurs in two consecutive layers of a network, then
the second one has no effect. To prevent the placement of redundant
comparators, we introduce the following symmetry breaking formula:
\[
\sigma_1 = \hspace{-5mm}
           \bigwedge_{\scriptsize\begin{array}{c}
             1 \leq \ell < d\\
             1 \leq i<j \leq n
           \end{array}}\hspace{-3mm}
           \neg c^\ell_{i,j} \vee \neg c^{\ell+1}_{i,j}
\]
\item[Eager comparator placement.]
If a comparator is positioned on a pair of channels at level $\ell$
that are not used at level $\ell-1$, then it can be ``slided to the previous layer''.
To prevent
the placement of such sliding comparators, we introduce the following
symmetry breaking formula:
\[
\sigma_2 = \hspace{-5mm}
           \bigwedge_{\scriptsize\begin{array}{c}
             1 < \ell \leq d\\
             1 \leq i<j \leq n
           \end{array}}\hspace{-3mm}
           c^\ell_{i,j} \rightarrow u^{\ell-1}_{i} \vee u^{\ell-1}_{j}
\]
\item[All adjacent comparators.]
Exercise 5.3.4.35 in~\cite{Knuth73} states that all comparators of the
form $(i,i+1)$ must be present in a sorting network. To this end, we
add the following (redundant) formula:
\[
\sigma_3 = \hspace{-5mm}
           \bigwedge_{\scriptsize\begin{array}{c}
             1 \leq i < n
           \end{array}}\hspace{-3mm}
           \left( c^1_{i,i+1} \vee c^2_{i,i+1} \vee\cdots\vee c^d_{i,i+1}
           \right)
\]
\item[Only unsorted inputs.]
Let $\BB^n_{un}$ denote the subset of $\BB^n$ consisting of unsorted
sequences. Then it is possible to refine the formula in
Lemma~\ref{lem:sat} by replacing the set of all Boolean inputs $\BB^n$
with the set of unsorted inputs~$\BB^n_{un}$. This is the case as
sorted sequences are unchanged regardless of the positioning of the
comparators.  Observe that $|\BB^n_{un}| = 2^n-n-1$, and as noted by
Chung and Ravikumar~\cite{Chung1990}, this is the size of the smallest
test set possible needed to determine whether a comparator network is
a sorting network.
\item[Optimized CNF generation.]
Our encodings are generated using the \bee\ finite-domain constraint
compiler. \bee\ is described in several recent
papers~\cite{bee2012,Metodi2011,jair2013}. \bee\
facilitates solving finite-domain constraints by encoding them to CNF
and applying an underlying SAT solver. In \bee\ constraints are
modeled as Boolean functions which propagate information about
equalities between Boolean literals.  This information is then applied
to simplify the CNF encoding of the constraints. \bee\ is written in
Prolog, and applies (in our configuration) the underlying SAT solver
CryptoMiniSAT~\cite{Crypto}.
\end{description}

\subsubsection*{Experiment I}

In our first experiment, we used the encoding of
Equation~\eqref{satEncoding1}. Here, the formula $\Psi(n,d,X)$,
together with the above described optimizations, is instantiated for
various values of $n, d$ and $X$.
Table~\ref{fig:satPlain} presents the results, where each
instance (2 per line in the table) is run on a single thread of the
cluster. 
\begin{table}[b]
\centering\scriptsize\begin{tabular}{|l||r|r|r|r|r||r|r|r|r|r||}
\cline{2-11} \multicolumn{1}{c||}{}
    &\multicolumn{5}{c||}{optimal sorting networks (sat)}      
    &\multicolumn{5}{c||}{smaller networks (unsat)}\\
\hline
$n$ &$d$&
\multicolumn1{c|}{\bee} &
\multicolumn1{c|}{\#clauses} &
\multicolumn1{c|}{\#vars} &
\multicolumn1{c||}{SAT} &
$d'$&
\multicolumn1{c|}{\bee} &
\multicolumn1{c|}{\#clauses} &
\multicolumn1{c|}{\#vars} &
\multicolumn1{c||}{SAT} \\
\hline
 5 & 5 &    0.09 &    4965 &   761 &    0.01 & 4 &   0.08 &     3702 &    550 &    0.01 \\
 6 & 5 &    0.31 &   15353 &  1911 &    0.04 & 4 &   0.23 &    11417 &   1374 &    0.03 \\
 7 & 6 &    1.14 &   55758 &  5946 &    0.14 & 5 &   0.83 &    44330 &   4634 &    0.97 \\
 8 & 6 &    3.55 &  153125 & 14058 &    1.35 & 5 &   2.47 &   121639 &  10946 &    1.83 \\
 9 & 7 &   10.06 &  487489 & 39761 &    9.51 & 6 &   8.56 &   404176 &  32544 &  629.04 \\
10 & 7 &   25.17 & 1247335 & 90589 &   93.40 & 6 &  22.02 &  1033821 &  74136 &  925.30 \\
11 & 8 &   85.42 & 3643870 &240258 &  518.61 & 7 &  64.59 &  3110693 & 203313 &  \multicolumn{1}{c||}{$\infty$}\\
12 & 8 &  234.39 & 8899673 &533226 &12343.21 & 7 & 185.27 &  7596239 & 451212 &  \multicolumn{1}{c||}{$\infty$} \\
\hline
\end{tabular}
  \caption{SAT-solving for $n$-channel, depth-$d$ sorting networks:
  each instance runs on a single thread,  \bee\  
  compile times and SAT-solving times are in seconds (timeout
  is $1$~week of computation).}
  \label{fig:satPlain}
\end{table}
For each $n$ (number of channels) with $5\leq n\leq 12$, there is a row in
the table that indicates: the depth of the network we seek ($d$ and
$d'$), the encoding time (using \bee), the size of
the resulting CNF (number of clauses and variables), and the SAT-solving
time. Times are indicated in seconds.
The left side of the table details satisfiable instances where we seek
a sorting network of optimal depth $d$. For $1\leq n\leq 10$, we take
for $d$ the known optimal depth, and for $n>10$ we take the best known
upper bound (see Table~\ref{tab:previousoptimal}).
The right side of the table details the (suspected to be)
unsatisfiable instances, where $d'$ is one less than the value~$d$.

Observe that, when $n=12$, in the search for a sorting network of
depth~$8$, the encoding creates a CNF with circa $8.9$~million clauses, and
a solution is found after about $3.5$~hours. On the other hand, for
$n=11$ the SAT solver is not able to prove unsatisfiability of
$\Psi(11,7,\BB^{11}_{un})$ even after one week of computation, and similarly for
$\Psi(12,7,\BB^{12}_{un})$. So this encoding suffices to prove depth optimality
of networks on up to $10$ channels. We now show how to extend it to handle more
channels.

\subsection{Optimal-depth sorting networks given a prefix}

Recall that, to show the existence of a
sorting network of a given size, it suffices to restrict attention to
networks with a fixed first layer
(Lemma~\ref{lem:parberry}). Furthermore (Corollary~\ref{lem:complete}), it
suffices to focus on second layers from  the set $R_n$.

We now show how to capitalize on these results. In general, let $n$ be
the number of channels and $d$ be the depth of a particular comparator network.
Then, given a
prefix $C$ consisting of layers $C=L_1;L_2;\cdots;L_{d'}$ we can
encode the property that the prefix of the network is $C$ by the
following formula:
\begin{equation}
\varphi^{\emph{fixed}}(n,d,C)
     = 
           \bigwedge_{\scriptsize\begin{array}{c}
             1 \leq \ell < d'\\
             1 \leq i<j \leq n
           \end{array}}
           c^\ell_{i,j}\leftrightarrow (i,j)\in L_\ell
\end{equation}
where the conjuncts fix the values of the Boolean variables
$c^\ell_{i,j}$ in $\networkVars^d_n$ to correspond to the positions
of the comparators in the given prefix $C$.

Given a set of Boolean inputs $X \subseteq \BB^n$, there is an
$n$-channel, depth-$d$ network with prefix $C$ sorting all inputs from
$X$ if and only if the following formula is satisfiable.
\begin{equation}
\label{satEncoding2}
  \Psi_C(n,d,X)
     = 
     \Psi(n,d,X) \wedge \varphi^{\emph{fixed}}(n,d,C)
\end{equation}
Note that Boolean sequences sorted after the application of $C$ remain
sorted regardless of the positioning of the comparators in the
subsequent layers. Thus, to show that there is a sorting network on
$n$ channels with depth~$d$ that begins with prefix $C$, it suffices to
restrict the set $X$ to Boolean sequences that are unsorted after application of
the prefix $C$.
Letting $\BB^n_{\mathit{un(C)}} \subseteq \BB^n$ denote the set of such sequences,
the following result holds.

\begin{lemma}
\label{lem:sat2}
There exists a sorting network on $n$ channels with depth $d$ and prefix $C$
if and only if the formula
$\Psi_C\left(n,d,\BB^n_{\mathit{un(C)}}\right)$ is satisfiable.
\end{lemma}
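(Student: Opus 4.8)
The plan is to obtain Lemma~\ref{lem:sat2} from the correspondence recalled just before Equation~\eqref{satEncoding2} --- namely, that $\Psi_C(n,d,X)$ is satisfiable if and only if there is an $n$-channel, depth-$d$ network with prefix $C$ sorting every input in $X$ --- together with one elementary observation: applying any layer, and hence any comparator network, to an ascending Boolean sequence leaves it unchanged. This holds because a comparator $(i,j)$ with $i<j$ applied to a sorted sequence sees $x_i\le x_j$, so $\min(x_i,x_j)$ stays on channel $i$ and $\max(x_i,x_j)$ stays on channel $j$; iterating over the comparators of a layer and then over the layers of a network gives the claim. I would isolate this as a one-line sub-observation (or simply inline it) before the main argument.

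For the ``only if'' direction, suppose $N$ is a sorting network on $n$ channels of depth $d$ whose prefix equals $C$, say $N=C;D$. Then $N$ sorts every Boolean input, in particular every input of $\BB^n_{\mathit{un(C)}}\subseteq\BB^n$, so by the correspondence above $\Psi_C\bigl(n,d,\BB^n_{\mathit{un(C)}}\bigr)$ is satisfiable. (Concretely, one sets the $c^\ell_{i,j}$ according to the comparators of $N$, the $u^\ell_k$ according to which channels are used, and each block $\bar x^i$ by propagating the relevant input through the first $i$ layers of $N$; $\varphi^{\mathit{fixed}}$ holds because the prefix of $N$ is $C$.) For the ``if'' direction, assume $\Psi_C\bigl(n,d,\BB^n_{\mathit{un(C)}}\bigr)$ is satisfiable; by the same correspondence there is an $n$-channel network $N=C;D$ of depth $d$ with prefix $C$ that sorts every input in $\BB^n_{\mathit{un(C)}}$. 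I claim $N$ sorts all of $\BB^n$. Let $\vec b\in\BB^n$. If $C(\vec b)$ is unsorted, then $\vec b\in\BB^n_{\mathit{un(C)}}$ and $N(\vec b)$ is sorted by assumption. If $C(\vec b)$ is already sorted, then by the sub-observation the remaining layers leave it fixed, so $N(\vec b)=D\bigl(C(\vec b)\bigr)=C(\vec b)$ is sorted. Hence $N$ sorts every Boolean input, so by the zero-one principle (Lemma~\ref{lemma:zeroone}) it is a sorting network, of depth $d$ and with prefix $C$, as required.

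I do not expect a real obstacle here: once the ``layers fix sorted sequences'' fact is isolated and the $\Psi_C$-versus-network correspondence is quoted, the statement is essentially bookkeeping. The two points needing a little care are (i) keeping the notion ``prefix $C$'' consistent with the definition of $\varphi^{\mathit{fixed}}$, and (ii) observing that $\BB^n_{\mathit{un(C)}}$ is precisely the complement in $\BB^n$ of the set of inputs that become sorted after $C$, which is exactly what makes the case split in the ``if'' direction exhaustive.
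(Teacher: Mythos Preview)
Your proposal is correct and follows the same approach as the paper, which in fact does not give a formal proof of this lemma but merely states, in the paragraph preceding it, that ``Boolean sequences sorted after the application of $C$ remain sorted regardless of the positioning of the comparators in the subsequent layers'' and that this justifies restricting $X$ to $\BB^n_{\mathit{un(C)}}$. Your write-up is simply a careful unpacking of that sentence together with the correspondence recorded at Equation~\eqref{satEncoding2}; there is no divergence in method.
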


\subsubsection*{Experiment II}

According to Lemma~\ref{lem:parberry} we can fix the first level of
the network, and thus apply the encoding
$\Psi_C\left(n,d,\BB^n_{\mathit{un(C)}}\right)$ where $C$ consists of
a single maximal layer on $n$ channels. We take $C$ to be $F'_n =
\sset{(i,n-i+1)}{1\leq i\leq \left\lfloor\frac{n}{2}\right\rfloor}$.
Table~\ref{fig:satFix1} illustrates the results for the appropriate
instances of the formula 
$\Psi_C\left(n,d,\BB^n_{\mathit{un(F'_n)}}\right)$. Each instance (2 per
line in the table) is run on a single thread of the cluster. As in
Table~\ref{fig:satPlain}, the satisfiable instances are described on
the left and the unsatisfiable instances on the right.

\begin{table}[htb]
\centering\scriptsize\begin{tabular}{|l||r|r|r|r|r||r|r|r|r|r||}
\cline{2-11} \multicolumn{1}{c||}{}
    &\multicolumn{5}{c||}{optimal sorting networks (sat)}      
    &\multicolumn{5}{c||}{smaller networks (unsat)}\\
\hline
$n$ &$d$&
\multicolumn1{c|}{\bee} &
\multicolumn1{c|}{\#clauses} &
\multicolumn1{c|}{\#vars} &
\multicolumn1{c||}{SAT} &
$d'$&
\multicolumn1{c|}{\bee} &
\multicolumn1{c|}{\#clauses} &
\multicolumn1{c|}{\#vars} &
\multicolumn1{c||}{SAT} \\
\hline
5  & 5 &   0.04 &     1366 &    238 &     0.00  & 4  & 0.01&     915&   151&   0.00\\
6  & 5 &   0.09 &     3137 &    441 &     0.01  & 4  & 0.04&    2083&   276&   0.01\\
7  & 6 &   0.48 &    12699 &   1509 &     0.06  & 5  & 0.15&    9487&  1089&   0.03\\
8  & 6 &   0.97 &    26414 &   2682 &     0.17  & 5  & 0.35&   19680&  1930&   0.06\\
9  & 7 &   3.34 &    90846 &   8150 &     1.25  & 6  & 1.41&   72337&  6353&   1.00\\
10 & 7 &   6.23 &   177067 &  14091 &    17.15  & 6  & 2.81&  140847& 10978&   1.83\\
11 & 8 &  18.11 &   547708 &  39386 &   104.16  & 7  & 9.89&  454563& 32245& 282.04\\
12 & 8 &  38.84 &  1018902 &  66206 &   211.51  & 7  & 17.02&  845232& 54192& 521.62\\
13 & 9 & 112.81 &  2927622 & 174766 &  1669.88  & 8  & 49.26& 2500930&147902& \multicolumn{1}{c||}{$\infty$} \\
14 & 9 & 110.95 &  5264817 & 288609 & 56654.37  & 8  & 90.63& 4496413&244234& \multicolumn{1}{c||}{$\infty$} \\
\hline
\end{tabular}
  \caption{SAT-solving for $n$-channel, depth-$d$ sorting networks with
    $1$-layer filters: each instance runs on a single thread, \bee\ 
  compile times and SAT-solving times are in seconds (timeout
  is $1$~week of computation). }
  \label{fig:satFix1}
\end{table}

Note that the CNFs for this experiment are smaller than in the
previous experiment as fixing the first layer of the network reduces
the set of unsorted inputs considered to
$\BB^n_{\mathit{un(F'_n)}}$. 
This derives from the fact that,
as discussed before, sorted inputs (here, to
the second layer) can be ignored in the encoding.
For example, for $n=12$ we reduce the CNF sizes from $8.9$ and
$7.6$~million (see Table~\ref{fig:satPlain}) to $1.02$ and
$0.8$~million clauses, respectively. 
  
Observe that the encoding based on one layer filters suffices to prove
depth optimalily of networks for $n=11$ and $n=12$ channels in under $10$~minutes
with computational power equivalent to that of a standard desktop
computer.

\subsubsection*{Experiment III}

In our third experiment, we capitalize on Corollary~\ref{lem:complete},
which states that $R_n$ is a complete set of two layer filters on $n$
channels, i.e.~that there exists an $n$ channel sorting network of depth $d$
if and only if there exists one that extends one of the prefixes $C\in R_n$.

As an example, for $n=13$, $|R_{13}|=117$, and so, to determine whether
there exists a $13$-channel, depth-$8$ sorting network, it suffices to
determine whether any one of $117$ independent SAT instances
$\Psi_C\left(n,d,\BB^n_{\mathit{un(C)}}\right)$,
for $C\in R_{13}$,
is satisfiable.

\begin{table}[b]
\centering\scriptsize\begin{tabular}{|r|r|r||r|r|r|r|r||r|r||}
\cline{4-10} \multicolumn{3}{c||}{}
    &\multicolumn{5}{c||}{optimal depth: fastest satisfiable instance}      
    &\multicolumn{2}{c||}{total solving times}  \\
\hline
$n$ & $|R_n|$ & $d$& ins{.} &
\multicolumn1{c|}{\bee} &
\multicolumn1{c|}{\#clauses} &
\multicolumn1{c|}{\#vars} &
\multicolumn1{c||}{SAT} &
\multicolumn1{c|}{\bee} &
\multicolumn1{c||}{SAT} \\
\hline
5 & 4 & 5 & 3 & 0.01 & 534 & 97 & 0.00 & 0.03 & 0.01\\
6 & 5 & 5 & 5 & 0.01 & 1047 & 156 & 0.00 & 0.04 & 0.01\\
7 & 8 & 6 & 3 & 0.06 & 4537 & 569 & 0.01 & 0.49 & 0.08\\
8 & 12 & 6 & 10 & 0.07 & 6952 & 740 & 0.02 & 1.01 & 0.18\\
9 & 22 & 7 & 5 & 0.38 & 26019 & 2447 & 0.06 & 10.52 & 4.66\\
10 & 21 & 7 & 19 & 0.61 & 50573 & 4216 & 0.36 & 2.28 & 2.02\\
11 & 48 & 8 & 10 & 2.67 & 171357 & 13129 & 0.60 & 126.99 & 753.71\\
12 & 50 & 8 & 43 & 2.77 & 206776 & 14088 & 4.08 & 57.07 & 481.13\\
13 & 117 & 9 & 112 & 13.28 & 922363 & 56679 & 10.71 & 1711.99 & 38185.55\\
14 & 94 & 9 & 86 & 37.80 & 1124987 & 64318 & 123.13 & 6206.11 & $\infty$(43)\\
15 & 262 & 9 & 169 & 84.90 & 2684977 & 139181 & 19737.38 & 33176.39 & $\infty$(1)\\
16 & 211 & 9 & 188 & 116.36 & 3179978 & 155456 & 30509.58 & 46968.71 & $\infty$(1)\\
\hline
\end{tabular}
  \caption{SAT-solving for $n$-channel, depth-$d$ sorting networks with
    $2$-layer filters. These are the satisfiable instances (at least for
    one $C\in R_n$): fastest satisfiable
    instances detailed on the left; and total costs on the right: 
    \bee\ compile times and SAT-solving times are in
    seconds. Here, $\infty(k)$ indicates that $k$ 
    instances terminated within $24$~hours (each on a single core).}
  \label{fig:satFix2a}
\end{table}
In Table~\ref{fig:satFix2a} we illustrate results for the instances
with optimal depth $d$ for $1\leq n\leq 10$, and the best known upper
bound $d$ for $11\leq n\leq 16$. We consider the two-layer filters in
the sets $R_n$ as described in Section~\ref{sec:prefix}. For the row
corresponding to $n$ channels, we have $|R_n|$ instances, and each
instance is run on a single thread from the cluster.
The left part of the table describes the fastest satisfiable instance
from the $|R_n|$ instances. The instance number given in the third
column indicates a particular filter $C\in R_n$ (the instances are
detailed at
\url{http://www.cs.bgu.ac.il/~mcodish/Papers/Appendices/SortingNetworks/twoLayerFilters.pl}).
When running the instances with the full capacity of the cluster, we
can abort the computation as soon as the first satisfiable instance is
found. 
The right part of the table specifies the total cost of the
computation, and indicates the time required for each value of $n$ on a
single core. Here, we indicate the total compile times and SAT-solving
times for all $|R_n|$ instances. 
Each instance was limited to run for
$24$~hours on a single core, and $\infty(k)$ indicates that $k$ instances
terminated within $24$~hours (each on a single core).

For the unsatisfiable instances, we introduce one additional
optimization.  Consider again Equation~(\ref{satEncoding2}). A sorting network with prefix $C$ must sort all of its unsorted inputs $\BB^n_{\mathit{un(C)}}$. However, if we consider any specific subset
of $B\subseteq\BB^n_{\mathit{un(C)}} $ and show that there is no comparator network that
sorts the elements of $B$, then there is also no comparator
network that sorts all the unsorted inputs.

In particular, we consider length-$n$ Boolean sequences that have
sufficiently long prefixes of zeroes and suffixes of ones. Given an
integer $w<n$ and a set $B\subseteq\BB^n$, we denote the set $B{\upharpoonright}
w=\sset{b\in B}{b=0^{\ell_1}.\BB^{n-w}.1^{\ell_2}, \ell_1+\ell_2 = w}$, which
we refer to as the \emph{windows} of size $w$ of $B$.
\begin{table}
\centering\scriptsize\begin{tabular}{|r| r |r||r|r|r|r|r|r||r|r||}
\cline{4-11} \multicolumn{3}{c||}{}
    &\multicolumn{6}{c||}{$d'<$ optimal depth: slowest (unsat) instance}
    &\multicolumn{2}{c||}{total solving times}  \\
\hline
$n$& $|R_n|$ &$d'$& ins{.} &$w$ &
\multicolumn1{c|}{\bee} &
\multicolumn1{c|}{\#clauses} &
\multicolumn1{c|}{\#vars} &
\multicolumn1{c||}{SAT} &
\multicolumn1{c|}{\bee} &
\multicolumn1{c||}{SAT} \\
\hline
5 & 4 & 4 & 1 & 2 & 0.00 & 268 & 44 & 0.00 & 0.02 & 0.00\\
6 & 5 & 4 & 1 & 2 & 0.01 & 511 & 67 & 0.00 & 0.02 & 1.00\\
7 & 8 & 5 & 6 & 2 & 0.04 & 2965 & 348 & 0.01 & 0.30 & 1.04\\
8 & 12 & 5 & 4 & 3 & 0.09 & 4423 & 458 & 0.01 & 0.80 & 4.09\\
9 & 22 & 6 & 16 & 3 & 0.20 & 14716 & 1416 & 0.05 & 4.16 & 0.75\\
10 & 21 & 6 & 1 & 4 & 0.41 & 20027 & 1815 & 0.09 & 6.35 & 5.08\\
11 & 48 & 7 & 46 & 4 & 0.73 & 52365 & 4314 & 1.41 & 53.79 & 52.26\\
12 & 50 & 7 & 21 & 5 & 1.02 & 62051 & 4826 & 2.39 & 93.79 & 72.56\\
13 & 117 & 8 & 77 & 3 & 14.68 & 464035 & 29958 & 749.27 & 1047.93 & 35726.26\\
14 & 94 & 8 & 16 & 4 & 20.27 & 448903 & 27473 & 2627.60 & 2342.96 & 81533.49\\
15 & 262 & 8 & 189 & 7 & 8.32 & 278312 & 18217 & 746.42 & 3491.06 & 127062.36\\
16 & 211 & 8 & 112 & 7 & 22.85 & 453810 & 27007 & 1756.29 & 4448.66 & 152434.55\\
\hline
\end{tabular}
  \caption{SAT-solving for $n$-channel, depth-$d'$ sorting networks with
    $2$-layer filters. These are the unsatisfiable instances (for all
    $C\in R_n$): slowest
    instances detailed on the left; and total costs on the right:
    \bee\ compile times and SAT-solving times are in
    seconds; and $w$ is window size.}
  \label{fig:satFix2b}
\end{table}

Table~\ref{fig:satFix2b} depicts results for the instances with the set of inputs equal to $\BB^n_{\mathit{un(C)}}{\upharpoonright} w$ and depth
$d'=d-1$ where $d$ is the known optimal depth for $1\leq n\leq 10$,
and the best known upper bound $d$ for $11\leq n\leq 16$.  We consider
the two-layer filters in the sets $R_n$ as described in
Section~\ref{sec:prefix}.  For the row corresponding to $n$ channels,
we again have $|R_n|$ instances, and each instance is run on a single thread
from the cluster.  The left part of the table describes the slowest
unsatisfiable instance from the $|R_n|$ instances, including the
largest window size $w$ for which unsatisfiability is obtained. The
table also specifies, in the fourth column, the index of the slowest
instance.
For the unsatisfiable instances, we need to run all instances to
determine that all are unsatisfiable, and the parallel cost is the time
of the slowest instance (in the left part of the table).
The right part of the table specifies the total cost of the
computation, indicating the time required for each value of $n$ on a
single core. Here we indicate the total compile times and SAT-solving
times for all $|R_n|$ instances.

While we used multiple threads on a cluster for our experiments, the two instances relevant for our results, $n = 11$ and $n = 13$, could be run on a single thread on a desktop computer in $2$ minutes and $10$ hours, respectively.
Once we show that $T(11)=8$, it follows from the known bounds
(Table~\ref{tab:previousoptimal}) that also $T(12)=8$, because $T$ is monotonic.
Likewise, once we show that $T(13)=9$, it follows that also $T(14)=T(15)=T(16)=9$.
The motivation for also computing them was to show that this approach actually scales up to $n=16$.

\section{Conclusions and future work}
\label{sec:concl}

We have shown that $T(11) = T(12) = 8$ and $T(13) = T(14) = T(15) = T(16) =
9$, i.e., we have proven that the previously known upper bounds on the
optimal depth of $n$-channel sorting networks are tight for $11 \leq n
\leq 16$.  This closes the six smallest open instances of the optimal-depth
sorting network problem, thereby proving depth optimality of the sorting
networks for $n \leq 16$ given in \cite{Knuth73} more than four
decades ago.

The next smallest open instance of the optimal-depth sorting network
problem is for $n=17$ where the best known upper bound is $11$.
Attempting to show that there is no sorting network of depth $10$
requires analyzing the SAT encodings given the networks in
$R_{17}$. The resulting $609$ formulas have more than five million
clauses each, and none could be solved within a couple of weeks.  It
appears that establishing the optimal depth of sorting networks with more
than $16$ channels is a hard challenge that will require prefixes with
more than $2$ layers.

The encoding into SAT that we propose in this paper is of size exponential in the number of channels,~$n$. This is also the case for the encoding
presented in~\cite{MorgensternS11}. The encoding is of the form $\exists\forall\varphi$ (does there \emph{exist} a network
that sorts \emph{all} of its inputs), and is easily shown to be in
$\Sigma_2^P$.  We expect that, similar to the problem of circuit
minimization~\cite{DBLP:conf/focs/Umans98}, it is also complete in
$\Sigma_2^P$, although we have not succeeded to prove this. We do not
expect that there exists a polynomial size encoding to SAT.

\section*{Acknowledgements}

We thank Donald E.~Knuth for suggesting the idea of using reflections to reduce
the number of representative two-layer networks.

%\bibliographystyle{plain}
%\bibliography{paper}

\end{document}